\title{Positive Data Languages}
\author{Florian Frank}{Friedrich-Alexander-Universit\"at Erlangen-N\"urnberg, Germany}%
{florian.ff.frank@fau.de}%
{https://orcid.org/0000-0002-9458-3408}%
{Supported by Deutsche Forschungsgemeinschaft (DFG, German Research Foundation)
as part of the Research and Training Group 2475 ``Cybercrime and Forensic Computing'' (grant
number 393541319/GRK2475/1-2019).}
\author{Stefan Milius}%
{Friedrich-Alexander-Universit\"at Erlangen-N\"urnberg, Germany}%
{mail@stefan-milius.eu}%
{https://orcid.org/0000-0002-2021-1644}%
{Supported by the Deutsche Forschungsgemeinschaft (DFG, German Research Foundation) -- project number 419850228.}
\author{Henning Urbat}{Friedrich-Alexander-Universit\"at Erlangen-N\"urnberg, Germany}%
{henning.urbat@fau.de}%
{https://orcid.org/0000-0002-3265-7168}%
{Supported by the Deutsche Forschungsgemeinschaft (DFG, German Research Foundation) -- project number 470467389.}
\authorrunning{F.~Frank, S.~Milius, and H.~Urbat} 
\keywords{Data Languages, Register Automata, MSO, Nominal Sets, Presheaves} 
\declaretheorem[name=Definition,style=definition,numberwithin=section, sibling=definition]{defn}
\declaretheorem[name=Remark,style=definition,sibling=defn]{rem}
\declaretheorem[name=Assumptions,style=definition,sibling=defn]{assumptions}
\declaretheorem[name=Notation,style=definition,sibling=defn]{notation}
\declaretheorem[name=Construction,style=definition,sibling=defn]{construction}
\newcommand{\resetCurThmBraces}{%
\gdef\curThmBraceOpen{(}%
\gdef\curThmBraceClose{)}}
\newcommand{\removeThmBraces}{%
\gdef\curThmBraceOpen{}%
\gdef\curThmBraceClose{}}
\patchcmd{\thmhead}{(#3)}{\curThmBraceOpen #3\curThmBraceClose }{}{}
\newcommand{\defaultshowkeysformat}[1]{%
\StrSubstitute{#1}{ }{\textvisiblespace}[\TEMP]%
\parbox[t]{\marginparwidth}{\raggedright\normalfont\small\ttfamily\(\{\){\color{red!50!black}\expandafter\seqsplit\expandafter{\TEMP}}\(\}\)}%
}
\renewcommand*\showkeyslabelformat[1]{%
\noexpandarg%
\defaultshowkeysformat{#1}%
}
\newcommand{\mypar}[1]{%
  \vspace*{-0.75\baselineskip}
  \subparagraph*{#1}%
  }
\newcommand{\seq}{\subseteq}
\newcommand{\xra}[1]{\xrightarrow{~#1~}}
\newcommand{\xto}[1]{\xra{#1}}
\newcommand{\mar}{\ar[rightarrowtail]}
\newcommand{\ear}{\ar[two heads]}
\newcommand{\At}{\mathds{A}}
\newcommand{\Sigmas}{\Sigma^{\star}}
\newcommand{\Ats}{\At^{\!\raisebox{1pt}{\scriptsize$\star$}}}
\newcommand{\ol}{\overline}
\def\ul{\underline}
\newcommand{\op}[1]{\operatorname{\mathsf{#1}}}
\newcommand{\id}{\op{id}}
\newcommand{\inl}{\op{inl}}
\newcommand{\inr}{\op{inr}}
\newcommand{\abs}{\mathsf{abs}}
\newcommand{\states}{\mathsf{s}}
\newcommand{\al}{\mathsf{a}}
\newcommand{\trans}{\mathsf{t}}
\newcommand{\init}{\mathsf{i}}
\newcommand{\final}{\mathsf{f}}
\newcommand{\canmap}{\mathsf{can}}
\newcommand{\cat}[1]{\mathscr{#1}}
\def\C{\cat C}
\def\D{\cat D}
\newcommand{\II}{\mathbb{I}}
\newcommand{\FF}{\mathbb{F}}
\newcommand{\Set}{\mathbf{Set}}
\newcommand{\Pow}{\mathcal{P}}
\newcommand{\NAut}{\mathbf{NAut}}
\newcommand{\NAutfp}{\mathbf{NAut}_{\mathsf{fp}}}
\renewcommand{\epsilon}{\varepsilon}
\newcommand{\colim}{\mathop{\mathsf{colim}}}
\def\Eq{\mathbb{E}}
\newcommand{\reg}{{\mathrm{reg}}}
\newcommand{\Nom}{\mathbf{Nom}}
\newcommand{\RnNom}{\mathbf{RnNom}}
\DeclareMathOperator{\supp}{\mathsf{supp}}
\newcommand{\Perm}{\mathsf{Perm}}
\newcommand{\names}{\At}
\newcommand{\pow}{\mathcal{P}}
\newcommand{\fresh}{\mathbin{\#}}
\newcommand{\MSOep}{\text{MSO}^{\sim,+}}
\newcommand{\MSOe}{\text{MSO}^{\sim}}
\renewcommand{\Eq}{\mathsf{Eq}}
\renewcommand{\phi}{\varphi}
\newcommand{\hookto}{\hookrightarrow}
\newcommand{\epito}{\twoheadrightarrow}
\newcommand{\monoto}{\rightarrowtail}
\newcommand{\f}{\mathsf{f}}
\newcommand{\ini}{\iota}
\newcommand{\set}[1]{\{#1\}}
\newcommand{\setw}[2]{\{#1\,:\,#2\}}
\DeclareMathOperator{\im}{\textsf{im}}
\DeclareMathOperator{\coim}{\textsf{coim}}
\newcommand{\takeout}[1]{\empty}
\tikzset{shiftarr/.style={
        rounded corners,%
        to path={--([#1]\tikztostart.center)
                     -- ([#1]\tikztotarget.center) \tikztonodes
                     -- (\tikztotarget)},
}}
\tikzset{shiftarrr/.style={
        rounded corners,%
        to path={-- ([#1]\tikztostart.center)
                    |- (\tikztotarget)  \tikztonodes},
}}
\tikzset{roundcornerarr/.style={
        rounded corners,%
        to path={--([#1]\tikztostart.south)
                     |- (\tikztotarget) \tikztonodes},
}}
\tikzset{roundcornerarrr/.style={
        rounded corners,%
        to path={ -| (\tikztotarget) \tikztonodes},
}}
\tikzset{roundcornerarrrr/.style={
        rounded corners,%
        to path={ |- (\tikztotarget) \tikztonodes},
}}
\newcommand{\pullbackangle}[2][]{\arrow[phantom,to path={
                     -- ($ (\tikztostart)!1cm!#2:([xshift=8cm]\tikztostart) $)
                        node[anchor=west,pos=0.0,rotate=#2,
                        inner xsep = 0]
                        {\begin{tikzpicture}[minimum
                        height=1mm,baseline=0,#1]
    \draw[-] (0,0) -- (.5em,.5em) -- (0,1em);
                        \end{tikzpicture}}}]{}}
\newcommand{\overbar}[1]{\mkern 1.5mu\overline{\mkern-1.5mu#1\mkern-1.5mu}\mkern 1.5mu}
\newcommand{\mybar}[3]{%
  \mathrlap{\hspace{#2}\overline{\scalebox{#1}[1]{\phantom{\ensuremath{#3}}}}}\ensuremath{#3}
}
\newcommand{\barA}{{\mybar{0.6}{2.5pt}{A}}} 
\newcommand{\barF}{\mybar{0.6}{2.5pt}{F}}
\newcommand{\barG}{\mybar{0.6}{2pt}{G}}
\newcommand{\barI}{\mybar{0.6}{2pt}{I}}
\newcommand{\barJ}{\mybar{0.6}{2pt}{J}}
\newcommand{\barE}{\mybar{0.6}{2.5pt}{E}}
\newcommand{\bark}{\mybar{0.7}{1.5pt}{k}}
\newcommand{\barL}{\mybar{0.8}{1.5pt}{L}}
\newcommand{\barR}{\mybar{0.8}{2pt}{R}}
\newcommand{\barQ}{\mybar{0.7}{2pt}{Q}}
\newcommand{\barDelta}{\mybar{0.6}{2pt}{\delta}}
\newcommand{\barGF}{\mybar{0.85}{2pt}{GF}}
\newcommand{\ngt}{{\mathsf{ngt}}}
\newcommand{\rg}{{\mathsf{rg}}}
\renewcommand{\ng}{{\mathsf{ng}}}
\newcommand{\fsuba}{{\mathsf{fsuba}}}
\newcommand{\Lan}{\textsf{Lan}}
\newcommand{\Run}{\textsf{AccRun}}
\newcommand{\Fin}{\mathsf{Fin}}
\newcommand{\Nat}{\mathds{N}}
\newcommand{\wordlang}[1]{\ensuremath{\mathsf{W}(#1)}}
\newcommand{\prelang}[1]{\ensuremath{\mathsf{P}(#1)}}
\NewDocumentCommand{\makecycle}{om}{
	\ensuremath{ \left(~\guest_print_list:nn { #2 } {~{\ }~}~\right)\IfValueT{#1}{^{#1}}}
}
\NewDocumentCommand{\maketuple}{om}{
	\ensuremath{ \left(~\guest_print_list:nn { #2 } {~{,\ }~}~\right)\IfValueT{#1}{^{#1}}}
}
\NewDocumentCommand{\makemonad}{om}{
	\ensuremath{ \left\langle~\guest_print_list:nn { #2 } {~{,\ }~}~\right\rangle\IfValueT{#1}{^{#1}}}
}
\newcommand{\functor}[7]{%
    {\setlength{\arraycolsep}{2pt}
    \begin{blockarray}{r@{\ \phantom{\{}}ccl}
        #1\colon & #2 & \longrightarrow & #3\\
		\begin{block}{r@{\ }\{ccl}
            & #4 & \longmapsto & \displaystyle#5 \\
            & #6 & \longmapsto & \displaystyle#7 \\
		\end{block}
    \end{blockarray}
    }
}
\numberwithin{equation}{section}
\begin{document}

\FXRegisterAuthor{sm}{asm}{SM}
\FXRegisterAuthor{hu}{ahu}{HU}
\FXRegisterAuthor{ff}{aff}{FF}

\maketitle

\begin{abstract}
Positive data languages are languages over an infinite alphabet closed under possibly non-injective renamings of data values.
Informally, they model properties of data words expressible by assertions about equality, but not inequality, of data values occurring in the word. 
We investigate the class of positive data languages recognizable by nondeterministic orbit-finite nominal automata, an abstract form of register
automata introduced by Boja\'nczyk, Klin, and Lasota. As our main contribution we provide a number of equivalent characterizations of that class
in terms of positive register automata, monadic second-order logic with positive equality tests, and finitely presentable nondeterministic automata
in the categories of nominal renaming sets and of presheaves over finite sets.
\end{abstract}

\section{Introduction}\label{sec:intro}
Automata over infinite alphabets provide a simple computational model for reasoning about structures involving \emph{data} such as nonces~\cite{KurtzEA07}, URLs~\cite{BieleckiEA02}, or
values in XML documents~\cite{nsv04}. Consider, for instance, the (infinite) set  $\At$ of admissible user IDs for a server. The sequence of all user logins within a given time period then forms a finite word $a_1\cdots a_n\in \Ats$ over the infinite alphabet $\At$, and behaviour patterns may be modelled as data languages over $\At$, e.g.
\begin{align*}
 L_0 &= \{\,a_1\cdots a_n\in \Ats\mid \text{$a_i\neq a_n$ for all $i<n$}\,\} && \text{(``last user has not logged in before'')}, \\
 L_1 &= \{\,a_1\cdots a_n\in \Ats \mid \text{$a_i=a_j$ for some $i\neq j$}\,\} &&  \text{(``some user has logged in twice'')}. 
\end{align*}
Both $L_0$ and $L_1$ involve assertions about equality, or inequality,
of data values (here, user IDs). However, asserting \emph{inequality}
is sometimes considered problematic and thus undesired. For example,
since users may have multiple IDs, a logfile $a_1\ldots a_n\in L_0$
does not actually guarantee that the last user has not logged in
before. In contrast, if $a_1\ldots a_n\in L_1$, then it is guaranteed
that some user has indeed logged in twice. The structural difference
between the two languages is that $L_1$ is closed under arbitrary
renamings $\rho\colon \At\to \At$ (i.e.~$a_1\cdots a_n\in L_1$
implies $\rho(a_1)\cdots \rho(a_n)\in L_1$), taking into account
possible identification of data values, while~$L_0$ is only closed
under {injective} (equivalently bijective) renamings. We refer to
languages with the former, stronger closure property as \emph{positive
  data languages}. Intuitively, such languages model properties of
data words expressible by positive statements about equality of data
values. It is one of the goals of our paper to turn this into a theorem.

For that purpose, we build on the abstract account of data languages and their automata based on the theory of nominal sets~\cite{gp99,Pitts2013}, initiated by the work of Boja\'nczyk, Klin, and Lasota~\cite{BojanczykEA14}. Specifically, we investigate \emph{nondeterministic orbit-finite nominal automata} (\emph{NOFA}), the nominal version of classical nondeterministic finite automata. We approach the class of \emph{positive} NOFA-recognizable data languages from several different perspectives, ranging from concrete to more abstract and conceptual, and establish the equivalent characterizations summarized in \autoref{fig:equivalences}. In more detail, our main contributions are as follows.

\begin{figure}
\begin{tikzpicture}[label distance=-1.5mm]
\node[line width=2mm,draw=lipicsLightGray,align=center,fill=lipicsLightGray,rounded
corners] (center) at (0,1.5) (a) {Positive RA}; 
\node[line width=2mm,draw=lipicsLightGray,align=center,fill=lipicsLightGray,rounded
corners] (center) at (5,0) (b) {$\MSOep$ $\cap$ NOFA};
\node[line width=2mm,draw=lipicsLightGray,align=center,fill=lipicsLightGray,rounded
corners] (center) at (0,0) (c) {FSUBA};  
\node[line width=2mm,draw=lipicsYellow,align=center,fill=lipicsYellow,rounded
corners] (center) at (5,1.5) (d) {\textbf{Positive $\boldsymbol{\cap}$ NOFA}};
\node[line width=2mm,draw=lipicsLightGray,align=center,fill=lipicsLightGray,rounded
corners] (center) at (10,1.5) (e) {NOFRA};
\node[line width=2mm,draw=lipicsLightGray,align=center,fill=lipicsLightGray,rounded
corners] (center) at (10,0) (f) {Super-finitary $\Set^\FF$-aut.};
\draw[implies-implies,double equal sign distance] (a)--(d) node [midway, label=above:{Thm.~\ref{thm:pos-nofa-vs-pos-reg}}] {};
\draw[implies-implies,double equal sign distance] (a)--(c) node [midway, label=right:{Rem.~\ref{rem:pos-reg-aut-vs-fsuba}}] {};
\draw[implies-implies,double equal sign distance] (d)--(b) node [midway, label=left:{Thm.~\ref{thm:mso-vs-nofa}}] {};
\draw[implies-implies,double equal sign distance] (d)--(e) node [midway, label=above:{Thm.~\ref{thm:nofa-vs-nofra}}] {};
\draw[implies-implies,double equal sign distance] (d.south east) ++(-0.75,0) -- ++(0,-0.35) -| (f) node [pos=0.25, label={[label distance=-1.5mm]above:{Thm.~\ref{thm:presheaf-automata-vs-nofa}}}] {};
\end{tikzpicture}
\caption{Equivalent characterizations of positive NOFA-recognizable languages}\label{fig:equivalences}
\end{figure}

\enlargethispage{\baselineskip}
\mypar{Register automata.}
NOFAs are known to be expressively equivalent to register
automata \mbox{\cite{KaminskiFrancez94,kz10}}, i.e.~finite automata
that can memorize data values using a fixed number of registers and
test the input for (in)equality with previously stored
values. Restricting transitions to positive equality tests leads to
\emph{positive register automata}, which correspond to \emph{finite-state
  unification-based automata} (\emph{FSUBA})~\cite{tal99,kt06} and are shown to capture precisely positive 
  NOFA-recognizable languages (\autoref{thm:pos-nofa-vs-pos-reg} and \autoref{rem:pos-reg-aut-vs-fsuba}).
  On the way, we isolate a remarkable property of this language class: while NOFAs generally require the
  ability to guess data values during the computation to reach their full expressive strength, guessing
  and non-guessing NOFA are equivalent for positive data languages (\autoref{thm:positive-non-guessing}). 
\mypar{Monadic second-order logic.}
As illustrated above, positive data languages model (only) positive assertions about
the equality of data values. To substantiate this intuition, we employ monadic second-order logic ($\MSOe$)
over data words~\cite{nsv04, boj-13,col-15}, an extension of classical MSO with equality tests for data values,
and consider its restriction $\MSOep$ to positive equality tests. While this logic is more expressive than NOFA,
we show that within the class of NOFA-recognizable languages it models exactly the positive languages (\autoref{thm:mso-vs-nofa}).
\mypar{Categorical perspective.}
The classical
notion of nondeterministic finite automata can be categorified by replacing
the finite set of states with a finitely presentable object of a
category $\C$. For example, NOFAs are precisely nondeterministic
$\C$-automata for $\C =$ nominal sets. Apart from the latter category,
several other toposes have been proposed as abstract foundations for
reasoning about {names} (data values), most prominently the category
of \emph{nominal renaming sets}~\cite{GH08}, the category $\Set^\II$
of presheaves over finite sets and injective maps~\cite{stark96}, and
the category $\Set^\FF$ of presheaves over finite sets and all maps
(equivalently, finitary set functors)~\cite{ftp99}. It is thus natural
to study nondeterministic automata in the latter three categories,
viz.\ \emph{nondeterministic orbit-finite renaming automata}
(\emph{NOFRA}), \emph{nondeterministic super-finitary
  $\Set^\II$-automata} and \emph{nondeterministic super-finitary
  $\Set^\FF$-automata}. Our final contribution is a classification of
their expressive power: we show that $\Set^\II$-automata are
equivalent to NOFAs, while both NOFRAs and $\Set^\FF$-automata capture
positive NOFA-recognizable languages %
\renewcommand{\theoremautorefname}{Theorems}%
(\autoref{thm:nofa-vs-nofra} and
\ref{thm:presheaf-automata-vs-nofa}).
\renewcommand{\theoremautorefname}{Theorem}%
Hence, both nominal and presheaf-based automata are able to recognize
positive and all NOFA-recognizable languages, respectively.

\mypar{Acknowledgement.} The authors wish to thank Bartek Klin for pointing out the example in \autoref{rem:dofa}.

\section{Nominal Automata and Positive Data
  Languages}\label{sec:nofa-nofra}

For the remainder of the article, we fix a countably infinite set $\At$ of \emph{data
  values}, a.k.a.~\emph{names} or \emph{atoms}. The goal is to study positive data
languages, that is, languages of finite words over $\At$ closed under
arbitrary renamings. This is achieved via the framework of nominal
(renaming) sets~\cite{gp99,Pitts2013,GH08}.

\subsection{Nominal Sets and Nominal Renaming Sets}\label{sec:nominal-sets}
A \emph{renaming} is a finite map $\rho\colon \At\to\At$, that is, $\rho(a)=a$ for all
but finitely many $a\in \At$. We let $\Fin(\At)$ denote the monoid of renamings, with
multiplication given by composition, and $\Perm(\names)$ its subgroup given by \emph{finite
permutations}, i.e.~bijective renamings. For $M\in \{\, \Perm(\At), \Fin(\At)\,\}$ an
\emph{$M$-set} is a set $X$ equipped with a monoid action $M \times X \to X$,
denoted $(\rho,x)\mapsto \rho\cdot x$. A subset $S\seq \At$ is a \emph{support} of $x\in X$ if
for every $\rho,\sigma\in M$ such that $\rho|_S=\sigma|_S$ one has $\rho\cdot x = \sigma\cdot x$.
Informally, consider $X$ as a set of syntactic objects (e.g.~words, trees, $\lambda$-terms)
whose description may involve free names from $S$. A \emph{nominal $M$-set} is an $M$-set where
every element $x$ has a finite support. This implies that $x$ has a least finite support
$\supp x\seq \At$. A name $a \in \names$ is \emph{fresh} for $x$, denoted $a \fresh x$, if $a \notin \supp x$.

Nominal $\Perm(\At)$-sets are called \emph{nominal sets}, and nominal $\Fin(\At)$-sets are called
\emph{nominal renaming sets}. A nominal renaming set $X$ can be regarded as a nominal set by
restricting its $\Fin(\At)$-action to a $\Perm(\At)$-action. The least supports of an element $x\in X$
w.r.t.\ both actions coincide~\cite[Thm.~4.8]{gabbay07}, so the notation $\supp x$ is unambiguous.  

A subset $X$ of a nominal $M$-set $Y$ is \emph{$M$-equivariant} if $\rho\cdot x\in X$ for all
$x\in X$ and $\rho\in M$. More generally, a map $f\colon X \to Y$ between nominal $M$-sets is
\emph{$M$-equivariant} if $f(\rho \cdot x) = \rho \cdot f(x)$ for all $x \in X$ and $\rho \in M$. 
This implies $\supp f(x) \subseteq \supp x$ for all $x \in X$. 

We write $X\times Y$ for the cartesian product of nominal $M$-sets with componentwise action, and
$\coprod_{i\in I} X_i$ for the coproduct (disjoint union) with action inherited from the summands.

Given a nominal set $X$, the \emph{orbit} of an element $x\in X$ is the set  $\setw{\pi\cdot x}{\pi \in \Perm(\names)}$.
The orbits form a partition of $X$. A nominal set is \emph{orbit-finite}
if it has only finitely many orbits. A nominal renaming set is \emph{orbit-finite} if it is orbit-finite as a nominal set.

\begin{example}
The set $\At$ with the $\Fin(\At)$-action $\rho\cdot a = \rho(a)$ is a nominal renaming set, as is
the set $\Ats$ of finite words over $\At$ with $\rho\cdot w = \rho^\star(w)=\rho(a_1)\cdots \rho(a_n)$
for $w=a_1\cdots a_n$. The least support of $a_1\cdots a_n\in \Ats$ is the set $\{a_1,\ldots,a_n\}$.
The set $\Ats$ has infinitely many orbits; its equivariant subsets $\At^n$ (words of a fixed length $n$)
are orbit-finite. For instance, $\At^2$ has the two orbits $\{aa: a\in \At\}$ and $\{ab: a\neq b\in \At\}$.
An example of a nominal set that is not a renaming set is $\At^{\#n} = \{\,a_1\ldots a_n : a_i\neq a_j \text{ for $i\neq j$} \,\}$
with pointwise $\Perm(\At)$-action.
\end{example}

A nominal set $X$ is \emph{strong} if, for every $x\in X$ and $\pi\in \Perm(\At)$, one has $\pi\cdot x = x$
if and only if $\pi$ fixes every element of $\supp(x)$. (The `if' statement holds in every nominal set.)
For instance, the nominal sets $\At^{\#n}$, $\At^n$ and $\Ats$ are strong. Up to isomorphism,
(orbit-finite) strong nominal sets are precisely (finite) coproducts $\coprod_{i\in I} \At^{\# n_i}$ where
$n_i\in \Nat$. For every orbit-finite nominal set $X$, there exists a surjective $\Perm(\At)$-equivariant
map $e\colon Y\epito X$ for some orbit-finite strong nominal set $Y$ (see e.g.\ \cite[Cor.~B.27]{mil-urb-19}).
In fact, if $o$ is the number of orbits of $X$, one may take $Y=J\times \At^{\#n}$ where $J=\{1,\ldots,o\}$
and $n=\max_{x\in X} |{\supp x}|$. We refer the reader to \cite[Sec.~4.1]{gmm06} and \cite[Thm.~10.9]{BojanczykEA14} for more details on representing orbit-finite nominal sets.

\subsection{Nominal Automata and Nominal Renaming Automata}
The object of interest in this paper is data languages $L\seq \Ats$ closed under renamings:

\begin{defn}
\begin{enumerate}
\item A {data language} $L\seq \Ats$ is \emph{positive} if it is $\Fin(\At)$-equivariant.
\item The \emph{positive closure} of $L\seq \Ats$ is given by
$\barL \;=\; \{\, \rho^\star(w): w\in L,\, \rho\in \Fin(\At) \,\}$.
\end{enumerate}
\end{defn}

A natural automata model for data languages is given by nondeterministic orbit-finite automata~\cite{BojanczykEA14}
over nominal sets and their restriction to nominal renaming sets:
\begin{defn}\label{def:nofa} Let $M\in \{\, \Perm(\At), \Fin(\At) \,\}$.
\begin{enumerate}
\item A \emph{nondeterministic orbit-finite $M$-automaton} $A=(Q,\delta,I,F)$ consists of an orbit-finite
nominal $M$-set $Q$ of states, an $M$-equivariant transition relation $\delta\seq Q\times \At\times Q$,
and $M$-equivariant subsets $I,F\seq Q$ of initial and final states. Nominal orbit-finite $M$-automata
are called \emph{nondeterministic orbit-finite automata} (\emph{NOFA}) for $M=\Perm(\At)$ and
\emph{nondeterministic orbit-finite renaming automata} (\emph{NOFRA}) for $M=\Fin(\At)$.
\item Given a nominal orbit-finite $M$-automaton $A$, we write $q\xto{a} q'$ if $q'\in \delta(q,a)$.
A \emph{run} of $A$ on input $w=a_1\cdots a_n\in \Ats$ is a sequence $(q_0,a_1,q_1,a_2,\ldots,a_n,q_n)$
such that $q_0\in I$ and $q_r\xto{a_{r+1}} q_{r+1}$ for $0\leq r<n$. The run is \emph{accepting} if $q_n\in F$. The automaton
$A$ \emph{accepts} the word $w$ if $A$ admits an accepting run on input $w$. The \emph{accepted language} $L(A)\seq \Ats$ is
the set of all accepted words. A data language is \emph{NOF(R)A-recognizable} if some NOF(R)A accepts it.
\end{enumerate}
\end{defn}

For example, the languages $L_0$ and $L_1$ from the Introduction are NOFA-recognizable.
\begin{rem}
\begin{enumerate}
\item The restriction to the input alphabet $\At$ is for simplicity: all our results extend to alphabets
$\Sigma=\Sigma_0\times \At$ for a finite set $\Sigma_0$, i.e.~finite coproducts of copies of $\At$. 
\item Another use of nominal renaming sets in automata theory appears in the work by Moerman and Rot~\cite{mr20}
on deterministic nominal automata with outputs. The restrictions of their model make it unsuitable for language
recognition~\cite[Rem.~4.1]{mr20} but allow for a succinct representation of computed maps via \emph{separating automata}. 
\end{enumerate}
\end{rem}

To relate the expressive power of NOFA and NOFRA, we start with a simple observation:

\begin{proposition}\label{prop:nofra-acc-pos-lang}
Every NOFRA accepts a positive language.
\end{proposition}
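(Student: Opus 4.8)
The plan is to prove this directly by transporting an accepting run along an arbitrary renaming, using the $\Fin(\At)$-equivariance built into a NOFRA. Fix a NOFRA $A=(Q,\delta,I,F)$; I want to show that $L(A)$ is $\Fin(\At)$-equivariant. So I take a word $w=a_1\cdots a_n\in L(A)$ together with an arbitrary renaming $\rho\in\Fin(\At)$, and the goal is to establish $\rho^\star(w)\in L(A)$.

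First I would choose an accepting run $(q_0,a_1,q_1,a_2,\ldots,a_n,q_n)$ of $A$ on $w$, so that $q_0\in I$, $q_n\in F$, and $(q_r,a_{r+1},q_{r+1})\in\delta$ for all $0\leq r<n$. The key step is to apply the monoid action of $\rho$ to every component of the run and consider the candidate sequence $(\rho\cdot q_0,\rho(a_1),\rho\cdot q_1,\ldots,\rho(a_n),\rho\cdot q_n)$.

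Next I would verify that this candidate is again an accepting run, now on the input word $\rho^\star(w)=\rho(a_1)\cdots\rho(a_n)$. Since $I$ and $F$ are $\Fin(\At)$-equivariant subsets of $Q$, we get $\rho\cdot q_0\in I$ and $\rho\cdot q_n\in F$. Since the action on $Q\times\At\times Q$ is componentwise and the action on $\At$ is $\rho\cdot a=\rho(a)$, equivariance of $\delta$ gives $\rho\cdot(q_r,a_{r+1},q_{r+1})=(\rho\cdot q_r,\rho(a_{r+1}),\rho\cdot q_{r+1})\in\delta$, that is, $\rho\cdot q_r\xto{\rho(a_{r+1})}\rho\cdot q_{r+1}$ for each $r$. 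Hence $\rho^\star(w)\in L(A)$, and as $w$ and $\rho$ were arbitrary, $L(A)$ is positive.

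I do not expect any genuine obstacle here; the argument is a routine consequence of equivariance. The only point worth emphasizing is where positivity actually comes from: it is precisely because a NOFRA's transition relation and its sets of initial and final states are equivariant with respect to the full renaming monoid $\Fin(\At)$ — rather than merely the permutation group $\Perm(\At)$, as for a NOFA — that runs can be transported along \emph{non-injective} renamings. For a NOFA the identical computation would only yield closure under bijective renamings, which is exactly the expected discrepancy.
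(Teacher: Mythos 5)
Your proof is correct and follows exactly the same route as the paper's: apply the renaming $\rho$ to every state and letter of an accepting run and invoke $\Fin(\At)$-equivariance of $\delta$, $I$, and $F$ to see that the result is again an accepting run on $\rho^\star(w)$. Your closing remark about why the same computation for a NOFA only yields closure under bijective renamings is a nice, accurate observation, though not needed for the proof.
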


The converse (\autoref{thm:nofa-vs-nofra}) needs an automata-theoretic construction of the closure of a language.
To this end, we first turn the states of a NOFA into a sort of normal form.

\begin{rem}[cf.~\cite{BojanczykEA14}]\label{rem:strong}
Every NOFA $A=(Q,\delta,I,F)$ is equivalent to one whose nominal set of states is of the form $J\times \At^{\#m}$
for some finite set $J$ and $m\in \Nat$. Indeed, choose a nominal set $Q'=J\times \At^{\#m}$ and an equivariant
surjection $e\colon Q'\epito Q$ (see \autoref{sec:nominal-sets}), and consider the NOFA $A'=(Q',\delta',I',F')$
whose structure is given by the preimages 
\[\delta'=(e\times \id_\At\times e)^{-1}[\delta],\qquad I'=e^{-1}[I],\qquad F'=e^{-1}[F].\]
It is not difficult to verify that $L(A')=L(A)$; see also \autoref{prop:epi-pres-lang}. Note that in a NOFA with states $J\times \At^{\#m}$, the equivariant sets of initial and final states are of the form $I=J_I\times \At^{\#m}$ and $F=J_F\times \At^{\#m}$ for some $J_I,J_F\seq J$.
\end{rem}

\begin{construction}[Positive Closure of a NOFA]\label{cons:bar-A}
  Let $A=(Q,\delta,I,F)$ be a NOFA with states $Q=J\times \At^{\#m}$
  (cf.\ \autoref{rem:strong}). The NOFRA
  $\barA=(\barQ,\barDelta,\barI, \barF)$ is given by the states
  $\barQ=J\times \At^{m}$, initial states $\barI=J_I\times \At^m$,
  final states $\barF=J_F\times \At^m$, and transitions
  \[ 
    \barDelta \;=\; \{\, (j,\rho^\star p)\xto{\rho a} (j',\rho^\star p') : \text{$(j,p)\xto{a}(j',p')$ in $A$ and $\rho\in\Fin(\At)$}  \,\}.
  \]
\end{construction}

\begin{proposition}\label{prop:barA-acc-barL}
The NOFRA $\barA$ accepts the positive closure of the language of $A$.
\end{proposition}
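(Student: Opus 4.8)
The plan is to establish the following characterization of paths in $\barA$, from which $L(\barA)=\barL$ follows at once: for all states $s,s'\in\barQ$ and all words $w\in\Ats$, there is a path $s\xto{w}s'$ in $\barA$ if and only if there exist states $q,q'\in Q$, a word $u\in\Ats$, and a renaming $\rho\in\Fin(\At)$ with $q\xto{u}q'$ in $A$, $\rho\cdot q=s$, $\rho\cdot q'=s'$, and $\rho^\star u=w$. Granting this, acceptance transfers as required: the initial and final states have the form $\barI=J_I\times\At^m$, $I=J_I\times\At^{\#m}$ (and likewise for $F$ and $\barF$; cf.\ \autoref{rem:strong}), and the $\Fin(\At)$-action leaves the $J$-component unchanged, so under the correspondence above one has $s\in\barI\Leftrightarrow q\in I$ and $s'\in\barF\Leftrightarrow q'\in F$. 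Reading the equivalence from right to left then yields $\barL\seq L(\barA)$, and reading it from left to right yields $L(\barA)\seq\barL$.

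The two directions are proved together by induction on $n=|w|$. The base case $n=0$ forces $s=s'$, and one simply lifts $s=(c,t)$ to any distinct-entry tuple $q=(c,p)\in J\times\At^{\#m}$ together with a renaming collapsing $p$ to $t$. For the ``$\Leftarrow$'' direction no induction is needed: given an $A$-run $q\xto{u}q'$ and a renaming $\rho$, applying the definition of $\barDelta$ to each individual transition $(j,p)\xto{a}(j',p')$ produces the $\barA$-transition $(j,\rho^\star p)\xto{\rho a}(j',\rho^\star p')=\rho\cdot(j,p)\xto{\rho a}\rho\cdot(j',p')$, and chaining these along the whole run gives $s=\rho\cdot q\xto{\rho^\star u}\rho\cdot q'=s'$.

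The substance is the inductive step of the ``$\Rightarrow$'' direction. I would split a $\barA$-path of length $n+1$ as $s\xto{w_0}s''\xto{b}s'$, apply the induction hypothesis to the prefix to obtain $q_0\xto{u_0}q_0'$ in $A$ and $\rho_0$ with $\rho_0\cdot q_0=s$, $\rho_0\cdot q_0'=s''$, $\rho_0^\star u_0=w_0$, and decompose the final $\barA$-transition as coming from an $A$-transition $\hat q\xto{a}\hat q'$ and a renaming $\sigma$ with $\sigma\cdot\hat q=s''$, $\sigma(a)=b$, $\sigma\cdot\hat q'=s'$. The obstacle is that this last transition starts at $\hat q$, not at the run's current endpoint $q_0'$, and carries its own renaming $\sigma$ that need not be compatible with $\rho_0$. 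Both $q_0'$ and $\hat q$ lie in $J\times\At^{\#m}$, share the $J$-component of $s''$, and have distinct-entry tuples collapsing to the tuple of $s''$; hence there is a permutation $\pi\in\Perm(\At)$ with $\pi\cdot\hat q=q_0'$, and I would choose $\pi$ so that, in addition, it maps the atoms of $\supp\hat q'\setminus\supp\hat q$ (and the letter $a$, when it is fresh for $\hat q$) to atoms fresh for the entire prefix. This is exactly where the distinctness of strong states and the infinitude of $\At$ are used. By $\Perm(\At)$-equivariance of $\delta$ this rewrites the last transition as $q_0'\xto{\pi a}\pi\cdot\hat q'$ in $A$, which appends to the prefix run to give $q_0\xto{u_0\,(\pi a)}\pi\cdot\hat q'$.

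It then remains to splice $\rho_0$ and $\sigma\pi^{-1}$ into a single renaming $\rho_1$: set $\rho_1=\rho_0$ on the support of the prefix data and $\rho_1=\sigma\pi^{-1}$ on $\pi(\supp\hat q'\cup\{a\})$, and extend arbitrarily elsewhere. The only possible clash is on $\supp q_0'=\pi(\supp\hat q)$, and there the two prescriptions agree: for each position $k$, both $\rho_0$ and $\sigma\pi^{-1}$ send the $k$-th entry of $q_0'$'s tuple to the $k$-th entry of $s''$'s tuple (the former because $\rho_0\cdot q_0'=s''$, the latter because $\sigma\cdot\hat q=s''$ and $q_0'=\pi\cdot\hat q$), and the same entry-matching shows $\rho_1(\pi a)=b$ even when $a\in\supp\hat q$; the freshness arranged for the remaining atoms rules out any other overlap. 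A direct check then gives $\rho_1\cdot q_0=s$, $\rho_1^\star u_0=w_0$, $\rho_1(\pi a)=b$, and $\rho_1\cdot(\pi\cdot\hat q')=s'$, completing the step. I expect this compatibility-of-renamings argument, hinging on the distinctness of the tuples in $\At^{\#m}$ to supply the aligning permutation $\pi$, to be the main technical hurdle; everything else is bookkeeping.
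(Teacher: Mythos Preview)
Your proposal is correct and follows essentially the same approach as the paper's proof: both argue by induction that every $\barA$-run lifts to an $A$-run together with a single renaming, the key steps being (i) a permutation to align the new transition's source with the prefix's endpoint, (ii) sending the genuinely new names to names fresh for the prefix, and (iii) splicing the two renamings into one on the overlap $\supp q_0'$. The only cosmetic differences are that you combine (i) and (ii) into one choice of $\pi$ whereas the paper performs them as two successive permutations, and you phrase the invariant in terms of path endpoints rather than full runs; neither changes the substance of the argument.
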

The proof of $L(\barA)\seq \ol{L(A)}$ is slightly subtle since the
transitions of a run in $\barA$ may be induced by different $\rho$'s;
some bookkeeping and sensible choice of fresh names ensures
com\-pa\-ti\-bi\-lity.

Now we come to our first characterization of positive NOFA-recognizable languages:
\begin{theorem}\label{thm:nofa-vs-nofra}
A language is positive and NOFA-recognizable iff it is NOFRA-recognizable.
\end{theorem}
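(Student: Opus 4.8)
The plan is to prove the two implications of the biconditional separately, in each case assembling the results already established.

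For the \enquote{if} direction (NOFRA-recognizable implies positive and NOFA-recognizable), positivity is immediate from \autoref{prop:nofra-acc-pos-lang}. For NOFA-recognizability I would argue that a NOFRA is already a NOFA once the renaming structure is forgotten. Concretely, given a NOFRA $A=(Q,\delta,I,F)$ over $M=\Fin(\At)$, restrict the $\Fin(\At)$-action on $Q$ to a $\Perm(\At)$-action. Then $Q$ becomes an orbit-finite nominal set—since, by definition, a nominal renaming set is orbit-finite precisely when its underlying nominal set is—and $\delta$, $I$, $F$ remain equivariant because $\Perm(\At)\subseteq \Fin(\At)$. This yields a genuine NOFA. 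Crucially, the notions of run and acceptance in \autoref{def:nofa} refer only to $\delta$, $I$, and $F$, not to the monoid action, so $A$ accepts the same language whether regarded as a NOFRA or as a NOFA. Hence $L(A)$ is NOFA-recognizable.

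For the \enquote{only if} direction (positive and NOFA-recognizable implies NOFRA-recognizable), let $L$ be positive with $L=L(A)$ for some NOFA $A$. By \autoref{rem:strong} I may assume the states of $A$ have the form $J\times \At^{\#m}$, so that \autoref{cons:bar-A} applies and produces the NOFRA $\barA$. By \autoref{prop:barA-acc-barL}, $\barA$ accepts the positive closure $\barL$ of $L$. It then remains to observe that the positive closure of a positive language coincides with the language itself: since $L$ is $\Fin(\At)$-equivariant, $\rho^\star(w)\in L$ for every $w\in L$ and $\rho\in\Fin(\At)$, giving $\barL\subseteq L$, while $L\subseteq\barL$ follows by taking $\rho=\id$. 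Thus $L(\barA)=\barL=L$, exhibiting $L$ as NOFRA-recognizable.

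The argument is essentially a matter of correctly combining \autoref{prop:nofra-acc-pos-lang}, \autoref{rem:strong}, \autoref{cons:bar-A}, and \autoref{prop:barA-acc-barL}, so neither direction presents a genuine technical obstacle once those are in place. The only point demanding a little care is the claim in the \enquote{if} direction that passing from the $\Fin(\At)$-action to its $\Perm(\At)$-restriction preserves orbit-finiteness and equivariance while leaving the accepted language unchanged; this is guaranteed by the definitions recalled in \autoref{sec:nominal-sets} and \autoref{def:nofa}.
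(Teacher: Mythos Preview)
Your proof is correct and follows essentially the same approach as the paper: the ``if'' direction uses \autoref{prop:nofra-acc-pos-lang} plus the observation that every NOFRA is a NOFA (via restriction of the action), and the ``only if'' direction combines \autoref{rem:strong}, \autoref{cons:bar-A}, and \autoref{prop:barA-acc-barL} with the fact that $\barL=L$ for positive $L$. You spell out in more detail why restricting the action preserves orbit-finiteness and acceptance, and why $\barL=L$, but these are exactly the points the paper's brief justification relies on.
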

\noindent
Indeed, the ``if'' direction holds due to \autoref{prop:nofra-acc-pos-lang} and 
because every NOFRA is a NOFA. The ``only if'' direction follows from \autoref{prop:barA-acc-barL}, using that $\barL =L$ for positive $L$.

\begin{rem}\label{rem:dofa}
A NOF(R)A is \emph{deterministic}, and hence called a \emph{DOF(R)A}, if it admits a single initial state and its transition relation
is a function $\delta\colon Q\times \At\to Q$. In contrast to classical finite automata, DOFAs are less expressive that NOFAs~\cite{BojanczykEA14}.
We leave it as an open problem whether \autoref{thm:nofa-vs-nofra} restricts to DOF(R)As. In this regard, observe that \autoref{cons:bar-A}
produces a \emph{non}deterministic automaton $\barA$ even if the given automaton $A$ is deterministic. Computing the positive closure of a
DOFA-recognizable language necessarily requires the introduction of nondeterminism, as illustrated by the following example due to Bartek Klin
(personal communication). Consider the language $L$ consisting of all words whose last letter appears immediately before the last occurrence of
a repeated letter; that is, words of the form $vabbwa$ where (i) $v,w\in \Ats$ and $a,b\in \At$, (ii) any two consecutive letters in $w$ are
distinct, (iii) the first letter of $w$ is distinct from $b$ and (iv) the last letter of $w$ is distinct from $a$. This language is recognizable
by a DOFA, in fact by an orbit-finite nominal monoid~\cite{boj-13}. Its positive closure $\barL$ consists of all words whose last letter appears
immediately before \emph{some} occurrence of a repeated letter, which is not DOFA-recognizable.
\end{rem}

\subsection{Abstract Transitions and Runs}
\renewcommand{\sectionautorefname}{Sections}%
\autoref{sec:reg-aut} and \ref{sec:mso}
\renewcommand{\sectionautorefname}{Section}%
will relate positive NOFA-recognizable
languages to register automata and monadic second-order logic. This
relies on a presentation of transitions of $\barA$ in terms of
abstract transitions, given by equations involving
register entries and input values.

\begin{defn}\label{def:nofa-abstract}
Let $A=(Q,\delta,I,F)$ and $\barA=(\barQ,\barDelta,\barI,\barF)$ be as in \autoref{cons:bar-A}.
\begin{enumerate}
\item\label{def:nofa-abstract:1} An \emph{equation} is an expression of the form $k=\bullet$, $\bullet=k$ or $k=\bark$, where $k,\bark\in \{1,\ldots,m\}$. 
\item An \emph{abstract transition} is a triple $(j,E,j')$ where $j,j'\in J$ and $E$ is a set of equations.
\item Every triple $((j,p),a,(j',p'))\in Q\times \At\times Q$ induces an abstract transition $(j,E,j')$ defined as follows for $k,\bark\in \{1,\ldots,m\}$ (we write $(-)_i$ for the $i$-th letter of a word):
\[ k=\bullet\in E \iff p_k=a,\qquad \bullet=k\in E \iff a=p'_k,\qquad k=\bark\in E \iff p_k=p_{\overbar{k}}'. \]
We let $\abs(\delta)$ denote the set of abstract transitions induced by transitions in $\delta$, and we write $j\xto{E} j'$ for $(j,E,j')\in \abs(\delta)$.
\item A triple $((j,q),b,(j',q'))\in \barQ\times \At\times \barQ$ is \emph{consistent} with the abstract transition $(j,E,j')$ if for every $k,\bark\in \{1,\ldots,m\}$ the following conditions hold:
\[
k=\bullet \in E \implies q_k=b,\qquad
\bullet = k \in E\implies b=q'_k,\qquad
k=\bark \in E \implies q_k=q'_{\overbar{k}}.
\]
\end{enumerate}
\end{defn}

\begin{proposition}\label{prop:trans-bar-A}
For every triple $((j,q),b,(j',q'))\in \barQ\times \At\times \barQ$, we have
\[  (j,q)\xto{b} (j',q') \text{ in $\barA$} \qquad\text{iff}\qquad \text{$((j,q),b,(j',q'))$ is consistent with some $(j,E,j')\in \abs(\delta)$}.  \] 
\end{proposition}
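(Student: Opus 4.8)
The plan is to prove the two implications separately: the ``only if'' direction by a direct calculation, and the ``if'' direction by explicitly constructing a witnessing renaming. For ``only if'', I would start from a transition $(j,q)\xto{b}(j',q')$ in $\barA$, which by \autoref{cons:bar-A} arises from a concrete transition $(j,p)\xto{a}(j',p')$ in $\delta$ together with a renaming $\rho\in\Fin(\At)$ satisfying $q=\rho^\star p$, $b=\rho(a)$ and $q'=\rho^\star p'$. Taking $(j,E,j')\in\abs(\delta)$ to be the abstract transition induced by this concrete one, all three consistency conditions follow by applying $\rho$ to the defining equalities of $E$: for instance $k=\bullet\in E$ means $p_k=a$, so $q_k=\rho(p_k)=\rho(a)=b$, and the remaining two cases are analogous.

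For the ``if'' direction, suppose $((j,q),b,(j',q'))$ is consistent with some $(j,E,j')\in\abs(\delta)$, say induced by the concrete transition $(j,p)\xto{a}(j',p')$ in $\delta$. It suffices to produce $\rho\in\Fin(\At)$ with $\rho^\star p=q$, $\rho(a)=b$ and $\rho^\star p'=q'$, since this exhibits $(j,q)\xto{b}(j',q')$ as a transition of $\barA$. The natural candidate is the assignment on the finite set $S=\{p_1,\dots,p_m\}\cup\{p'_1,\dots,p'_m\}\cup\{a\}$ that sends $p_k\mapsto q_k$, $p'_k\mapsto q'_k$ and $a\mapsto b$, extended by the identity on $\At\setminus S$.

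The crux of the proof, and the step I expect to be the main obstacle, is verifying that this assignment is \emph{single-valued}. Since $p,p'\in\At^{\#m}$ have pairwise distinct components, no collisions occur among the $p_k$ or among the $p'_k$, so the only coincidences to resolve are $p_k=p'_{\bark}$, $p_k=a$, $p'_k=a$, and the triple case $p_k=p'_{\bark}=a$. The key observation is that each such coincidence is exactly what places the corresponding equation into $E$, and the matching consistency condition then forces the two prescribed images to agree: $p_k=p'_{\bark}$ yields $k=\bark\in E$ and hence $q_k=q'_{\bark}$; $p_k=a$ yields $k=\bullet\in E$ and hence $q_k=b$; and $p'_k=a$ yields $\bullet=k\in E$ and hence $q'_k=b$. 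A brief case distinction over these possibilities establishes well-definedness, after which $\rho^\star p=q$, $\rho(a)=b$ and $\rho^\star p'=q'$ hold by construction, completing the argument.
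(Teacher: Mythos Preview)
Your proposal is correct and matches the paper's own proof essentially line for line: both directions are handled exactly as you describe, and the paper's treatment of the ``if'' direction hinges on the same well-definedness check for $\rho$, resolved via the same case distinction over coincidences $p_k=a$, $a=p'_k$, and $p_k=p'_{\bark}$. If anything, your write-up is slightly more explicit about enumerating the collision cases than the paper's terse ``similarly''.
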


\begin{defn} An \emph{abstract run} in $\barA$ is a sequence $(j_0,E_1,j_1,E_2,j_2,\ldots, E_n,j_n)$ such that $j_0\in J_I$ and $j_{r-1}\xto{E_r} j_r$ for $r=1,\ldots,n$. It is \emph{accepting} if $j_n\in J_F$.
\end{defn}

\begin{notation}\label{not:xik-pred}
Given an abstract run $(j_0,E_1,j_1,E_2,j_2,\ldots, E_n,j_n)$, we inductively define the predicates
$\Eq^{(i)}_k$ ($i\in \{1,\ldots,n\}$, $k\in \{1,\ldots,m\}$) on the set $\{1,\ldots,n\}$:
\begin{enumerate}
\item if $\bullet = k$ in $E_i$ then $\Eq^{(i)}_k(i)$;
\item if $r<n$ and $k=\bark$ in $E_{r+1}$ and $\Eq^{(i)}_k(r)$ then $\Eq^{(i)}_{\overbar{k}}(r+1)$. 
\end{enumerate} 
\end{notation}
Informally, $\Eq^{(i)}_k(r)$ asserts that $1\leq i\leq r\leq n$ and that in every run in $\barA$ of length $r$
whose transitions are consistent with $E_1,\ldots,E_r$, the $i$-th input letter equals the content of register
$k$ after $r$ steps. The accepted language may be characterized using these predicates:

\begin{proposition}\label{prop:abstract-lang-char}
  The NOFRA $\barA$ accepts the word $b_1\cdots b_n\in \Ats$ iff there
  exists an accepting abstract run of length $n$ (with induced
  predicates $\Eq_k^{(i)}$) such that for $i,r\in \{1,\ldots,n\}$,
  \begin{align}
    \text{$r<n$ and $k=\bullet$ in $E_{r+1}$ and $\Eq^{(i)}_k(r)$ for some $k$}
    \quad&\implies\quad
    b_i=b_{r+1}. \label{eq:eq-cond-1}
  \end{align}
\end{proposition}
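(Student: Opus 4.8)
The plan is to prove both implications by relating concrete runs of $\barA$ on $b_1\cdots b_n$ to abstract runs, with \autoref{prop:trans-bar-A} serving as the bridge between concrete transitions and consistency with abstract transitions. I write a concrete run as $((j_0,w_0),b_1,(j_1,w_1),\dots,b_n,(j_n,w_n))$ with $w_r\in\At^m$, and I exploit that the initial and final states of $\barA$ are $\barI=J_I\times\At^m$ and $\barF=J_F\times\At^m$ (see \autoref{cons:bar-A}), so that such a run is accepting exactly when $j_0\in J_I$ and $j_n\in J_F$.

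For the ``only if'' direction I would start from a concrete accepting run of $\barA$. By \autoref{prop:trans-bar-A} each transition $(j_{r-1},w_{r-1})\xto{b_r}(j_r,w_r)$ is consistent with some abstract transition $(j_{r-1},E_r,j_r)\in\abs(\delta)$; choosing one for every $r$ yields an accepting abstract run. The key is that the predicates of \autoref{not:xik-pred} track register contents faithfully:
\[ \Eq^{(i)}_k(r)\quad\Longrightarrow\quad (w_r)_k=b_i. \]
I would prove this by induction following the definition of $\Eq^{(i)}_k$: the base case $\bullet=k\in E_i$ gives $(w_i)_k=b_i$ by the consistency clause for $\bullet=k$, and the inductive step $k=\bark\in E_{r+1}$ propagates the equality via the clause for $k=\bark$. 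Granting this, condition~\eqref{eq:eq-cond-1} is immediate: if $k=\bullet\in E_{r+1}$ and $\Eq^{(i)}_k(r)$, then consistency for $k=\bullet$ forces $(w_r)_k=b_{r+1}$, while the claim above gives $(w_r)_k=b_i$, whence $b_i=b_{r+1}$.

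The ``if'' direction is the substantial one. Given a condition-satisfying accepting abstract run, I must manufacture register contents $w_0,\dots,w_n$ making every transition consistent with $E_r$; by \autoref{prop:trans-bar-A} this produces an accepting concrete run. I would treat the entries $(w_r)_k$ as unknowns subject to the equational constraints read off from consistency --- namely $(w_{r-1})_k=b_r$ for $k=\bullet\in E_r$, $(w_r)_k=b_r$ for $\bullet=k\in E_r$, and $(w_{r-1})_k=(w_r)_{\bark}$ for $k=\bark\in E_r$ --- and then solve this system. The decisive structural input is that each equation set $E$ occurring in $\abs(\delta)$ is \emph{closed} under the consequences forced by the distinctness of registers in $\At^{\#m}$: combining $k=\bullet$ with $k=\bark$ yields $\bullet=\bark$; combining $\bullet=\bark$ with $k=\bark$ yields $k=\bullet$; and for each fixed source $k$, respectively each fixed target $\bark$, at most one copy-equation $k=\bark$ is present. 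Grouping the unknowns into the equivalence classes (``chains'') generated by the copy-equations $k=\bark$, these closure properties show that every chain carries at most one injected value $(\bullet=k)$ and that any equality test $(k=\bullet)$ either sits at the forward end of its chain or is accompanied, via the closure, by an injection at the next slot. I would then verify that all constants attached to a single chain coincide: consecutive injections along a chain are forced equal by an instance of condition~\eqref{eq:eq-cond-1} supplied by the induced $k=\bullet$ equation, each test after the first injection is forced equal to the preceding injection in the same manner, and an injection-free chain can carry at most one test. Assigning to each chain its common constant (or a fresh name if none is present) yields well-defined register contents satisfying all three constraint families, hence the required run.

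The main obstacle is exactly this well-definedness of the chain values in the ``if'' direction: a priori a chain could be pinned to two distinct data values, rendering the constraint system unsatisfiable even though the word $b_1\cdots b_n$ is fixed. The resolution is driven entirely by the closure of the sets $E$ under equational consequence, which guarantees that every potential clash materialises as a genuine equality test with an $\Eq$-witness, so that condition~\eqref{eq:eq-cond-1} applies. Establishing these closure properties and carrying out the case analysis over where tests and injections fall along a chain (before, at, or after its first injection) is the technical heart of the argument; by contrast, once the register contents are fixed, consistency with each $E_r$ and the appeal to \autoref{prop:trans-bar-A} are routine.
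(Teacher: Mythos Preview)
Your forward direction matches the paper's exactly. For the reverse direction you take a genuinely different route: the paper constructs the run \emph{incrementally}, extending a length-$r$ run to length $r{+}1$ via \autoref{lem:cons-trans-choice} and, when the fresh input letter fails to match, back-patching the register contents along the chain ending at the tested register (Subcase~1.2 in the paper's proof). You instead set up the full constraint system on the unknowns $(w_r)_k$ at once, partition them into chains under the copy-equations, and argue that each chain admits a single consistent value. This is a valid and arguably cleaner alternative; the paper's approach localises each appeal to~\eqref{eq:eq-cond-1} to the step being extended, whereas yours exposes the global combinatorial structure and avoids the running case split. One small slip: a chain need \emph{not} carry at most one injection~$\bullet=k$ (take $\bullet=1\in E_1$ together with $1=1,\ \bullet=1\in E_2$). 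What the closure properties of \autoref{rem:nofa-abstr-transition-elements} actually give you is that every injection at a non-initial chain position is accompanied by a test at the previous position, and every test at a non-final chain position is accompanied by an injection at the next; combined with~\eqref{eq:eq-cond-1} this forces all injected and tested values on a chain to coincide --- which is exactly what your subsequent reasoning establishes, so the slip is purely in the phrasing.
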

As a first application of this result, we identify an important
difference between NOFA and NOFRA concerning the power of guessing
data values during the computation:

\begin{defn}\label{def:non-guessing}
  A NOFA/NOFRA is \emph{non-guessing} if each initial state has empty
  support and for each transition $q\xto{a} q'$ one has
  $\supp q' \seq \supp q \cup \{a\}$.
\end{defn}
The NOFA-recognizable language $L_0$ from the Introduction is not
recognizable by any non-guessing
NOFA~\cite[Ex.~1]{KaminskiFrancez94}. Note that $L_0$ is not positive;
in fact, it is necessarily so, since for positive languages guessing
does not add to the expressive power of automata:

\begin{theorem}\label{thm:positive-non-guessing}
  Every positive NOFA-recognizable language is accepted by some
  non-guessing NOFRA, hence by some non-guessing NOFA.
\end{theorem}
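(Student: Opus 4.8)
The plan is to start from the NOFRA $\barA$ of \autoref{cons:bar-A}, which by \autoref{thm:nofa-vs-nofra} (via \autoref{prop:barA-acc-barL}) accepts the given positive language $L=\ol L$, and to replace its guessed register contents by a bookkeeping of which input letters are actually stored. Concretely, I would build a NOFRA $B$ whose states are pairs $(j,f)$ with $j\in J$ and $f\colon\{1,\ldots,m\}\pto\At$ a \emph{partial} function: a defined slot $f(k)$ records the data value currently held in register $k$, while an undefined slot marks a register whose content is irrelevant (in $\barA$ such a slot carries a value introduced by a non-injective or fresh renaming). The partial functions into $\At$, under the pointwise $\Fin(\At)$-action that leaves undefined slots undefined, form an orbit-finite nominal renaming set (there are only finitely many equality patterns of defined entries), so $B$ is a bona fide NOFRA. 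Its initial states are $J_I\times\{\text{nowhere-defined}\}$, which have empty support, and its final states are $J_F\times(\{1,\ldots,m\}\pto\At)$.

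Transitions of $B$ follow the abstract transitions $\abs(\delta)$: from $(j,f)$ reading $b$, and for each $j\xto{E}j'$ in $\abs(\delta)$, $B$ moves to $(j',f')$, where $f'$ is obtained by the register updates dictated by $E$ --- store the current input $b$ in every slot $\bark$ with ${\bullet}={\bark}\in E$, copy $f(k)$ into $\bark$ whenever $k=\bark\in E$ (leaving $\bark$ undefined if $f(k)$ is), and make every remaining slot undefined --- subject to the equality tests $f(k)=b$ demanded by the equations $k={\bullet}\in E$ \emph{only for those $k$ with $f(k)$ defined}. By construction $\supp(j',f')\seq\supp(j,f)\cup\{b\}$, so $B$ is non-guessing in the sense of \autoref{def:non-guessing}, and the update and test rules commute with renamings, so the transition relation is $\Fin(\At)$-equivariant.

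The correctness claim $L(B)=L(\barA)=L$ rests on the abstract-run characterization of \autoref{prop:abstract-lang-char}. A run of $B$ on $b_1\cdots b_n$ is exactly a choice of an abstract run $(j_0,E_1,\ldots,E_n,j_n)$ together with the induced register histories $f_0,\ldots,f_n$, and I would prove by induction on $r$ the invariant that $f_r(k)$ is defined with value $b_i$ precisely when the predicate $\Eq^{(i)}_k(r)$ of \autoref{not:xik-pred} holds; the base case matches the clause ``$\bullet=k$ in $E_i$'' and the inductive step matches the copy clause ``$k=\bark$ in $E_{r+1}$''. Under this invariant, the tests that $B$ enforces --- namely $b_{r+1}=f_r(k)$ whenever $k={\bullet}\in E_{r+1}$ and $f_r(k)$ is defined --- are literally the implications \eqref{eq:eq-cond-1}. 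Hence an accepting run of $B$ yields an accepting abstract run satisfying \eqref{eq:eq-cond-1}, so $\barA$ accepts $b_1\cdots b_n$; conversely, given an accepting abstract run with \eqref{eq:eq-cond-1} (which exists whenever $\barA$ accepts), the forced register history passes every test of $B$, so $B$ accepts. Finally, restricting the $\Fin(\At)$-action of $B$ to $\Perm(\At)$ turns it into a non-guessing NOFA with the same language, giving the ``hence'' clause.

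The main obstacle is the well-definedness and consistency of the register history when several equations act on the same slot. Several equations of $E$ may target one register $\bark$ (e.g.\ both ${\bullet}={\bark}$ and $k=\bark$), and distinct input positions may propagate into the same register, so that $\Eq^{(i)}_k(r)$ and $\Eq^{(i')}_k(r)$ both hold; for the single-valued $f_r(k)$ to be well-defined one needs $b_i=b_{i'}$. This is exactly where positivity enters: because $A$'s states lie in $\At^{\#m}$, each such coincidence of equations forces a further equation $k={\bullet}\in E$, whose associated test (equivalently, the condition \eqref{eq:eq-cond-1}) pins down the required equality, whereas the \emph{ungrounded} slots --- the ones we discard --- only ever occur in equations that a positive language never needs to enforce. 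Making this bookkeeping precise, and checking that a fixed priority convention for overlapping updates keeps the invariant intact, is the technical heart of the argument.
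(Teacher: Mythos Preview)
Your proposal is correct and follows essentially the same idea as the paper's proof: track, alongside the control state $j$, which registers currently hold data values that were \emph{forced} by previous abstract transitions, and test the input only against those. The paper does this with states $(j,S,q)\in J\times\pow(\{1,\dots,m\})\times\At^m$ (\autoref{cons:tilde-A}), proves equivalence to $\barA$ via \autoref{prop:abstract-lang-char} (\autoref{lem:barA-bs-tildeA}), and then in a separate step strips the guessing transitions and adds a fresh empty-support initial state (\autoref{lem:pos-nofa-non-guessing}). Your packaging with partial functions $f\colon\{1,\dots,m\}\pto\At$ is a mild streamlining: undefined slots carry no data to guess and the nowhere-defined function already has empty support, so non-guessing is immediate without the extra clean-up step. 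The invariant you isolate --- $f_r(k)=b_i$ iff $\Eq^{(i)}_k(r)$ --- is exactly the content of the paper's proof of \autoref{lem:barA-bs-tildeA}, and your handling of overlapping updates is justified by \autoref{rem:nofa-abstr-transition-elements} (at most one $k=\bark$ per $\bark$, and $\{\bullet=\bark,\,k=\bark\}\subseteq E$ forces $k=\bullet\in E$), which supplies the coherence you flag as the main obstacle.
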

To make a NOFRA non-guessing, one keeps track (via the state) of those registers containing data
values forced by abstract transitions. The other registers then may be
modified arbitrarily, which allows the elimination of guessing transitions.

\section{Positive Register Automata}\label{sec:reg-aut}
We now relate positive NOFA-recognizable languages to register automata, a.k.a.~finite-memory automata, originally introduced by
Kaminski and Francez~\cite{KaminskiFrancez94}; we follow the equivalent presentation by Boja\'nczyk et al.~\cite{BojanczykEA14}.
A \emph{register automaton} is a quintuple $A=(C,m,\delta,I,F)$ where $C$ is a finite set of control states, $m\in \Nat$ is the
number of registers (numbered from $1$ to $m$), $I,F\seq C$ are sets of initial and final states, and
$\delta\seq C\times \mathrm{Bool}(\Phi) \times C$ is the set of transitions. Here, $\mathrm{Bool}(\Phi)$ denotes the set of boolean formulas over the atoms
$\Phi\;=\;(\,\{1,\ldots,m\}\times \{\mathrm{before}\} \cup \{\bullet\} \cup \{1,\ldots,m\}\times \{\mathrm{after}\}\,)^2$.
Elements of $\Phi$ are called \emph{equa\-tions}; we write $x=y$ for $(x,y)\in \Phi$. Moreover, we denote $(c,\phi,c')\in \delta$ by $c\xto{\phi} c'$.
A \emph{configuration} of $A$ is a pair $(c,r)$ of a state $c\in C$ and a word $r\in (\At\cup \{\bot\})^m$ corresponding to a partial assignment of data values to the registers. The initial configurations are $(c,\bot^m)$ for $c\in I$. Given an input $a\in \At$ and configurations $(c,r),(c',r')$ we write $(c,r)\xto{a}(c',r')$ if this move is consistent with some transition $c\xto{\phi} c'$, that is, the formula
$\phi$ is true under the assignment making an atom $x=y\in \Phi$ true iff the corresponding data values are defined and equal. For instance, $(k,\mathrm{before})=\bullet$ is true iff $r_k\neq \bot$ and $r_{k}=a$, and
$(k,\mathrm{before})=(\bark,\mathrm{after})$ is true iff $r_k,r'_{\overbar{k}}\neq \bot$ and $r_k=r'_{\overbar{k}}$.
A word $w=a_1\ldots a_n\in \Ats$ is \emph{accepted} by $A$ if it admits an accepting run, viz.\ a sequence of moves $(c_0,r_0)\xto{a_1} (c_1,r_1)\xto{a_2} \cdots \xto{a_{n}} (c_n,r_n)$ where $(c_0,r_0)$ is initial and $c_n\in F$. The \emph{accepted language}  $L(A)\seq \Ats$ is the set of accepted words. 

As shown by Boja\'nczyk et al.~\cite{BojanczykEA14}, register automata accept the same languages as NOFAs.
To capture positive languages, we restrict to register automata with positive transitions:

\begin{defn}
  A register automaton is \emph{positive} if for each transition
  $c\xto{\phi}c'$ the formula~$\phi$ is positive: $\phi=\mathrm{true}$
  or $\phi$ uses the boolean operations $\vee$ and $\wedge$ only.
\end{defn}

\begin{theorem}\label{thm:pos-nofa-vs-pos-reg}
A data language is positive and NOFA-recognizable iff it is accepted by some positive register automaton.
\end{theorem}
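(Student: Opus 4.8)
The plan is to prove the two directions separately, exploiting the characterizations already established. Given the equivalence $\text{positive} \cap \text{NOFA} = \text{NOFRA}$ from \autoref{thm:nofa-vs-nofra}, it suffices to show that NOFRA-recognizability coincides with recognizability by positive register automata. This reduces the theorem to a translation between two concrete automaton models, which is the natural level at which to work.

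For the direction from positive register automata to positive NOFA-recognizable languages, I would first observe that positivity of the transition formulas is the key structural feature to exploit. The plan is to show that any positive register automaton accepts a positive language directly: since each transition formula $\phi$ uses only $\vee$ and $\wedge$ over equality atoms (no negation), truth of $\phi$ under an assignment is preserved when data values are identified by an arbitrary renaming $\rho\colon\At\to\At$. Concretely, if a run $(c_0,r_0)\xto{a_1}\cdots\xto{a_n}(c_n,r_n)$ witnesses acceptance of $w$, then applying $\rho$ pointwise to all register contents and input letters yields a valid run on $\rho^\star(w)$, because every equality $x=y$ that held still holds after applying $\rho$ (identifications only create \emph{more} equalities, never destroy them), and positive formulas are monotone in their atoms. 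This shows $L(A)$ is $\Fin(\At)$-equivariant. Combined with the classical result of Boja\'nczyk et al.\ that register automata and NOFAs are expressively equivalent, the language is positive and NOFA-recognizable.

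For the converse, I would take a positive NOFA-recognizable language, which by \autoref{thm:nofa-vs-nofra} is accepted by some NOFRA, and construct an equivalent positive register automaton. Here I expect \autoref{cons:bar-A} together with the abstract-transition machinery of \autoref{def:nofa-abstract} and \autoref{prop:trans-bar-A} to do the heavy lifting. The states $J\times\At^m$ of $\barA$ become: the finite control set $C = J$ together with $m$ registers. Each abstract transition $j\xto{E}j'$ in $\abs(\delta)$ translates into a register-automaton transition $j\xto{\phi_E}j'$ where $\phi_E$ is the conjunction of equality atoms encoding $E$: an equation $k=\bullet$ becomes $(k,\mathrm{before})=\bullet$, an equation $\bullet=k$ becomes $\bullet=(k,\mathrm{after})$, and an equation $k=\bark$ becomes $(k,\mathrm{before})=(\bark,\mathrm{after})$. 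Since each $\phi_E$ is a pure conjunction of atoms, it is positive, so the resulting register automaton is positive. The correspondence between consistency of concrete transitions with abstract ones (\autoref{prop:trans-bar-A}) should translate directly into agreement of the accepted languages.

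The main obstacle I anticipate is reconciling the \emph{partiality} of register contents ($\bot$ entries) in the register-automaton model with the \emph{total} assignments $\At^m$ used in $\barA$, and correctly handling the initial configurations: register automata start from $\bot^m$, whereas $\barI = J_I\times\At^m$ allows arbitrary (guessed) initial register contents. The cleanest route is to invoke \autoref{thm:positive-non-guessing}, which guarantees a \emph{non-guessing} NOFRA for the positive language; in a non-guessing automaton, initial states have empty support and registers are only populated from input letters as the computation proceeds, matching the $\bot^m$-initialization and the $\supp q'\subseteq\supp q\cup\{a\}$ discipline of register automata. This alignment should make the translation faithful and let me verify $L(A)=L(\barA)$ by a routine induction on run length, using the semantics of the equality atoms in $\Phi$ to match each register-automaton move against a consistent concrete transition of the non-guessing NOFRA.
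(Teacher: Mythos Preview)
Your proposal is correct and matches the paper's approach closely in spirit: both directions use the same core ideas (monotonicity of positive formulas under renamings for the ``if'' direction; translating abstract transitions $j\xto{E}j'$ into conjunctions of equality atoms for the ``only if'' direction).

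The one genuine difference is how you resolve the mismatch between the $\bot^m$-initialization of register automata and the total register assignments $\At^m$ of $\barA$. You propose to route through \autoref{thm:positive-non-guessing} to obtain a non-guessing NOFRA whose discipline aligns with the register-automaton model. The paper instead handles this directly, without invoking \autoref{thm:positive-non-guessing}: it adds a single fresh initial control state $j_{0,\reg}$ and, for each abstract transition $j_0\xto{E}j$ out of an original initial state $j_0\in J_I$, installs a weakened first-step transition $j_{0,\reg}\xto{\phi_E}j$ where $\phi_E$ retains only the $\bullet=k$ constraint (or is $\mathrm{true}$ if no such constraint exists). Equations of the form $k=\bullet$ or $k=\bark$ are dropped at the first step since all ``before'' registers are $\bot$. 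For the converse inclusion $L(\barA_\reg)\subseteq L(\barA)$, the paper fills in the $\bot$-entries of a register-automaton run by arbitrary choices, reconstructing a run of $\barA$. Your route is perfectly viable (and indeed the $S$-component of the non-guessing NOFRA $\widetilde{A}_\ng$ tracks exactly which registers should be non-$\bot$), but it introduces a dependency on \autoref{thm:positive-non-guessing} that the paper's more elementary construction avoids.
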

Here, the approach is to regard a configuration of a positive register automaton as a state of a NOFRA.  
Conversely, an abstract transition $j\xto{E} j'$ of a NOFA can be transformed into a transition $j\xto{\phi} j'$ of a register automaton
for the conjunction $\phi$ of all equations in $E$, identifying $k=\bullet$, $\bullet=k$, $k=\bark$ with 
$(k,\mathrm{before})=\bullet$, $\bullet=(k,\mathrm{after})$, $(k,\mathrm{before})=(\bark,\mathrm{after})$. A tweak of the initial
states accounts for the requirement that registers are initially empty.

\begin{rem}\label{rem:pos-reg-aut-vs-fsuba}
  Just like register automata are equivalent to finite-memory
  automata, positive register automata correspond to a restricted
  version of finite-memory automata called \emph{finite-state
    unification-based automata} (\emph{FSUBA})~\cite{tal99,kt06}. The
  original definition of the latter involves a fixed initial register
  assignment, which enables acceptance of non-positive
  languages. However, FSUBA with empty initial registers are
  equivalent to positive register automata; see Appendix for
  details. This implies in particular that
  positive register automata admit a decidable inclusion problem, in contrast to the case of
  unrestricted register automata~\cite{nsv04}. Indeed, FSUBA translate
  into a more general model called
  \emph{RNNA}~\cite[Sec.~6]{SchroderEA17}, for which inclusion is
  decidable. Tal~\cite{tal99} has given a direct decidability proof for FSUBA.
\end{rem}

\section{Monadic Second-Order Logic with Positive Equality Tests}\label{sec:mso}
\label{S:mso}
As motivated in the Introduction, positive data languages are considered
as expressing properties of data words involving positive
statements about equality of data values. In the following we make
this idea precise. For this purpose, we employ monadic second-order
logic with equality tests, abbreviated $\MSOe$~\cite{nsv04,
  boj-13,col-15}. Its formulae are given by the grammar
\[
  \phi, \psi
  \;\;:=\;\;
  x<y \mid x\sim y \mid X(x) \mid \neg\phi \mid \phi\vee \psi \mid
  \phi\wedge\psi \mid \exists x.\,\phi \mid \exists X.\,\phi \mid
  \forall x.\,\phi \mid \forall X.\,\phi,
\]
where $x,y$ range over first-order variables and $X$ over monadic
second-order variables. A formula is interpreted over a fixed data
word $w=a_1\ldots a_n\in \Ats$. First-order variables represent
positions, i.e.~elements of the set $\{1,\ldots,n\}$, and second-order
variables represent subsets of $\{1,\ldots,n\}$. The atomic formula
$x<y$ means ``position $x$ comes before position $y$'', and $x\sim y$
means ``the same data value occurs at positions $x$ and $y$''. The
interpretation of the remaining constructs is standard. A
\emph{sentence} is a formula without free variables. We write
$L(\phi)\seq \Ats$ for the set of data words satisfying the sentence
$\phi$. For example, the languages $L_0$ and $L_1$ from the
Introduction are defined by
$\phi_0=\forall y.\, \mathrm{last}(y) \Rightarrow (\forall x.\, x<y
\Rightarrow \neg (x\sim y))$, where
$\mathrm{last}(y)=\neg\exists z.\, y<z$ and
$\psi\Rightarrow \xi = \neg\psi \vee \xi$, and by
$\phi_1=\exists x.\,\exists y.\,x<y \wedge x\sim y$.

Recall that by standard rules of negation, every formula is equivalent
to one in \emph{negation normal form} (\emph{NNF}), where for each
subformula~$\neg\phi$ the formula $\phi$ is atomic.

\begin{defn}
  An $\MSOe$ formula lies in $\MSOep$ (\emph{monadic second-order
    logic with positive equality tests}) if it admits an NNF
  containing no subformula of the form $\neg(x\sim y)$.
  A data language is   \emph{$\MSOep$-definable} if it is 
  of the form $L(\phi)$ for an $\MSOep$ sentence $\phi$.
\end{defn}

The above sentence $\phi_1$ lies in $\MSOep$ but $\phi_0$ does not. The following is immediate:

\begin{proposition}\label{prop:mso-to-pos}
Every $\MSOep$-definable language is positive.
\end{proposition}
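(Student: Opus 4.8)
The plan is to show directly that if $\phi$ is an $\MSOep$ sentence and $w \in L(\phi)$, then $\rho^\star(w) \in L(\phi)$ for every renaming $\rho \in \Fin(\At)$; this is exactly the statement that $L(\phi)$ is $\Fin(\At)$-equivariant, i.e.\ positive. First I would fix an arbitrary $\MSOep$ sentence $\phi$, assume without loss of generality that it is already in negation normal form containing no subformula of the shape $\neg(x\sim y)$ (this is permitted by the definition of $\MSOep$), and fix a word $w=a_1\cdots a_n$ together with a renaming $\rho$, writing $w'=\rho^\star(w)=\rho(a_1)\cdots\rho(a_n)$. Crucially $w$ and $w'$ have the same length $n$, so they share the same underlying set of positions $\{1,\ldots,n\}$ and the same interpretation of the order predicate $x<y$; only the data-equality predicate $\sim$ may differ.

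The heart of the argument is a monotonicity lemma proved by structural induction on formulae in NNF. For any assignment $\mu$ of the free first-order variables to positions and free second-order variables to subsets of $\{1,\ldots,n\}$ (the same assignment is meaningful for both $w$ and $w'$ since positions agree), I would prove that $w,\mu \models \psi$ implies $w',\mu \models \psi$ for every subformula $\psi$ occurring in $\phi$. The base cases are the crux: for $x<y$ and $X(x)$ the semantics depend only on positions and the second-order valuation, so truth transfers verbatim; for the atom $x\sim y$, if $a_{\mu(x)}=a_{\mu(y)}$ then applying $\rho$ gives $\rho(a_{\mu(x)})=\rho(a_{\mu(y)})$, so the equality test is preserved (this is precisely where non-injectivity of $\rho$ is harmless, since $\rho$ can only create, never destroy, equalities). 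The inductive cases for $\vee$, $\wedge$, $\exists x$, $\exists X$, $\forall x$, $\forall X$ are all routine and monotone because truth is preserved in the same direction, and the only negations permitted in the NNF sit in front of the atoms $x<y$ and $X(x)$, whose truth value is \emph{unchanged} (not merely preserved), so $\neg(x<y)$ and $\neg X(x)$ transfer in both directions without difficulty.

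I expect the main obstacle to be bookkeeping around the negated atoms rather than any genuine conceptual difficulty: the monotone induction works only because the sole atoms that can change their truth value when passing from $w$ to $w'$ are the $\sim$-atoms, and these change only from false to true, never the reverse; the definition of $\MSOep$ guarantees no negated $\sim$-atom appears, so we never need the converse implication for $\sim$. I would phrase the induction hypothesis carefully to cover formulae with free variables under an arbitrary assignment, since the quantifier cases require strengthening from sentences to open formulae. Applying the lemma to the sentence $\phi$ (with the empty assignment) yields $w'\in L(\phi)$, completing the proof that $L(\phi)$ is positive.
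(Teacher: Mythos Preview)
Your proof is correct and is the canonical way to verify the claim. The paper itself treats the proposition as ``immediate'' and provides no proof, so what you have written is simply a careful unpacking of the intended one-line justification: renamings preserve positions and can only \emph{create} data equalities, hence every atom in an NNF $\MSOep$ formula is monotone in the passage from $w$ to $\rho^\star(w)$, and structural induction does the rest.
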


\begin{rem}\label{rem:mso-vs-nofa}
The logic $\MSOe$ is more expressive than NOFAs~\cite{nsv04}, and the same holds for $\MSOep$:
the language defined by the $\MSOep$ sentence $\phi=\forall x.\,\exists y.\, (x<y \,\vee\, y<x )\, \wedge\, x\sim y$ (``no data value occurs only once'')
is not NOFA-recognizable. However, within the class of NOFA-recognizable languages, positive and $\MSOep$-definable languages coincide:
\end{rem}

\begin{theorem}\label{thm:mso-vs-nofa}
A NOFA-recognizable language is positive iff it is $\MSOep$-definable.
\end{theorem}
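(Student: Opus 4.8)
The plan is to dispatch the two directions separately. The ``if'' direction is immediate: if a (NOFA-recognizable) language is $\MSOep$-definable, then it is positive by \autoref{prop:mso-to-pos}. Thus all the content lies in the ``only if'' direction, namely that every positive NOFA-recognizable language is $\MSOep$-definable. Here I would reduce to the NOFRA setting and then give a B\"uchi--Elgot--Trakhtenbrot-style construction of a defining sentence, exploiting the fact that \autoref{prop:abstract-lang-char} already splits the membership condition into a purely combinatorial part (the abstract run) and a list of positive equalities.

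First I would fix a positive NOFA-recognizable language $L$. Choosing a NOFA $A=(Q,\delta,I,F)$ with $L(A)=L$ and $Q=J\times\At^{\#m}$ (\autoref{rem:strong}), the associated NOFRA $\barA$ of \autoref{cons:bar-A} satisfies $L(\barA)=\barL=L$ by \autoref{prop:barA-acc-barL} together with positivity of $L$. By \autoref{prop:abstract-lang-char}, a word $b_1\cdots b_n$ lies in $L$ iff there is an accepting abstract run $(j_0,E_1,\ldots,E_n,j_n)$ for which every register read governed by an induced predicate $\Eq^{(i)}_k$ forces the positive equality $b_i=b_{r+1}$ via \eqref{eq:eq-cond-1}. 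My sentence would existentially quantify a B\"uchi encoding of the abstract run: since $J$ is finite and each $E_r$ is a subset of a fixed finite set of equations, the set $\abs(\delta)$ of abstract transitions is a finite alphabet, and ``there is an accepting abstract run'' is a regular, hence MSO-definable, condition on the positions $\{1,\ldots,n\}$, using one monadic variable per abstract transition plus standard formulae asserting a valid partition, correct source/target matching, an initial index $j_0\in J_I$, and a final index $j_n\in J_F$. Crucially, this part of the formula uses only the order predicate $<$ and the monadic variables, never $\sim$.

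The heart of the argument is encoding the predicates $\Eq^{(i)}_k$ and turning \eqref{eq:eq-cond-1} into positive equality tests. For a fixed source position $i$, the family $(\Eq^{(i)}_k)_k$ is obtained by a deterministic forward propagation along the abstract run (\autoref{not:xik-pred}) whose state is the subset $S\seq\{1,\ldots,m\}$ of registers currently holding the value read at position $i$; as there are finitely many such subsets, this propagation is a finite-state process and hence MSO-definable, and, being purely syntactic, it is independent of the actual data values. I would therefore write a formula $\mathrm{EqLink}(i,j)$, with free first-order variables $i,j$ and an inner block of monadic quantifiers tracking $S$, expressing ``$j=r+1$, some $k=\bullet$ lies in $E_{r+1}$, and $\Eq^{(i)}_k(r)$ holds.'' Writing $\overline R$ for the tuple of monadic run-variables, the defining sentence is
\[
  \phi \;=\; \exists\,\overline R\;\Big(\,\mathrm{ValidAccRun}\ \wedge\ \forall i\,\forall j\,\big(\mathrm{EqLink}(i,j)\Rightarrow i\sim j\big)\,\Big),
\]
and correctness is exactly \autoref{prop:abstract-lang-char}: the bracketed formula holds at $b_1\cdots b_n$ iff the word admits an accepting abstract run meeting \eqref{eq:eq-cond-1}.

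Finally I would verify membership in $\MSOep$. Both $\mathrm{ValidAccRun}$ and $\mathrm{EqLink}$ are built from $<$, monadic atoms, and boolean and quantifier constructs, and contain no occurrence of $\sim$; the sole occurrence of $\sim$ in $\phi$ is the single positive atom $i\sim j$. Hence, putting $\phi$ in NNF, the negation introduced by the implication $\mathrm{EqLink}(i,j)\Rightarrow i\sim j$, equivalently $\neg\mathrm{EqLink}(i,j)\vee(i\sim j)$, is pushed into the $\sim$-free formula $\mathrm{EqLink}$, so no subformula $\neg(x\sim y)$ ever arises and $\phi\in\MSOep$. I expect the main obstacle to be the careful definition of $\mathrm{EqLink}$: checking that the register-tracking propagation of \autoref{not:xik-pred} is faithfully captured by a finite-state MSO formula, and that this formula stays entirely combinatorial so that it introduces no equality tests, which is precisely what keeps the whole sentence positive.
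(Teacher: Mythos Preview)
Your proposal is correct and follows essentially the same route as the paper: both directions are handled identically, and for the ``only if'' direction the paper likewise encodes an accepting abstract run via one second-order variable $R_{(j,E,j')}$ per abstract transition, then expresses the predicates $\Eq^{(x)}_k$ of \autoref{not:xik-pred} by further second-order variables (with an explicit minimality clause playing the role of your ``deterministic forward propagation''), and finally states condition~\eqref{eq:eq-cond-1} as the single positive implication $\ldots\Rightarrow x\sim z$. The only cosmetic difference is that the paper spells out the auxiliary formula $\phi_{\mathrm{aux}}$ defining $\Eq^{(x)}_k$ as a least fixed point rather than packaging it as your $\mathrm{EqLink}$, but the content and the positivity argument are the same.
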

Indeed, one can express the abstract acceptance condition of \autoref{prop:abstract-lang-char} in $\MSOep$.

\section{Toposes for Names}\label{sec:toposes}
In the remainder, we investigate positive data languages and their
automata from a more conceptual perspective. Some familiarity with
basic category theory (functors, natural transformations, (co-)limits,
adjunctions) is required; see Mac Lane~\cite{mac-71} for a gentle introduction.

Nominal sets and nominal renamings sets (\autoref{sec:nominal-sets})
were initially introduced as a convenient abstract framework for
reasoning about names, and related issues such as freshness, binding,
and substitution. An alternative, and more general, approach uses
the presheaf categories $\Set^\II$~\cite{stark96} and
$\Set^\FF$~\cite{ftp99}. The intuition behind each of these categories
$\C$ is very similar: one thinks of $X\in \C$ as a collection of
finitely supported objects, equipped with a renaming operation that
extends renamings $\rho\colon \At\to\At$ to the level of elements of
$X$. The difference between the four categories lies in whether
elements admit a \emph{least} support, or just some finite support,
and in whether renamings $\rho$ are injective or arbitrary maps; see
\autoref{fig:categories}. The last column classifies the respective finitely
presentable objects, which underly automata. We now recall the latter
concept and describe the categories in more detail.

\mypar{Finitely presentable objects.}
A diagram
\( D\colon I \rightarrow \cat C \) in a category $\C$ is
\emph{directed} if its scheme~$I$ is a directed poset: every finite
subset of $I$ has an upper bound. A \emph{directed colimit} is a
colimit of a directed diagram. An object $X$ of $\C$ is \emph{finitely
  presentable} if its hom-functor $\C(X,-)\colon \C\to \Set$ to the
category of sets and functions preserves directed colimits. In many
categories, finitely presentable objects correspond to the
objects with a finite description. For example, the finitely
presentable objects of $\Set$ are precisely finite sets, and if $\C$
is a variety of algebras (e.g.~monoids, groups, rings), an algebra is
a finitely presentable object of $\C$ iff it is presentable by
finitely many generators and
relations~\cite[Thm.~3.12]{adamek_rosicky_1994}.

\mypar{Nominal (renaming) sets.}
We let $\Nom$ denote the category of nominal sets and
$\Perm(\At)$-equivariant maps, and $\RnNom$ the category of nominal
renaming sets and $\Fin(\At)$-equi\-vari\-ant maps. Both categories
are toposes, that is, they are finitely complete (with limits formed
as in $\Set$), cartesian closed, and admit a subobject
classifier. Note that $\Nom$ is a boolean topos (its subobject
classifier is $2=\{0,1\}$ with the trivial group action), which is not
true for $\RnNom$~\cite[Sec.~5]{GH08}. The next proposition provides a
categorical description of orbit-finite nominal (renaming) sets; for
nominal sets this result is well-known,
see~\cite[Prop.~2.3.7]{Petrisan2012} or~\cite[Thm.~5.16]{Pitts2013},
and the statement for nominal renaming sets may be deduced from it.
\begin{proposition}\label{prop:orbit-finite-vs-fp}
A nominal (renaming) set is orbit-finite iff it is a finitely presentable object of $\Nom$ or $\RnNom$, respectively.
\end{proposition}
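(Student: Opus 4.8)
The plan is to prove both directions of the equivalence by constructing explicit directed colimits and checking the universal property of finite presentability. Recall the statement: an orbit-finite nominal (renaming) set is exactly a finitely presentable object of $\Nom$ (respectively $\RnNom$), and conversely.

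\smallskip

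\noindent\textbf{Orbit-finite implies finitely presentable.} First I would handle the forward direction for $\Nom$. Let $X$ be orbit-finite. I must show $\Nom(X,-)$ preserves directed colimits. Given a directed diagram $D\colon I\to\Nom$ with colimit $C=\colim_i D(i)$ and colimit injections $c_i\colon D(i)\to C$, I need that every equivariant map $f\colon X\to C$ factors through some $c_i$, and that such factorizations are essentially unique. The key structural input is the representation recalled in \autoref{sec:nominal-sets}: there is an equivariant surjection $e\colon Y\epito X$ with $Y=J\times\At^{\#n}$ for a finite set $J$ and $n=\max_{x\in X}|\supp x|$. Since $\At^{\#n}$ is a single orbit, an equivariant map out of $J\times\At^{\#n}$ is determined by the finitely many images of the elements $(j,a_1\cdots a_n)$ for a fixed tuple of distinct names and $j\in J$. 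Because the diagram is directed and $J$ is finite, all these finitely many images live in the image of a single $c_i$, so $f\circ e$ factors through some $c_i$; then one descends the factorization along the surjection $e$ (using that $e$ is an epimorphism and that equivariant maps preserve support) to factor $f$ itself. Uniqueness follows because two factorizations agreeing after postcomposition with a colimit injection already agree at an earlier stage, again by directedness and finiteness of the generating data.

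\smallskip

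\noindent\textbf{Finitely presentable implies orbit-finite.} For the converse I would write an arbitrary nominal set $X$ as the directed union of its orbit-finite equivariant subsets: set $I=\{\,X'\seq X : X'\text{ equivariant and orbit-finite}\,\}$, ordered by inclusion. This poset is directed since the union of two orbit-finite equivariant subsets is again orbit-finite and equivariant, and $X=\bigcup_{X'\in I}X'$ because each element generates (with its orbit, which is finite as every element has finite support) an orbit-finite equivariant subset. This exhibits $X$ as the directed colimit of its orbit-finite subobjects. If $X$ is finitely presentable, then $\id_X\colon X\to X$ must factor through one of the colimit injections $X'\hookto X$, forcing $X=X'$ to be orbit-finite.

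\smallskip

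\noindent\textbf{The renaming case and the main obstacle.} The two arguments above transfer to $\RnNom$ with the same skeleton, but the forward direction is where the real work lies, and it is the step I expect to be the main obstacle. In $\RnNom$ the relevant generating object is $Y=J\times\At^{\#n}$ regarded with its $\Fin(\At)$-action, and one must check that a $\Fin(\At)$-equivariant map out of it is still determined by finitely much data and that the descent along $e$ respects the (possibly non-injective) renaming action rather than merely the permutation action. Concretely, the subtlety is that colimits in $\RnNom$ and the support-based bookkeeping must be reconciled with non-injective renamings, so that the factorization produced for the underlying nominal sets is automatically $\Fin(\At)$-equivariant. The proposition already hints that the $\RnNom$ statement ``may be deduced'' from the $\Nom$ case, so the cleanest route is probably to exploit the forgetful functor $\RnNom\to\Nom$ (which preserves least supports, as noted in \autoref{sec:nominal-sets}) together with its adjoint, transporting the finite-presentability argument across the adjunction; verifying that this adjunction preserves directed colimits and orbit-finiteness in the needed directions is the technical crux.
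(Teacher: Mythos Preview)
Your overall skeleton matches the paper's: write $X$ as a directed union of orbit-finite equivariant subsets for the converse, and use finite presentability in $\Nom$ as the base case for $\RnNom$. The paper in fact cites the $\Nom$ statement from the literature and concentrates entirely on deducing the $\RnNom$ case from it. Two corrections are needed on your side. First, $\At^{\#n}$ carries no $\Fin(\At)$-action (non-injective renamings destroy distinctness), so ``$J\times\At^{\#n}$ with its $\Fin(\At)$-action'' is not a thing; the free renaming set on $\At^{\#n}$ is $\At^n$. Relatedly, for the converse in $\RnNom$ the $\Fin(\At)$-closure of a single element is \emph{not} a single $\Perm(\At)$-orbit (e.g.\ the $\Fin(\At)$-closure of $(a,b)\in\At^2$ with $a\neq b$ is all of $\At^2$), so its orbit-finiteness needs an argument: the paper extends the map $\At^{\#m}\to X$ picking out $x$ along $F\dashv U$ to a surjection $\At^m\epito\{\rho\cdot x:\rho\in\Fin(\At)\}$.

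The real gap is the forward direction in $\RnNom$. ``Transport along the adjunction'' does not directly work, since an arbitrary orbit-finite renaming set need not lie in the image of $F$. What the paper does, and what you are missing, is a concrete upgrade argument: given a directed colimit $(c_i\colon C_i\to C)$ in $\RnNom$ and $\Fin(\At)$-equivariant $f\colon X\to C$, use finite presentability of $X$ in $\Nom$ (valid since $U$ preserves directed colimits) to get a $\Perm(\At)$-equivariant factorization $f=c_i\circ g$. Now $g$ need not be $\Fin(\At)$-equivariant, but one can force it by moving to a higher index. Fix orbit representatives $x_1,\ldots,x_n$ all supported in some $\{a_1,\ldots,a_d\}$, and enumerate the finitely many renamings $\rho_1,\ldots,\rho_k$ sending $\{a_1,\ldots,a_d\}$ into itself. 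The finitely many equations $g(\rho_s\cdot x_r)=\rho_s\cdot g(x_r)$ all hold after postcomposing with $c_i$ (since $f$ and $c_i$ are $\Fin(\At)$-equivariant), hence by directedness they hold simultaneously at some $j\geq i$. A short computation then shows these test cases, together with $\Perm(\At)$-equivariance of $g$, imply full $\Fin(\At)$-equivariance: any renaming restricted to $\supp x_r$ can be conjugated by a permutation into one of the $\rho_s$. This finite combinatorial step is the crux you correctly anticipated but did not supply.
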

The forgetful functor $U\colon \RnNom\to \Nom$ given by restricting
the $\Fin(\At)$- to a $\Perm(\At)$-action has a left adjoint
$F\colon \Nom\to\RnNom$~\cite[Thm.~2.6]{mr20}. We refer to
\emph{op.~cit.} for its explicit description, but remark that
$F(\At^{\#n})=\At^n$ for every $n\in \Nat$~\cite[Thm.~3.7]{mr20}.

\mypar{Presheaves.} A
\emph{(covariant) presheaf} over a small category $\C$ is a functor
$P\colon \C\to \Set$. We write $\Set^\C$ for the category of
presheaves and natural transformations. We specifically consider
presheaves over $\FF$ and $\II$, the categories whose objects are
finite subsets $S\seq_\f\At$ and whose morphisms $\rho\colon S\to T$
are functions or injective functions, respectively. The categories
$\Nom$ and $\RnNom$ form full reflective subcategories of $\Set^\II$
and $\Set^\FF$ via embeddings
\[
  I_\star\colon \Nom \monoto \Set^\II
  \qquad\text{and}\qquad
  J_\star\colon \RnNom\monoto \Set^\FF.
\]
Here, $I_\star$ is given for $X\in \Nom$, $S\seq_\f \At$, $\rho\colon S
\to T$ in $\II$ and $f\colon X\to Y$ in $\Nom$ by
\[ (I_\star X)S = \{\, x\in X : \supp x \seq S \,\},\qquad (I_\star
  X)\rho(x) = \ol{\rho}\cdot x,\qquad (I_\star f)_S(x) = f(x), \]
where $\ol{\rho}\in \Perm(\At)$ is any permutation extending the
injective map $\rho$. The embedding $J_\star$ is defined
analogously. In both cases, the essential image of the embedding
consists precisely of the presheaves preserving pullbacks of injective
maps, see~\cite[Thm. 6.8]{Pitts2013} and
\cite[Thm. 38]{GH08}. Informally, a presheaf $P\in\Set^\C$, where
$\C\in \{\II,\FF\}$, specifies a set $PS$ of $S$-supported objects for
every $S\seq_\f \At$, and the pullback preservation property asserts
precisely that every object admits a least support. A presheaf
$P\in \Set^\C$ is \emph{super-finitary} if there exists $S\seq_\f \At$
such that (i)~$PS'$ is a finite set for all $S'\seq S$, and (ii)~for
every $T\seq_\f \At$ and $x\in PT$, there exists $S'\seq S$ and
$\rho\in\C(S',T)$ such that $x\in P\rho[PS']$. (This implies that
$PT$ is finite.) Such an~$S$ is called a \emph{generating set} for
$P$. The next proposition shows that super-finitary presheaves are the
analogue of orbit-finite sets; see \cite[Cor.~3.34]{amsw19_1} for the
case $\C=\FF$:
\begin{figure}
\begin{tabular}{lllll}
\textbf{Category} & \textbf{Objects} & \textbf{Least supp.} & \textbf{Renamings} & \textbf{Finitely pres.\ objects}\\
\hline
$\Nom$ & nominal sets & yes & injective & orbit-finite sets\\
$\RnNom$ & nominal renaming sets & yes & arbitrary & orbit-finite sets\\
$\Set^\II$ & presheaves over $\II$ & no & injective & super-finitary presheaves\\
$\Set^\FF$ & presheaves over $\FF$ & no & arbitrary & super-finitary presheaves
\end{tabular}
\caption{Toposes that model sets of finitely supported objects}\label{fig:categories}
\end{figure}
\begin{proposition}\label{prop:super-finitary}
  For $\C\in \{\II,\FF\}$ and $P\in \Set^\C$, the following are
  equivalent: (i)~$P$ is super-finitary; (ii)~$P$ is finitely
  presentable; (iii)~there exists an epimorphism (a componentwise
  surjective natural transformation)
  $\coprod_{i\in I} \C(S_i,-)\epito P$ with $I$ finite and
  $S_i\seq_\f \At$. Moreover, super-finitary presheaves are closed
  under sub-presheaves and finite products.
\end{proposition}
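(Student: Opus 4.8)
The plan is to prove the equivalence along the cycle (ii) $\Rightarrow$ (i) $\Rightarrow$ (iii) $\Rightarrow$ (i) and then to close it with (i) $\Rightarrow$ (ii); the two closure properties are established alongside and feed into the last implication. I use throughout that (co)limits in $\Set^\C$ are computed pointwise, together with two standing facts: by the Yoneda lemma the hom-functor $\Set^\C(\C(S,-),-)$ is isomorphic to evaluation at $S$, which preserves directed colimits, so every representable $\C(S,-)$ is finitely presentable; and finite coproducts (more generally finite colimits) of finitely presentable objects are again finitely presentable.

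For (ii) $\Rightarrow$ (i) I write $P$ as the directed union of its finitely generated sub-presheaves; since $P$ is finitely presentable, $\id_P$ factors through one of them, so $P$ is generated by finitely many elements $x_i\in PS_i$. Putting $S=\bigcup_i S_i$, condition~(ii) is immediate, and condition~(i) holds because every element of $PS'$ has the form $(P\rho)(x_i)$ with $\rho\in\C(S_i,S')$, a finite set. For (i) $\Rightarrow$ (iii) I take the finitely many pairs $(S',x)$ with $S'\seq S$ and $x\in PS'$ and assemble the corresponding Yoneda maps into $e\colon\coprod_{(S',x)}\C(S',-)\to P$; condition~(ii) says exactly that each $e_T$ is surjective. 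Conversely, for (iii) $\Rightarrow$ (i) I set $S=\bigcup_i S_i$ and read off from naturality of the epimorphism that every $y\in PT$ lies in $(P\rho)[PS_i]$ for the generator $x_i\in PS_i$ corresponding to the $i$-th summand, giving~(ii), while finiteness of the hom-sets $\C(S_i,S')$ gives~(i).

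The remaining implication (i) $\Rightarrow$ (ii) proceeds by exhibiting a finite presentation. From (iii) I obtain an epimorphism $e\colon G\epito P$ with $G=\coprod_i\C(S_i,-)$ finitely presentable; since $\Set^\C$ is a topos, $e$ is the coequalizer of its kernel pair $K\rightrightarrows G$. Now $K$ is a sub-presheaf of $G\times G$, which is super-finitary by closure under finite products, so $K$ is super-finitary by closure under sub-presheaves, and hence (by the already proved (i) $\Rightarrow$ (iii)) admits an epimorphism $u\colon R\epito K$ with $R$ a finite coproduct of representables. As $u$ is epic, $P$ is also the coequalizer of the pair $R\rightrightarrows G$, and being a finite colimit of finitely presentable objects it is finitely presentable.

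It remains to prove the two closure properties used above. Closure under finite products is direct: if $S,S'$ generate $P,P'$, then a fixed finite set containing disjoint copies of $S$ and $S'$ generates $P\times P'$, since a pair $(x,x')\in PT\times P'T$ is recovered from the disjoint union of its two witnessing generators along the map combining the two witnessing renamings. Closure under sub-presheaves is the main obstacle. I reduce it, using extensivity of $\Set^\C$ and stability of epimorphisms under pullback, to the case of a sub-presheaf $P'$ of a single representable: pulling an epimorphism $\coprod_i\C(S_i,-)\epito P$ back along $P'\monoto P$ covers $P'$ by a sub-presheaf of $\coprod_i\C(S_i,-)$, which by extensivity splits as a coproduct of sub-presheaves of the summands, and an epimorphic image of a super-finitary object satisfies (iii). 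For a sub-presheaf $P'$ of $\C(S,-)$ I analyse the morphisms out of $S$: in $\FF$ two such morphisms are related by post-composition exactly according to their kernels, so $P'$ is generated by one quotient map for each of the finitely many minimal kernels occurring in it; in $\II$ all morphisms are injective and a member of minimal codomain generates every member of larger codomain, so $P'$ is finitely generated. In both cases the generators have codomains of size at most $|S|$, so $P'$ is super-finitary. The delicate point—and the reason $\FF$ and $\II$ must be handled separately—is precisely this last analysis, together with verifying in $\FF$ that the minimal quotient witness can be pulled back into $P'$ via a retraction of the associated split mono, a step that is unavailable (but also unnecessary) in $\II$.
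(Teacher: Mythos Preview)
Your proof follows essentially the same strategy as the paper: establish the equivalence of super-finitary with being a quotient of a finite coproduct of representables, prove closure under products and sub-presheaves, and use the closure properties (specifically, closure under kernel pairs) to obtain finite presentability via a coequalizer argument. The paper phrases the last step as an instance of a general lemma (in a locally finitely presentable category where strong and regular epimorphisms coincide and finitely generated objects are closed under kernel pairs, finitely presentable and finitely generated objects coincide), but your direct argument is exactly the content of that lemma unpacked. Your treatment of sub-presheaves of representables in $\FF$ via minimal kernels is a pleasant variant of the paper's argument, which instead shows directly that the source set $S$ itself generates.

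There is, however, a gap in your argument for closure under finite products in the case $\C=\II$. You recover a pair $(x,x')\in PT\times P'T$ ``along the map combining the two witnessing renamings'', i.e.\ along the copairing $[\rho,\rho']\colon S_0\sqcup S'_0\to T$ where $\rho\colon S_0\to T$ and $\rho'\colon S'_0\to T$ are the individual witnesses. In $\II$ the map $[\rho,\rho']$ need not be injective (the images of $\rho$ and $\rho'$ may overlap in $T$), so it is not a morphism of $\II$ at all. The fix is to factor through the union of images: put $U=\rho[S_0]\cup\rho'[S'_0]\subseteq T$, choose $V\subseteq S\sqcup S'$ with $|V|=|U|$ and a bijection $\pi\colon V\to U$, and use the injection $V\xrightarrow{\pi} U\hookrightarrow T$ instead; then $(P(\pi^{-1}\circ\rho_0)(y),\,P'(\pi^{-1}\circ\rho'_0)(y'))\in (P\times P')(V)$ is the desired preimage, where $\rho_0,\rho'_0$ are the corestrictions to $U$. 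The paper takes the equivalent route of first proving that a product of two representables is super-finitary. A minor related slip: your claim that ``in both cases the generators have codomains of size at most $|S|$'' is false in $\II$, since the minimal $T_0$ with $P'(T_0)\neq\emptyset$ satisfies $|T_0|\geq|S|$; but this does not affect the argument, as you have already established (iii)$\Rightarrow$(i).
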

To relate the two presheaf categories $\Set^\II$ and $\Set^\FF$,
recall that every functor $E\colon \C\to \D$ between small categories
induces an adjunction \eqref{eq:adjuction-e-star}, where the right
adjoint $E^\star$ is given by $E^\star(P)=P\circ E$, and the left
adjoint sends a presheaf $P\in \Set^\C$ to its \emph{left Kan
  extension} $\Lan_E P$. For the inclusion functor
$E\colon \II\hookto \FF$, we obtain the commutative diagram
\eqref{eq:adjunctions} of adjunctions. Here, $I^\star$ and $J^\star$
are the reflectors, i.e.~the left adjoints of $I_\star$ and $J_\star$.

\begin{minipage}{.35\textwidth}
\begin{equation}\label{eq:adjuction-e-star}
\begin{tikzcd}
\Set^\C \ar[yshift=-5]{r}{\top}[swap]{\Lan_E} & \Set^\D \ar[yshift=5]{l}[swap]{E^\star} 
\end{tikzcd}
\end{equation}
\end{minipage}
\qquad\qquad
\begin{minipage}{.4\textwidth}
\begin{equation}\label{eq:adjunctions}
\begin{tikzcd}
\Set^\II \ar[yshift=-5]{r}{\top}[swap]{\Lan_E} \ar[xshift=5]{d}{I^\star}[swap]{\vdash} & \Set^\FF \ar[yshift=5]{l}[swap]{E^\star} \ar[xshift=-5]{d}[swap]{J^\star} \\
\Nom \ar[xshift=-5]{u}{I_\star} \ar[yshift=5]{r}{F} & \RnNom \ar[yshift=-5]{l}{U}[swap]{\bot} \ar[xshift=5]{u}{\dashv}[swap]{J_\star}
\end{tikzcd}
\end{equation}
\end{minipage}

\begin{proposition}\label{prop:fp-presheaves}
All functors in \eqref{eq:adjunctions} preserve finitely presentable objects.
\end{proposition}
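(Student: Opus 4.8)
The plan is to treat the eight functors of \eqref{eq:adjunctions} according to whether they are left or right adjoints, relying throughout on two closure properties of finitely presentable (fp) objects. First, in each of the four categories the fp objects are closed under finite coproducts (since $\C(\coprod_{i\in F}X_i,-)\cong\prod_{i\in F}\C(X_i,-)$ and finite products commute with directed colimits in $\Set$) and under epimorphic quotients: in $\Set^\II$ and $\Set^\FF$ this follows by composing an epi $P\epito Q$ with a representable cover $\coprod_{i\in I}\C(S_i,-)\epito P$ from \autoref{prop:super-finitary}, while in $\Nom$ and $\RnNom$ it holds because an equivariant surjection has no more orbits than its domain (\autoref{prop:orbit-finite-vs-fp}). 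Second, every fp object is the codomain of an epimorphism out of a finite coproduct of \emph{generators}: representables $\II(S,-)$ resp.\ $\FF(S,-)$ in the presheaf categories (\autoref{prop:super-finitary}), and the strong sets $\At^{\#n}$ in $\Nom$ (the strong covers recalled in \autoref{sec:nominal-sets}).

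For the four left adjoints $\Lan_E$, $I^\star$, $J^\star$ and $F$ I would argue uniformly. Being left adjoints they preserve finite coproducts and epimorphisms, so by the closure properties above it suffices to check that each sends generators to fp objects; then, applying the functor to a generator cover $\coprod_i g_i\epito P$ of an arbitrary fp object yields an epimorphism from a finite coproduct of fp objects onto its image, which is therefore fp. For $\Lan_E$ this is immediate, since left Kan extension carries the representable $\II(S,-)$ to the representable $\FF(S,-)$ (as $E$ is the identity on objects). For $F$ it is the stated identity $F(\At^{\#n})=\At^n$, and $\At^n$ is orbit-finite, hence fp in $\RnNom$. For the reflectors $I^\star,J^\star$ I would use the explicit embeddings: unfolding $(I_\star\At^{\#n})S=\{x\in\At^{\#n}:\supp x\seq S\}$ identifies $I_\star(\At^{\#n})$ with the representable $\II(S_0,-)$ for any $n$-element $S_0\seq_\f\At$, and likewise $J_\star(\At^n)\cong\FF(S_0,-)$; since $I_\star,J_\star$ are fully faithful, $I^\star(\II(S,-))\cong I^\star I_\star(\At^{\#|S|})\cong\At^{\#|S|}$ and $J^\star(\FF(S,-))\cong\At^{|S|}$, both orbit-finite and thus fp.

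The right adjoints $E^\star$, $I_\star$, $J_\star$, $U$ need separate treatment, because they need not preserve epimorphisms: for instance the first projection $\At^{\#2}\epito\At$ becomes non-surjective after applying $I_\star$ at $S=\{a\}$. The functor $U$ is trivial, since a nominal renaming set is by definition orbit-finite exactly when its underlying nominal set is, so $U$ preserves orbit-finiteness, i.e.\ fp objects (\autoref{prop:orbit-finite-vs-fp}). For $I_\star$, $J_\star$, $E^\star$ I would instead verify the super-finitary conditions of \autoref{prop:super-finitary} directly. For orbit-finite $X$, take as generating set of $I_\star X$ (resp.\ $J_\star X$) any fixed $S\seq_\f\At$ with $|S|=\max_{x\in X}|\supp x|$; condition~(i) holds because an orbit-finite set has only finitely many elements supported inside a fixed finite set, and condition~(ii) follows by relabelling: any $x$ with $\supp x\seq T$ satisfies $|\supp x|\le|S|$, so one picks $S'\seq S$ with $|S'|=|\supp x|$, transports $x$ to an $S'$-supported $x'$ by a permutation, and takes $\rho\colon S'\to T$ the inverse relabelling, which is injective and hence a morphism of both $\II$ and $\FF$.

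I expect the last case — showing $E^\star$ preserves super-finitariness — to be the main obstacle, since passing from $\FF$ to $\II$ discards exactly the non-injective renamings that super-finitariness over $\FF$ may rely on. Fixing a generating set $S$ of $P\in\Set^\FF$, condition~(i) is inherited because $E$ is the identity on objects, but for condition~(ii) one must replace the possibly non-injective witness $\sigma\colon S_1\to T$ for generation of $x\in PT$ by an \emph{injective} $\II$-morphism. The idea is to factor $\sigma$ through its image as $S_1\twoheadrightarrow\sigma(S_1)\hookrightarrow T$; since $|\sigma(S_1)|\le|S_1|\le|S|$, the image can be relabelled bijectively onto some $S'\seq S$, and composing this relabelling with the inclusion produces an injective $\rho\colon S'\to T$ together with an element of $PS'$ generating $x$ over $\II$. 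This image-factorisation-and-relabelling is the crux of the proposition; the remaining verifications are routine bookkeeping with supports and the explicit formulas for $I_\star$, $J_\star$, and $\Lan_E$.
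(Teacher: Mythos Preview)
Your argument is correct. For the four right adjoints $E^\star$, $I_\star$, $J_\star$, $U$ you proceed essentially as the paper does: direct verification of the super-finitary conditions with the same generating set (any $S$ of size $\max_{x\in X}|\supp x|$), the same image-factorisation trick for $E^\star$, and the trivial observation for $U$.

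The treatment of the left adjoints, however, follows a genuinely different route. The paper invokes the abstract lemma that a left adjoint preserves fp objects whenever its right adjoint preserves directed colimits, and then checks that $E^\star$, $U$, $I_\star$, $J_\star$ preserve directed colimits (the last two via an explicit argument about colimit cocones in $\Nom$). You instead exploit the concrete presentation of fp objects: generate by representables resp.\ strong sets, push a generator cover through the left adjoint (which preserves coproducts and epis), and identify the image of each generator. This requires the additional observations $I_\star(\At^{\#n})\cong\II(S_0,-)$ and $J_\star(\At^n)\cong\FF(S_0,-)$ for $|S_0|=n$, combined with the counit isomorphism $I^\star I_\star\cong\id$, $J^\star J_\star\cong\id$ for fully faithful right adjoints. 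Both lines are standard; yours is more explicit and reuses the representable description you already need for the right adjoints, while the paper's is shorter once the directed-colimit preservation is established and avoids computing the reflectors on generators at all.
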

Hence, the adjunctions \eqref{eq:adjunctions} restrict to the full subcategories of finitely presentable objects.

\section{Nondeterministic Automata in a Category}\label{sec:naut-cat}
Our aim is to investigate nondeterministic automata and their languages in the toposes of \autoref{fig:categories},
and to compare their expressive power. To this end, we first introduce the required automata-theoretic concepts
uniformly at the level of abstract categories.

\begin{assumptions}\label{ass:cats}
  Fix a category $\C$ with finite limits and (strong epi,
  mono)-factorizations.  We assume that strong epimorphisms are stable
  under finite products (that is, $e\times e'$ is a strong epimorphism
  whenever $e$ and $e'$ are) and pullbacks (that is, in every pullback
  square $e \circ \ol{f} = f\circ \ol{e}$, the morphism $\ol{e}$ is a
  strong epimorphism whenever $e$ is).
\end{assumptions} 

The (strong epi, mono)-factorization
$f= (
\begin{tikzcd}[cramped]
  A \ar[->>]{r}{e~~} & I \ar[>->]{r}{m} & B
\end{tikzcd})
$ of a morphism $f\colon A \to B$ in
$\C$ is its \emph{image factorization}, and the subobject represented
by $m$ is the \emph{image} of $f$.

\begin{example}\label{ex:categories}
  Every topos satisfies \autoref{ass:cats}, including $\Set$, $\Nom$,
  $\RnNom$, $\Set^\II$ and $\Set^\FF$. Note that in a topos all
  epimorphisms are strong. In the five categories above, epi- and
  monomorphisms are the (componentwise) surjective and injective
  morphisms, resp. Pullbacks and finite products are formed
  (componentwise) at the level of underlying sets.
\end{example}

\begin{defn}\label{def:language-cat}
A \emph{language} over $\Sigma\in \C$ is a family of subobjects of $\Sigma^n$ for each $n\in \Nat$:
\[L\;=\;(\,m_n^{(L)}\colon L^{(n)}\monoto \Sigma^n\,)_{n\in\Nat}.\]
We write $L\leq L'$ iff $L^{(n)}\leq L'^{(n)}$ for all $n$, using the partial order $\leq$ on subobjects of $\Sigma^n$.
\end{defn}

\begin{rem}\label{rem:extensive}
If $\C$ is countably extensive (e.g.~a topos with countable coproducts), languages correspond bijectively to subobjects of $\Sigmas = \coprod_{n\in\Nat} \Sigma^n$. Indeed, every language $L$ yields the subobject $\coprod_n m_n^{(L)}\colon \coprod_n L^{(n)}\monoto \Sigmas$, and conversely every subobject of $\Sigmas$ is of this form. In particular, this holds in the categories of \autoref{ex:categories}.
\end{rem}

\begin{defn}
A \emph{nondeterministic $\C$-automaton} is a quintuple $A=(Q,\Sigma,\delta,I,F)$ consisting of an object $Q\in \C$ of \emph{states}, an \emph{input alphabet} $\Sigma\in \C$, and subobjects
\[ m_\delta\colon \delta\monoto Q\times\Sigma\times Q,\qquad m_I\colon I\monoto Q,\qquad m_F\colon F\monoto Q, \]
representing \emph{transitions}, \emph{initial states}, and \emph{final states}, respectively. A \emph{morphism} $h\colon A'\to A$ of nondeterministic $\C$-automata is given by a pair of morphisms $h_\states\colon Q'\to Q$ and $h_\al\colon \Sigma'\to \Sigma$ of $\C$ that restrict as shown below (note that $h_\trans$, $h_\init$ and $h_\final$ are uniquely determined):
\begin{equation}\label{eq:naut-morphism}
\begin{tikzcd}[column sep=4.5em]
\delta' \ar[dashed]{r}{h_\trans} \ar[rightarrowtail]{d}[swap]{m_{\delta'}} & \delta \ar[rightarrowtail]{d}{m_\delta} \\
Q'\times \Sigma'\times Q' \ar{r}{h_\states\times h_\al \times h_\states} & Q\times \Sigma\times Q
\end{tikzcd}
\qquad
\begin{tikzcd}[column sep=3em]
I' \ar[dashed]{r}{h_\init} \ar[rightarrowtail]{d}[swap]{m_{I'}} & I \ar[rightarrowtail]{d}{m_I} \\
Q' \ar{r}{h_\states} & Q
\end{tikzcd}
\qquad
\begin{tikzcd}[column sep=3em]
F' \ar[dashed]{r}{h_\final} \ar[rightarrowtail]{d}[swap]{m_{F'}} & F \ar[rightarrowtail]{d}{m_F} \\
Q' \ar{r}{h_\states} & Q
\end{tikzcd}
\end{equation}
We write $\NAut(\C)$ for the category of nondeterministic automata in $\C$ and their morphisms, and $\NAutfp(\C)$ for its full subcategory given by \emph{nondeterministic fp-automata}, viz.\ automata where $Q$, $\Sigma$, $\delta$, $I$, $F$ are finitely presentable objects of $\C$. 
\end{defn}

\begin{defn}\label{def:naut-acc-lang}
For every nondeterministic $\C$-automaton $A=(Q,\Sigma,\delta,I,F)$, its \emph{accepted language} is the language $L(A)$ over $\Sigma$ given as follows:
\begin{enumerate}
\item $m_{L(A)}^{(0)}\colon L^{(0)}(A)\monoto 1=\Sigma^0$ is the image
  of the unique morphism $I\cap F\xto{!} 1$, where $1$ is the terminal
  object of $\C$ and $I\cap F$ is the intersection (pullback) of $m_I$
  and $m_F$.
\item For $n>0$, the subobject
  $m_{L(A)}^{(n)}\colon L^{(n)}(A)\monoto \Sigma^n$ is defined via the
  commutative diagram
\[
\begin{tikzcd}
L^{(n)}(A) \ar[rightarrowtail]{d}[swap]{m_{L(A)}^{(n)}} & \Run^{(n)}_A  \pullbackangle{-45} \ar[two heads]{l}[swap]{e_{n,A}} \ar[rightarrowtail]{r}{\ol{d}_{n,A}} \ar[rightarrowtail]{d}[pos=.7]{\ol{m}_\delta^{(n)}} & \delta^n \ar[rightarrowtail]{d}{m_\delta^n} &\\
\Sigma^n & I\times (\Sigma \times Q)^{n-1}\times \Sigma \times F \ar{l}[swap]{p_{n,A}} \mar{r}{d_{n,A}} & (Q\times \Sigma\times Q)^n 
\end{tikzcd}
\]
Here, letting $\Delta\colon Q\monoto Q\times Q$ denote the diagonal, $d_{n,A}$ is the monomorphism
\[ I\times (\Sigma \times Q)^{n-1}\times \Sigma \times F \xto{m_I\times (\id\times \Delta)^{n-1}\times \id\times m_F} Q\times (\Sigma \times Q\times Q)^{n-1}\times \Sigma\times Q \cong (Q\times \Sigma \times Q)^n,  \]
the morphisms $\ol{d}_{n,A}$ and $\ol{m}_{\delta}^{(n)}$ form the pullback of $d_{n,A}$ and $m_\delta^n$, the morphism $p_{n,A}$ is the projection, and $e_{n,A}$ and $m_{L(A)}^{(n)}$ form the image factorization of $p_{n,A}\circ \ol{m}_\delta^{(n)}$.
\end{enumerate}
\end{defn}

\begin{example}\label{ex:automata}
\begin{enumerate}
\item\label{ex:automata-1} A nondeterministic fp-automaton in $\Set$
  is a classical nondeterministic finite automaton. The pullback
  $\Run_A^{(n)}$ is the set of accepting runs of length $n$,
  hence~$L(A)$ is the usual accepted language: the set of words with
  an accepting run.
  
\item\label{ex:automata-2} A nondeterministic fp-automaton in $\Nom$
  or $\RnNom$ with alphabet $\Sigma=\At$ is a NOFA or NOFRA,
  respectively. The two notions of accepted language in \autoref{def:nofa} and
  \autoref{def:naut-acc-lang} match, that is,
  $L(A)$ is the set of words with an accepting run.
  
\item In the next section, we will also look into nondeterministic
  $\Set^\II$- and $\Set^\FF$-automata.
\end{enumerate}
\end{example}

\begin{rem}
  Readers familiar with coalgebras~\cite{rutten00} may note that if
  $\C$ is a topos, the final states and transitions of a
  nondeterministic $\C$-automaton correspond to a coalgebra
  $\gamma \colon Q\to \Omega\times (\Pow Q)^\Sigma$ where $\Omega$ is
  the subobject classifier and $\Pow\colon \C\to\C$ is the covariant
  power object functor~\cite[Sec.~A.2.3]{johnstone02}.
  We expect our above definition of accepted language to match the one
  given by coalgebraic trace semantics~\cite{HasuoEA07,SilvaEA13},
  with the required arguments relying on the internal logic of the
  topos $\C$. Details are left for future work; we have found that the
  present relational approach to automata leads to shorter and more
  direct proofs.
\end{rem}

\begin{proposition}\label{prop:epi-pres-lang}
  Let $h\colon A'\to A$ be an $\NAut(\C)$-morphism where
  $\Sigma'=\Sigma$ and $h_\al=\id_{\Sigma}$.
  \begin{enumerate}
  \item The accepted language of $A'$ is contained in that of $A$,
    that is, $L(A')\leq L(A)$.
    
  \item If $h_\states$ is strongly epic in $\C$ and the squares
    \eqref{eq:naut-morphism} are pullbacks, then $L(A')=L(A)$.
\end{enumerate}
\end{proposition}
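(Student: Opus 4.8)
The plan is to prove both parts by analyzing the image-factorization construction of \autoref{def:naut-acc-lang}, tracking how the morphism $h$ relates the relevant pullbacks and images at each level $n$. The key observation is that, since $h_\al=\id_\Sigma$, the alphabet powers $\Sigma^n$ agree for $A$ and $A'$, so it makes sense to compare the subobjects $L^{(n)}(A')$ and $L^{(n)}(A)$ of the \emph{same} object $\Sigma^n$. First I would set up, for each $n$, the comparison between the two ``accepting run'' diagrams. The automaton morphism induces morphisms $h_\trans$, $h_\init$, $h_\final$ making the squares \eqref{eq:naut-morphism} commute, and taking products yields morphisms $\delta'^n\to\delta^n$ and a morphism on the ``run shape'' object $I'\times(\Sigma\times Q')^{n-1}\times\Sigma\times F'\to I\times(\Sigma\times Q)^{n-1}\times\Sigma\times F$ that is compatible with the projections $p_{n,A'}$, $p_{n,A}$ to $\Sigma^n$ and with the diagonals in the definition of $d_{n,A}$. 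The upshot of this bookkeeping is a morphism $\Run^{(n)}_{A'}\to\Run^{(n)}_A$ commuting with everything in sight, since pullbacks are functorial.

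For part~(1), I would argue that this induced morphism on runs, combined with the fact that $p_{n,A'}$ and $p_{n,A}$ are compatible over $\Sigma^n$, shows that the image of $p_{n,A'}\circ\ol{m}_{\delta'}^{(n)}$ is contained in the image of $p_{n,A}\circ\ol{m}_{\delta}^{(n)}$. Concretely, the composite $\Run^{(n)}_{A'}\to\Run^{(n)}_A\xto{p_{n,A}\circ\ol{m}_\delta^{(n)}}\Sigma^n$ equals $p_{n,A'}\circ\ol{m}_{\delta'}^{(n)}$, so the first map factors through the second; since images are the smallest subobjects through which a morphism factors, this gives $L^{(n)}(A')\leq L^{(n)}(A)$ as subobjects of $\Sigma^n$. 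The $n=0$ case is handled separately but identically: $h$ restricts to a morphism $I'\cap F'\to I\cap F$, and both map to $1$, so the image inclusion follows. This yields $L(A')\leq L(A)$.

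For part~(2), under the extra hypotheses that $h_\states$ is strongly epic and the squares \eqref{eq:naut-morphism} are pullbacks, I would upgrade the containment to an equality by showing the induced map $\Run^{(n)}_{A'}\to\Run^{(n)}_A$ is itself a strong epimorphism. The strategy is to build this map as a composite/pullback of strong epimorphisms, invoking the stability assumptions of \autoref{ass:cats}: strong epis are stable under finite products (so $h_\states^{n}$-type maps and their products with identities remain strongly epic) and under pullback. The hypothesis that \eqref{eq:naut-morphism} are pullbacks is precisely what forces the transition, initial, and final subobjects of $A'$ to be the \emph{full} preimages of those of $A$, which is what lets the run-object morphism inherit strong epicness rather than merely being a monomorphism-into-image. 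Once $\Run^{(n)}_{A'}\epito\Run^{(n)}_A$ is strongly epic, the two image factorizations of $p_{n,A'}\circ\ol{m}_{\delta'}^{(n)}$ and $p_{n,A}\circ\ol{m}_\delta^{(n)}$ must coincide, because a strong epi followed by the same cospan to $\Sigma^n$ produces the same image; hence $L^{(n)}(A')=L^{(n)}(A)$ for all $n$.

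The main obstacle I expect is the careful verification that $\Run^{(n)}_{A'}\to\Run^{(n)}_A$ is strongly epic. This requires decomposing the ``run shape'' pullback into a tower of pullbacks and products, and at each stage applying the correct stability property from \autoref{ass:cats}; in particular one must check that pullback-stability of strong epis applies to the squares formed when intersecting the run-shape object with $\delta^n$ (via $\ol{m}_\delta^{(n)}$ and $\ol{d}_{n,A}$). The pullback hypothesis on \eqref{eq:naut-morphism} is the crucial ingredient that makes these squares into genuine pullbacks of strong epis; without it one only gets the inequality of part~(1). I would isolate the product-of-runs bookkeeping as the technical heart and relegate the routine diagram chases to a lemma or an appendix, since the conceptual content is simply that image factorization is compatible with composition and that strong epis are preserved under the operations used to assemble the run object.
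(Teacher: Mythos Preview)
Your proposal is correct and follows essentially the same route as the paper: construct the induced morphism $\Run^{(n)}_{A'}\to\Run^{(n)}_A$ via the universal property of the pullback, obtain the language inclusion in part~(1) by diagonal fill-in, and for part~(2) show this morphism is strongly epic using the stability assumptions of \autoref{ass:cats}. The paper carries out your ``tower of pullbacks'' step by two applications of the pullback lemma to the stacked rectangles formed by $\Run^{(n)}_{A'}$, $\Run^{(n)}_{A}$, $\delta^n$, and $(\delta')^n$, thereby identifying the relevant square as a pullback of a strong epimorphism.
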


Hence, the construction $A\mapsto A'$ of \autoref{rem:strong} indeed yields an equivalent NOFA.

\begin{proposition}\label{prop:liftadjaut}
  Let $\C$ and $\D$ be categories satisfying the \autoref{ass:cats}.
  \begin{enumerate}
  \item Every functor $G\colon \C \to \D$ lifts to a functor
    $\barG\colon \NAut(\C) \to \NAut(\D)$ defined by
    \[
      \barG(Q,\Sigma,\delta,I,F) = (GQ,G\Sigma, \ol{G\delta},
      \ol{GI}, \barGF)
      \qquad \text{and}\qquad
      \barG f = Gf.
    \]
    Here, $\ol{G\delta}$, $\ol{GI}$, $\barGF$ are the images of the
    morphisms shown below, with $\canmap$ denoting the canonical
    morphism induced by the product projections:
    \[
      G\delta \xto{Gm_\delta} G(Q\times \Sigma\times Q) \xto{\canmap}
      GQ\times G\Sigma\times GQ,
      \qquad GI\xto{Gm_I} GQ, \qquad GF \xto{Gm_F} GQ.
    \]
    
  \item Every adjunction $L \dashv R\colon \C \to \D$ lifts to an
    adjunction $\barL \dashv \barR\colon \NAut(\C) \to \NAut(\D)$.
  \end{enumerate}
\end{proposition}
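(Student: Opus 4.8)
The plan is to prove \autoref{prop:liftadjaut} in two stages corresponding to its two parts. I should first establish that the functorial lifting $\barG$ of part~(1) is well-defined and is genuinely a functor. The key point is that $\barG$ is built by postcomposing with image factorizations, so I would verify that a morphism $h\colon A'\to A$ in $\NAut(\C)$ is sent to a morphism $\barG h = Gh$ in $\NAut(\D)$. For this I would check that the three diagrams \eqref{eq:naut-morphism} defining an $\NAut$-morphism are preserved, which reduces to showing that the images $\ol{G\delta}$, $\ol{GI}$, $\barGF$ are compatible with the commutative squares after applying $G$ and factorizing. Concretely, applying $G$ to the squares produces commuting squares between the underlying (unfactorized) morphisms $Gm_{\delta'}, Gm_\delta$ etc.; using the diagonal fill-in property of (strong epi, mono)-factorizations, these descend to the required dashed morphisms $\ol{G}h_\trans$, $\ol{G}h_\init$, $\ol{G}h_\final$ between the images. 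Functoriality (preservation of identities and composition) then follows from the fact that image factorizations are unique up to isomorphism and that $G$ preserves identities and composition.

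For part~(2), the plan is to leverage part~(1) rather than construct the lifted adjunction by hand. Given an adjunction $L\dashv R\colon \C\to\D$, part~(1) yields lifted functors $\barL\colon \NAut(\D)\to\NAut(\C)$ and $\barR\colon \NAut(\C)\to\NAut(\D)$ (note the direction: the functors go between $\NAut$ of the respective categories). The cleanest route is to exhibit a unit and counit for $\barL\dashv\barR$ by lifting the unit $\eta$ and counit $\eps$ of the original adjunction. I would define the components of the lifted unit and counit on the state- and alphabet-objects simply as the components of $\eta$ and $\eps$, and then argue that these assemble into $\NAut$-morphisms and satisfy the triangle identities. The triangle identities for the lifted transformations then reduce componentwise to the triangle identities of $L\dashv R$, since the state and alphabet components are exactly $\eta$ and $\eps$ and the automaton structure (transitions, initial, final) is uniquely determined once the state/alphabet parts are fixed.

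The main obstacle I anticipate is verifying that the lifted unit and counit components are actually $\NAut$-morphisms, i.e.\ that the naturality squares for transitions, initial and final states commute appropriately and that the subobject parts match up. This is where \autoref{ass:cats} is genuinely used: I would need that $G$ (here $L$ or $R$) interacts correctly with image factorizations, and in particular that the canonical comparison morphisms $\canmap$ induced by product projections behave coherently. Since $R$ is a right adjoint it preserves limits (hence finite products and the monos $m_\delta$, $m_I$, $m_F$ as subobjects), which makes $\barR$ especially well-behaved; for $\barL$ one must instead rely on the stability assumptions on strong epimorphisms to ensure the images are computed compatibly. A cleaner alternative that avoids some of this bookkeeping is to characterize $\barL\dashv\barR$ via a natural bijection $\NAut(\C)(\barL B, A)\cong \NAut(\D)(B,\barR A)$: a morphism on either side is determined by its state- and alphabet-components together with the (automatically induced) constraint that transitions, initial and final subobjects are respected, and the adjunction $L\dashv R$ supplies the bijection on components while the subobject constraints transpose correctly under the adjoint correspondence. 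I would likely present part~(2) through this hom-set bijection, as it localizes all the categorical subtleties to a single transposition argument.
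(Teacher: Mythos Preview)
Your proposal is essentially correct and, for both parts, covers the paper's approach. Part~(1) matches exactly: apply $G$, take images, and use diagonal fill-in to obtain the dashed morphisms in \eqref{eq:naut-morphism}; functoriality follows from uniqueness of image factorizations.

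For part~(2), the paper takes your first route---lifting the unit and counit componentwise to $\ol{\eta}_A=(\eta_Q,\eta_\Sigma)$ and $\ol{\epsilon}_A=(\epsilon_Q,\epsilon_\Sigma)$ and verifying each is an $\NAut$-morphism---rather than the hom-set bijection you ultimately prefer. One small correction to your diagnosis: for the counit, the key fact is not the stability hypotheses of \autoref{ass:cats} but that the left adjoint $L$ preserves strong epimorphisms (since $R$ preserves monos, lifting problems transpose) and that strong epis are closed under composition; this is precisely what makes the diagonal fill-in produce the required map $\ol{LR_\delta}\to\delta$ (and similarly for $I,F$). Your observation that $R$ preserves finite limits is exactly what makes the unit side easy: since $\canmap\circ Rm_\delta$ is already monic, one has $\ol{R\delta}\cong R\delta$, and the required restriction is obtained by plain composition rather than fill-in. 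The hom-set approach would work too and localizes the argument, but the asymmetry you identify between the $\barL$- and $\barR$-side constraints reappears there in the same form.
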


In particular, the adjunctions \eqref{eq:adjunctions} lift to adjunctions between the respective categories of nondeterministic automata, which in turn restrict to fp-automata by \autoref{prop:fp-presheaves}:
\begin{equation}\label{eq:adjunctions-naut}
\begin{tikzcd}[column sep=2em]
\NAutfp(\Set^\II) \ar[yshift=-5]{r}{\top}[swap]{\ol{\Lan}_E} \ar[xshift=5]{d}{{\barI}^\star}[swap]{\vdash} & \NAutfp(\Set^\FF) \ar[yshift=5]{l}[swap]{{\barE}^\star} \ar[xshift=-5]{d}[swap]{{\barJ}^\star} \\
\NAutfp(\Nom) \ar[xshift=-5]{u}{\barI_\star} \ar[yshift=5]{r}{\barF} & \NAutfp(\RnNom) \ar[yshift=-5]{l}{\ol{U}}[swap]{\bot} \ar[xshift=5]{u}{\dashv}[swap]{\barJ_\star}
\end{tikzcd}
\end{equation}

The positive closure $A\mapsto \barA$ of \autoref{cons:bar-A}, which
is key to our results in
\renewcommand{\sectionautorefname}{Sections}%
\autoref{sec:nofa-nofra} through~\ref{sec:mso},
\renewcommand{\sectionautorefname}{Section}%
is an instance of the proposition since $\barA=\barF A$ for the left adjoint $F\colon \Nom\to \RnNom$. 

\section{Nondeterministic Presheaf Automata}\label{sec:preaut}
We proceed to relate the expressive power of the four automata models in \eqref{eq:adjunctions-naut}. Specifically, for $\C\in \{\II,\FF\}$ we consider
nondeterministic $\Set^\C$-automata $A=(Q,\Sigma,\delta,I,F)$ with a super-finitary (= finitely presentable) presheaf $Q$ of states and input
alphabet $\Sigma=V_\C\in \Set^\C$, for the inclusion functor $V_\C(S)=S$. (This implies that $\delta$, $I$ and $F$ are super-finitary by~\Cref{prop:super-finitary}.)
Note that $V_\C$ corresponds to the input alphabet $\At$ used for NOF(R)As:
\[ V_\II = I_\star(\At)\qquad\text{and}\qquad V_\FF = J_\star(\At) = \Lan_E(V_\II). \]
A \emph{language} in $\Set^\C$ is a sub-presheaf $L\seq V_\C^\star$, or equivalently a family of sub-presheaves $L^{(n)}\seq V_\C^n$ for $n\in \Nat$
(\autoref{def:language-cat} and \autoref{rem:extensive}). Here, $V_\C^\star (S)=S^\star$, the set of words over the finite alphabet $S\seq_\f\At$,
and $V_\C^n(S)=S^n$, the subset of words of length $n$. 

\begin{rem}\label{rem:presheaf-lang-vs-word-lang}
For the sake of distinction, we refer to languages in $\Set^\C$ as \emph{presheaf languages}, and to subsets of $\Ats$ as \emph{word languages}.
Both concepts are closely related: Every presheaf language $L\seq V_\II^\star$ in $\Set^\II$ induces a $\Perm(\At)$-equivariant word language
$\wordlang{L}\seq \Ats$ given by $\wordlang{L} = \bigcup_{S\seq_\f \At} L(S)$, and, conversely, every $\Perm(\At)$-equivariant word language $K\seq \Ats$
induces a presheaf language $\prelang{K}\seq V_\II^\star$ given by $[\prelang{K}]S= K\cap S^\star$ for $S\seq_\f \At$. Analogously for presheaf languages
in $\Set^\FF$ and $\Fin(\At)$-equivariant word languages. In both cases, these translations almost yield a bijective correspondence: one has $K=\wordlang{\prelang{K}}$,
but generally only $L\seq \prelang{\wordlang{L}}$. For instance, for $L\seq V_\FF^\star$ given by $L(\emptyset)=\emptyset$ and $L(S)=\{\varepsilon\}$ for
$S\neq\emptyset$ one has $[\prelang{\wordlang{L}}]\emptyset=\{\varepsilon\}$, so $L\subsetneq \prelang{\wordlang{L}}$. The equality $L= \prelang{\wordlang{L}}$
holds iff $L$ is \emph{downwards closed}, that is, $L(S')=L(S)\cap (S')^\star$ for all $S'\seq S\seq_\f \At$.
\end{rem}

The presheaf version of positive word languages and positive closures is as follows:

\begin{defn}\label{def:pos-closure-presheaf-lang} Let $L\seq V_\II^\star$ be a presheaf language in $\Set^\II$.
\begin{enumerate}
\item The language $L$ is \emph{positive} if $L=KE$ for some (unique) language $K\seq V_\FF^\star$ in $\Set^\FF$.
\item A \emph{positive closure} of $L$ is a language $\barL$ in $\Set^\FF$ such that $L\seq \barL E$ and $\barL$ is minimal with that property, that is, $\barL\seq K$ for every language $K\seq V_\FF^\star$ in $\Set^\FF$ such that $L\seq KE$.
\end{enumerate}
\end{defn}

A positive closure is clearly unique; its existence is ensured by the next proposition, which is proved using the universal property of left Kan extensions.

\begin{proposition}\label{prop:pos-closure-presheaf-lang}
The positive closure of $L\seq {V_{\II}^*}$ is given by the image of the
  morphism
  \[ \varphi\colon \Lan_E(L) \xra{\Lan_E(\iota)} \Lan_E(V_{\II}^*) \cong \coprod_{k}\Lan_E(V_{\II}^k) \xto{\coprod_k \canmap_k} \coprod_{k}\Lan_E(V_{\II})^k = \coprod_{k} V_\FF^k = V_{\FF}^* \]
where $\iota\colon L \hookrightarrow {V_{\II}^*}$ is the inclusion, the isomorphism witnesses preservation of coproducts by the left adjoint $\Lan_E$, and $\canmap_k$ is the canonical map induced by the product projections. 
\end{proposition}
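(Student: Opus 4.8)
The plan is to verify that the image of the explicit morphism $\varphi$ satisfies the two defining properties of a positive closure from \autoref{def:pos-closure-presheaf-lang}, namely that $L \seq (\im\varphi)E$ and that $\im\varphi$ is minimal with this property. The central tool is the adjunction $\Lan_E \dashv E^\star$ of \eqref{eq:adjuction-e-star}, which for the inclusion $E\colon \II \hookto \FF$ relates languages in $\Set^\II$ to languages in $\Set^\FF$. The key bookkeeping fact I would record first is that applying $E^\star$ to $\varphi$ recovers the inclusion $\iota$ up to the unit of the adjunction; concretely, for a language $K \seq V_\FF^\star$ the condition $L \seq KE = E^\star K$ corresponds under the adjunction to a factorization of $\varphi$ through the inclusion $K \monoto V_\FF^\star$. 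This is precisely the universal property that turns minimality of the image into a formal consequence.

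First I would establish that $\varphi$ lands inside $V_\FF^\star$ in the intended way by unwinding its three constituent arrows: $\Lan_E(\iota)$ is the Kan extension of the inclusion $L \hookto V_\II^\star$, the isomorphism $\Lan_E(V_\II^\star) \cong \coprod_k \Lan_E(V_\II^k)$ uses that $\Lan_E$, being a left adjoint, preserves the coproduct $V_\II^\star = \coprod_k V_\II^k$, and the components $\canmap_k \colon \Lan_E(V_\II^k) \to \Lan_E(V_\II)^k$ together with $\Lan_E(V_\II) = V_\FF$ identify the codomain with $V_\FF^\star$. Having fixed this description, I would show $L \seq (\im\varphi)E$ by transposing across the adjunction: the unit $\eta_L \colon L \to E^\star \Lan_E(L)$ composed with $E^\star$ applied to the (strong epi, mono)-factorization of $\varphi$ exhibits $\iota$ as factoring through $E^\star(\im\varphi)$, which is exactly the statement $L \seq (\im\varphi)E$. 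Here I would use that $E^\star$ preserves monomorphisms (it is a right adjoint and acts by precomposition, hence componentwise injective maps stay injective) so that images behave well.

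For minimality, suppose $K \seq V_\FF^\star$ is any language with $L \seq KE = E^\star K$. Transposing the inclusion $L \monoto E^\star K$ across $\Lan_E \dashv E^\star$ yields a morphism $\Lan_E(L) \to K$ whose composite with $K \monoto V_\FF^\star$ equals $\varphi$, by naturality of the adjunction transpose and the explicit form of $\varphi$ (the transpose of $\iota$ followed by the structural isomorphisms is $\varphi$ itself). Since $K \monoto V_\FF^\star$ is a monomorphism, $\varphi$ factors through it, and hence so does its image: using the (strong epi, mono)-factorization $\varphi = m_{\im\varphi} \circ e$ and the diagonal fill-in property of strong epimorphisms against monomorphisms, I obtain $\im\varphi \seq K$. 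This is the required minimality, and together with the first part it identifies $\im\varphi$ as the positive closure.

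The main obstacle I anticipate is the careful identification of $\varphi$ with the transpose of $\iota$ under the adjunction, i.e.\ checking that the structural isomorphisms (coproduct preservation by $\Lan_E$ and the canonical maps $\canmap_k$) are exactly compatible with the unit $\eta$ so that the transpose computation in the minimality step produces $\varphi$ on the nose rather than merely up to some unverified coherence. Concretely, one must confirm that the mate of the product-projection cone under $\Lan_E \dashv E^\star$ agrees with the $\canmap_k$ appearing in the definition, which amounts to a naturality diagram chase; since $\Lan_E$ of a finite product need not be the product (the $\canmap_k$ are generally not isomorphisms), this compatibility is where the real content sits rather than in the formal adjunction manipulations.
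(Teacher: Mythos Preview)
Your proposal is correct and follows essentially the same approach as the paper: both arguments verify the two defining properties of the positive closure using the adjunction $\Lan_E \dashv E^\star$, obtaining $L\seq(\im\varphi)E$ from the unit and the factorization of $\varphi$, and minimality by transposing an inclusion $L\hookto KE$ to a map $\Lan_E(L)\to K$ whose composite with $K\monoto V_\FF^\star$ is $\varphi$, then applying diagonal fill-in. The coherence issue you flag as the main obstacle---that $E^\star\varphi\circ\eta_L$ really is $\iota$, i.e.\ that the structural isomorphisms and $\canmap_k$ interact correctly with the unit---is exactly the content of the paper's first large commutative diagram, so you have identified precisely where the work lies.
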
 

\begin{rem}\label{rem:strong-presheaf}
  A presheaf $P\in \Set^\II$ is \emph{strong} if $P=I_\star(X)$ for a
  strong nominal set~$X$. Since~$I_\star$ preserves coproducts,
  (super-finitary) strong presheaves are exactly (finite) coproducts
  $\coprod_{j\in J} \II(S_j,-)$ of representable presheaves.  By
  \autoref{prop:super-finitary} and \autoref{prop:epi-pres-lang},
  every super-finitary $\Set^\II$-automaton is equivalent to one whose
  presheaf of states is strong.  Given such an automaton $A$ with
  states $Q=\coprod_{j\in J} \II(S_j,-)$, applying the lifted left
  adjoint $\ol{\Lan}_E$ yields a super-finitary $\Set^\FF$-automaton
  $\barA$ with states $\Lan_E(Q)=\coprod_{j\in J} \FF(S_j,-)$, using
  that $\Lan_E$ preserves coproducts and representables (see
  e.g.~\cite[Ex.~X.3.2]{mac-71}). This is the analogue of
  \autoref{cons:bar-A} for presheaf automata. Similar to
  \autoref{prop:barA-acc-barL}, we have
\end{rem}

\begin{proposition}\label{prop:presheaf-aut-closure}
  For every super-finitary nondeterministic $\Set^\II$-automaton $A$
  with a strong presheaf of states, the $\Set^\FF$-automaton
  $\barA=\ol{\Lan}_E(A)$ accepts the language $\ol{L(A)}$.
\end{proposition}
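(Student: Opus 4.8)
The plan is to prove the two languages equal componentwise, by comparing the accepting-run object $\Run_{\barA}^{(n)}$ with the left Kan extension of the accepting-run object $\Run_A^{(n)}$ of $A$, for each $n$. First I would record a componentwise description of the target. Writing $\Run_A^{(n)}\epito L^{(n)}(A)\monoto V_\II^n$ for the image factorisation of \autoref{def:naut-acc-lang} and $\pr\colon \Run_A^{(n)}\to V_\II^n$ for the projection onto the alphabet coordinates, \autoref{prop:pos-closure-presheaf-lang} exhibits $\ol{L(A)}^{(n)}$ as the image of $\Lan_E(L^{(n)}(A))\to\Lan_E(V_\II^n)\xto{\canmap_n}V_\FF^n$. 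Since $\Lan_E$ is a left adjoint it preserves coproducts and epimorphisms, so $\Lan_E(\Run_A^{(n)})\epito\Lan_E(L^{(n)}(A))$ is (strongly) epic; precomposing by an epi does not change images, so
\[
  \ol{L(A)}^{(n)}=\im(\pi_n),\qquad \pi_n\colon \Lan_E(\Run_A^{(n)})\xto{\Lan_E(\pr)}\Lan_E(V_\II^n)\xto{\canmap_n}V_\FF^n.
\]
On the other side, $L(\barA)^{(n)}$ is by definition the image of the projection $\Run_{\barA}^{(n)}\to V_\FF^n$, where $\barA=\ol{\Lan}_E(A)$ has states $\Lan_E(Q)=\coprod_j\FF(S_j,-)$ by \autoref{rem:strong-presheaf}.

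Next I would build a comparison morphism $c\colon \Lan_E(\Run_A^{(n)})\to \Run_{\barA}^{(n)}$ lying over $V_\FF^n$. Applying $\Lan_E$ to the pullback square defining $\Run_A^{(n)}$ and post-composing with the canonical product-comparison maps and the image projections $\Lan_E(\delta)\epito\barDelta$, $\Lan_E(I)\epito\barI$, $\Lan_E(F)\epito\barF$ (from \autoref{prop:liftadjaut}) yields a cone over the cospan defining $\Run_{\barA}^{(n)}$; naturality of $\canmap$ ensures the two legs agree in $(\Lan_E Q\times V_\FF\times\Lan_E Q)^n$, so the universal property of the pullback induces $c$, and the same naturality shows $c$ commutes with the projections to $V_\FF^n$. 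Mere existence of $c$ already yields one inclusion, namely $\ol{L(A)}^{(n)}=\im(\pi_n)=\im\bigl((\Run_{\barA}^{(n)}\to V_\FF^n)\circ c\bigr)\le L(\barA)^{(n)}$. For the converse, and in fact for equality, it suffices to prove that $c$ is epic: then $L(\barA)^{(n)}=\im(\Run_{\barA}^{(n)}\to V_\FF^n)=\im\bigl((\Run_{\barA}^{(n)}\to V_\FF^n)\circ c\bigr)=\im(\pi_n)=\ol{L(A)}^{(n)}$.

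The main obstacle is therefore the surjectivity of $c$, which is the presheaf incarnation of the subtle ``different $\rho$'s'' bookkeeping of \autoref{prop:barA-acc-barL}. I would argue at the level of elements: fix $T\seq_\f\At$ and an accepting run $\xi\in\Run_{\barA}^{(n)}(T)$, with states $(j_r,\tau_r)$ where $\tau_r\colon S_{j_r}\to T$, and letters $b_1\cdots b_n$. Each transition of $\xi$ lies in $\barDelta(T)=\im(\Lan_E(\delta)\to\cdots)(T)$, hence is the $f_r$-renaming of a genuine $A$-transition $t_r\in\delta(U_r)$ for some finite $U_r$ and $f_r\colon U_r\to T$; likewise the initial and final states arise from elements of $I$ and $F$ over suitable sets. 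The idea is to assemble a single $S'\seq_\f\At$ together with a renaming $g\colon S'\to T$ by gluing the sets $U_r$ along the shared register sets $S_{j_r}$ (the overlaps between consecutive transitions), i.e.\ by forming the finite colimit of the resulting path-shaped diagram of injections in $\FF$. Because the register assignments of the $t_r$ are injective, each $U_r$ embeds into $S'$; because $f_r$ and $f_{r+1}$ restrict to the same map $\tau_r$ on the shared registers, the $f_r$ amalgamate into a single $g\colon S'\to T$. Transporting each $t_r$ along $U_r\hookrightarrow S'$ (functoriality of the sub-presheaf $\delta$ along injections) then produces a genuine accepting $A$-run $\sigma\in\Run_A^{(n)}(S')$ whose $g$-renaming is exactly $\xi$, so the class $[(\sigma,g)]\in\Lan_E(\Run_A^{(n)})(T)$ satisfies $c_T([(\sigma,g)])=\xi$.

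The genuinely delicate points, which I expect to absorb most of the work, are verifying that this path-gluing keeps every leg $U_r\hookrightarrow S'$ injective (so that $\sigma$ really uses injective register assignments and hence lands in $Q(S')$ rather than collapsing registers prematurely) and checking that the initial and final witnesses fold into the same colimit compatibly with $g$. Everything else --- the construction of $c$ via the pullback universal property and the image manipulations of the first paragraph --- is routine diagram chasing resting on the facts that $\Lan_E$ preserves coproducts, epimorphisms, and representables, and that $\canmap$ is natural.
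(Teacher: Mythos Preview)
Your approach coincides with the paper's: both construct the comparison morphism $\Lan_E(\Run_A^{(n)})\to\Run_{\barA}^{(n)}$ via the pullback universal property, reduce the equality of languages to its surjectivity, and establish surjectivity by an element-level argument that aligns the separate renamings witnessing each transition into a single renaming over an enlarged finite set. The only difference is packaging: where the paper carries out the alignment inductively---one transition at a time, via explicit disjoint unions, a bijection to match the shared state, and a tailored injection to merge the two renamings---you form the colimit of the zigzag of register-set injections $U_1\leftarrow S_{j_1}\rightarrow U_2\leftarrow\cdots$ in one go, which is a clean reformulation of the same construction (the paper's steps are exactly the iterated pushouts computing that colimit).
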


While by definition nondeterministic presheaf automata accept presheaf
languages, using \autoref{rem:presheaf-lang-vs-word-lang} we can also
naturally associate a word language semantics to them:
\begin{defn}
\begin{enumerate}
\item The word language \emph{accepted} by a nondeterministic
  $\Set^\C$-automaton~$A$ is $\wordlang{L(A)}\seq\Ats$, the word
  language induced by the presheaf language of $A$.
  
\item A word language $L\seq\Ats$ is \emph{$\Set^\C$-recognizable} if
  there exists a super-finitary non\-de\-termi\-nis\-tic $\Set^\C$-automaton
  accepting it.
\end{enumerate}
\end{defn}

This enables a classification of the expressive power of nondeterministic $\Set^\C$-automata:

\begin{theorem}\label{thm:presheaf-automata-vs-nofa}
\begin{enumerate}
\item\label{thm:pavs:1} A word language is NOFA-recognizable iff it is $\Set^\II$-recognizable.
\item\label{thm:pavs:2} A word language is positive and NOFA-recognizable iff it is $\Set^\FF$-recognizable. 
\end{enumerate}
\end{theorem}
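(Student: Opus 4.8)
The plan is to prove both equivalences by transferring the automata along the lifted adjunctions in~\eqref{eq:adjunctions-naut}, using the fact that the embeddings $I_\star$ and $J_\star$ realize $\Nom$ and $\RnNom$ as reflective subcategories of $\Set^\II$ and $\Set^\FF$. The key bookkeeping device throughout is the translation between presheaf languages and word languages from \autoref{rem:presheaf-lang-vs-word-lang}, together with the observation that the word language $\wordlang{L(A)}$ is insensitive to the difference between a presheaf language $L$ and its downward closure $\prelang{\wordlang{L}}$.

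\textbf{Part~\ref{thm:pavs:1}.} For the ``only if'' direction, I would start from a NOFA $A'$ in $\Nom$ and push it into $\Set^\II$ via the lifted embedding $\barI_\star$ from~\eqref{eq:adjunctions-naut}. Since $I_\star$ is a full embedding preserving finite limits and finitely presentable objects (\autoref{prop:fp-presheaves}), the resulting $\Set^\II$-automaton is super-finitary, and because $I_\star$ preserves the image factorizations and products used in \autoref{def:naut-acc-lang}, its accepted presheaf language satisfies $\wordlang{L(\barI_\star A')} = L(A')$. For the ``if'' direction I would take a super-finitary $\Set^\II$-automaton $A$ and apply the reflector $\barI^\star$; using that $I^\star \circ I_\star \cong \id$ and that the reflection identifies a presheaf with its least-support nominal set, one checks that $\barI^\star A$ is a NOFA accepting the same word language. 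The conceptual content is that $\Set^\II$ only adds objects \emph{without} least supports, which are invisible to the word-language semantics $\wordlang{-} = \bigcup_S L(S)$.

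\textbf{Part~\ref{thm:pavs:2}.} The ``if'' direction should follow from Part~\ref{thm:pavs:1} together with the closure results: given a $\Set^\FF$-automaton, restricting along $\barE^\star$ (or passing through $\barJ^\star$) produces a $\Set^\II$-automaton, hence the language is NOFA-recognizable, and positivity comes from \autoref{prop:nofra-acc-pos-lang} via the reflection into $\RnNom$. For the ``only if'' direction, I would take a positive NOFA-recognizable language, realize it as $L(A')$ for a NOFA $A'$ whose states form a strong nominal set (\autoref{rem:strong}), lift to the strong $\Set^\II$-automaton $A = \barI_\star A'$, and then apply the lifted left Kan extension $\ol{\Lan}_E$. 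By \autoref{prop:presheaf-aut-closure} the resulting super-finitary $\Set^\FF$-automaton $\barA$ accepts the positive closure $\ol{L(A)}$, which equals $L(A')$ since that language is already positive.

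\textbf{Main obstacle.} The subtle point I expect to fight with is the mismatch between presheaf-language equality and word-language equality recorded in \autoref{rem:presheaf-lang-vs-word-lang}, namely that one generally has only $L \subseteq \prelang{\wordlang{L}}$. The functors in~\eqref{eq:adjunctions-naut} act on \emph{presheaf} languages, whereas the statement concerns \emph{word} languages, so at each transfer I must verify that the induced presheaf language, although possibly not downward closed, nonetheless has the correct word language $\bigcup_S L(S)$. Establishing that $\wordlang{-}$ commutes appropriately with the reflectors and the Kan extension — in particular that $\ol{\Lan}_E$ realizes the positive closure at the level of word languages, matching \autoref{prop:presheaf-aut-closure} — is where the real work lies; the remaining verifications that $I_\star$, $J_\star$ and the reflectors preserve the limit/image/product structure defining $L(A)$ are routine given \autoref{ex:categories} and \autoref{prop:fp-presheaves}.
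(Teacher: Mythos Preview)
Your overall strategy matches the paper's: transfer along the lifted functors $\barI_\star$, ${\barI}^\star$, ${\barE}^\star$, $\ol{\Lan}_E$ of~\eqref{eq:adjunctions-naut}, and for Part~\ref{thm:pavs:2} combine Part~\ref{thm:pavs:1} with \autoref{prop:presheaf-aut-closure} for the forward direction and with ${\barE}^\star$ for the converse. You have also correctly located the real work in checking that these transfers respect the word-language semantics $\wordlang{-}$.

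There is one technical slip to flag. In Part~\ref{thm:pavs:1}, ``only if'', you justify $\wordlang{L(\barI_\star A')} = L(A')$ by saying that $I_\star$ preserves image factorizations. It does not: $I_\star$ is a right adjoint, hence preserves limits and monos, but it fails to preserve epimorphisms (e.g.\ the unique map $\At^{\#2}\epito 1$ becomes the non-surjective $(I_\star\At^{\#2})_\emptyset = \emptyset \to \{\ast\}$ at $S=\emptyset$). Consequently $L(\barI_\star A')$ need not equal $I_\star(L(A'))$ as presheaf languages; it is generally a proper sub-presheaf. The conclusion you want is nonetheless true, and the paper obtains it by the direct run-based argument: every accepting run of the NOFA $A'$ lives in $(I_\star Q)S$ for $S$ chosen large enough to support all the states involved, and conversely an accepting run in $\barI_\star A'$ at some $S$ is literally an accepting run of $A'$. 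The same style of argument handles the reflector ${\barI}^\star$ (working with the explicit description of $I^\star$ via equivalence classes $[S,x]$), and the paper proves positivity of $\Set^\FF$-recognizable languages directly from the sub-presheaf property of $L(A)\subseteq V_\FF^\star$ rather than detouring through \autoref{prop:nofra-acc-pos-lang}. So replace the abstract ``$I_\star$ preserves images'' step by the concrete run correspondence and your proof goes through exactly as the paper's does.
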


For \autoref{thm:pavs:1} one shows that the functors $\barI_\star$ and ${\barI}^\star$
of \eqref{eq:adjunctions-naut} preserve the accepted word languages of
automata. For \autoref{thm:pavs:2} one uses \autoref{prop:presheaf-aut-closure} and
the observation that every nondeterministic $\Set^\FF$-automaton
accepts a positive word language.

This shows that the theory of data languages can be based on presheaves rather than nominal sets~\cite{BojanczykEA14}. In particular, the conceptual difference between the two approaches~(viz.~ex\-istence of least supports) is largely inessential from the perspective of automata theory.

\section{Conclusions and Future Work}
We have characterized positive data languages recognizable by
NOFAs in terms of register
automata, logic, and category theory; see \autoref{fig:equivalences}
for a summary of our contributions. Our results underline the
phenomenon that weak classes of data languages tend to have a rich
theory and admit many equivalent perspectives, paralleling classical
regular languages over finite alphabets. For example, a similar observation has
been made for data languages recognizable by orbit-finite
nominal monoids~\cite{boj-13,col-15,boj-20}.

The logic $\MSOep$ defines positive data languages, but is more
expressive than NOFAs. Identifying a suitable syntactic fragment of
$\MSOep$ that captures precisely the positive NOFA-recognizable
languages remains an open problem. The same holds for the decidability
of the satisfiability problem for $\MSOep$, which is known to be undecidable for
$\MSOe$~\cite{klt21}. On a related note, it might be interesting to characterize the expressive power of full $\MSOep$. Specifically, does it capture precisely the $\MSOe$-definable positive languages?

Finally, besides register automata, a number of further
automata models for data languages have been proposed, most notably
pebble automata~\cite{nsv04} and data
automata~\cite{bdmss06,bmssd06}. In general, these models differ in
their expressive power.  However, it is conceivable that some or all
of them may become equivalent when restricted to positive data
languages.

\bibliography{refs}

\clearpage
\appendix

\section{Appendix}
This Appendix provides proof details and additional explanations omitted for lack of space.

\subsection*{Proof of \autoref{prop:nofra-acc-pos-lang}}
Let $A=(Q,\delta,I,F)$ be a NOFRA. Given a word $w\in L({A})$ with accepting run \[(j_0,q_0)\xto{a_1} (j_1,q_1)\xto{a_2}\cdots \xto{a_n} (j_n,q_n)\] and a renaming $\rho\colon \At\to \At$, we have the accepting run 
\[(j_0,\rho^\star q_0)\xto{\rho a_1} (j_1, \rho^\star q_1) \xto{\rho a_2}\cdots \xto{\rho a_n} (j_n,\rho^\star q_n)\] 
by $\Fin(\At)$-equivariance of $\delta$, $I$, $F$. Hence $\rho^\star(w)\in L(A)$, so $L(A)$ is $\Fin(\At)$-equivariant.

\subsection*{Proof of \autoref{prop:barA-acc-barL}}
Our task is to prove $L(\barA)=\ol{L(A)}$.

\medskip\noindent ($\supseteq$) We have $L(A)\seq L(\barA)$ because the NOFA $A$ is a sub-NOFA of $\barA$. Moreover, the language $L(\barA)$ is positive by \autoref{prop:nofra-acc-pos-lang}, so $\ol{L(A)}\seq L(\barA)$.

\medskip\noindent ($\seq$) We prove that for every run
\begin{equation}\label{eq:run-barA} 
(j_0,q_0)\xto{b_1} (j_1,q_1)\xto{b_2}\cdots \xto{b_n} (j_n,q_n)
\end{equation}
in $\barA$, there exists a renaming $\rho\colon \At\to \At$ and a run
\begin{equation}\label{eq:run-A} 
(j_0,p_0)\xto{a_1} (j_1,p_1)\xto{a_2}\cdots \xto{a_n} (j_n,p_n)
\end{equation}
in $A$ such that $\rho^\star p_i=q_i$ for $i=0,\ldots,n$ and $\rho a_i =b_i$ for $i=1,\ldots,n$. Note that if \eqref{eq:run-barA} is accepting then so is \eqref{eq:run-A}; therefore $L(\barA)\seq \ol{L(A)}$. We construct \eqref{eq:run-A} by induction on $n$.

\medskip\noindent \emph{Induction base ($n=0$)}. Choose $p_0\in \At^{\#m}$ arbitrary and a renaming $\rho\colon \At\to\At$ mapping each letter of $p_0$ to the corresponding letter of $q_0\in \At^m$. Then $\rho^\star q_0=p_0$, as required.  

\medskip\noindent \emph{Induction step ($n\to n+1$)}.  Suppose that
\[(j_0,q_0)\xto{b_1} (j_1,q_1)\xto{b_2}\cdots \xto{b_n} (j_n,q_n) \xto{b_{n+1}} (j_{n+1},q_{n+1}) \]
is a run in $\barA$. By induction, we know that there exists a renaming $\rho\colon \At\to \At$ and a run
\[
(j_0,p_0)\xto{a_1} (j_1,p_1)\xto{a_2}\cdots \xto{a_n} (j_n,p_n)
\]
in $A$ such that $\rho^\star p_i = q_i$ for $i=0,\ldots,n$ and $\rho a_i=b_i$ for $i=1,\ldots,n$. Furthermore, since $(j_n,q_n)\xto{b_{n+1}} (j_{n+1}, q_{n+1})$ in $\barA$, there exists a renaming $\sigma\colon \At\to \At$ and a transition $(j_{n},p_n')\xto{a_{n+1}} (j_{n+1},p_{n+1})$
in $A$ such that $\sigma^\star p_{n}'=q_n$, $\sigma^\star p_{n+1}=q_{n+1}$ and $\sigma a_{n+1}=b_{n+1}$. We show below that we can choose this transition in such a way that (1) $p_n'=p_n$, (2) all names in $\{a_{n+1}\} \cup \supp(p_{n+1})$ that are fresh for $p_n$ are fresh for $p_0,\ldots, p_{n}$ and $a_1,\ldots,a_n$, and (3) $\rho=\sigma$. Then, by (1) and (3), we obtain the run
\[
(j_0,p_0)\xto{a_1} (j_1,p_1)\xto{a_2}\cdots \xto{a_n} (j_n,p_n)\xto{a_{n+1}} (j_{n+1},p_{n+1})
\]
in $A$ with the required properties. It remains to show how to enforce (1), (2), (3).

\medskip\noindent\emph{Ad (1).} Since $p_n,p_n'\in \At^{\# m}$, there exists a permutation $\pi\in \Perm(\At)$ such that $\pi \cdot p_n'=p_n$. Then, by equivariance, we have the transition 
\[p_n=\pi\cdot p_n' \xto{\pi\cdot a_{n+1}} \pi\cdot p_{n+1}\] in $A$, and
\[(\sigma\circ \pi^{-1})^\star(p_n)=q_n,\quad (\sigma\circ \pi^{-1})^\star(\pi\cdot p_{n+1})=q_{n+1},\quad (\sigma\circ \pi^{-1})\cdot \pi\cdot  a_{n+1}=b_{n+1}.\] 
Thus (1) holds after replacing $p_n',a_{n+1},p_{n+1},\sigma$ with $p_n,\pi\cdot a_{n+1}, \pi\cdot p_{n+1}, \sigma\circ \pi^{-1}$.

\medskip\noindent\emph{Ad (2).} Suppose that (1) holds. Let $c_1,\ldots,c_k$ be the names in $\{a_{n+1}\} \cup \supp(p_{n+1})$ that are fresh for $p_n$. Choose names $d_1,\ldots,d_k$ fresh for  $p_0,\ldots, p_{n}$, $a_1,\ldots,a_n,c_1,\ldots,c_k$. Then, by equivariance, the permutation $\pi=(c_1\, d_1)\cdots (c_k\, d_k)$ yields the transition  
\[ (j_n,p_n) = (j_n,\pi\cdot p_n) \xto{\pi\cdot a_{n+1}} (j_{n+1},\pi\cdot p_{n+1}) \]
By definition of $\pi$, the names in $\{\pi\cdot a_{n+1}\}\, \cup\, \supp(\pi\cdot p_{n+1})= \{\pi\cdot a_{n+1}\}\, \cup\, \pi\cdot \supp(p_{n+1})$ that are fresh for $p_n=\pi\cdot p_n$ are precisely $d_1,\ldots, d_k$, and thus are fresh for $p_0,\ldots, p_{n}$ and $a_1,\ldots,a_n$. Thus (1) and (2) hold after replacing $p_n,a_{n+1},p_{n+1},\sigma$ with $p_n, \pi\cdot a_{n+1}, \pi\cdot p_{n+1}, \sigma\circ \pi^{-1}$.
 
\medskip\noindent\emph{Ad (3).} Finally, suppose that (1) and (2) hold. Choose a renaming $\tau\colon \At\to\At$ that agrees with $\rho$ on $\{a_1,\ldots,a_n\}\,\cup\,\supp(p_0)\,\cup\, \cdots \,\cup\, \supp(p_n)$ and with $\sigma$ on $\{a_{n+1}\}\,\cup\, \supp(p_n)\,\cup\, \supp(p_{n+1})$. Such $\tau$ exists by (2) and because $\rho^\star(p_n) = q_n= \sigma^\star(p_n)$ implies that $\rho$ and $\sigma$ agree on $\supp(p_n)$. Thus, after replacing $\rho$ and $\sigma$ with $\tau$, all three conditions (1), (2), (3) hold.

\subsection*{Proof of \autoref{prop:trans-bar-A}}

($\Rightarrow$) Suppose that $(j,q)\xto{b} (j',q')$ in $\barA$. By definition of $\barDelta$, this means that there exists a transition $(j,p)\xto{a}(j',p')$ in $A$ and a renaming $\rho\colon \At\to\At$ such that $\rho^\star p=q$, $\rho^\star p'=q'$, $\rho a=b$. Then the induced abstract transition $(j,E,j')$ lies in $\abs(\delta)$, i.e. $j\xto{E} j'$, and the triple $((j,q),b,(j',q'))$ is consistent with it. Indeed, if $k=\bullet$ in $E$ then $p_k=a$, hence $q_k=\rho p_k=\rho a=b$. Similarly for equations $\bullet = k$ and $k=\bark$ in $E$.

\medskip\noindent ($\Leftarrow$) Suppose that the triple $((j,q),b,(j',q'))$ is consistent with some $j\xto{E} j'$. Choose a transition $(j,p)\xto{a} (j,p')$ in $A$ inducing the abstract transition $j\xto{E} j'$, and a renaming $\rho\colon \At\to \At$ mapping $p_k$ to $q_k$, $p_k'$ to $q_k'$ and $a$ to $b$. (Note that a well-defined choice of $\rho$ is possible: If $p_k=a$ then $k=\bullet$ in $E$ and hence $q_k=b$ by consistency. Similarly, $a=p_k'$ implies $b=q_k'$ and $p_k=p_{\overbar{k}}'$ implies $q_k=q_{\overbar{k}}'$.) Since $\rho^\star p = q$, $\rho^\star p'=q'$ and $\rho a=b$, we conclude that $(j,q)\xto{b} (j',q')$ in $\barA$.

\subsection*{Proof of \autoref{prop:abstract-lang-char}}
We start with a remark and a technical lemma:

\begin{rem}\label{rem:nofa-abstr-transition-elements}
The abstract transition $(j,E,j')$ induced by $((j,p),a,(j,p'))\in Q\times \At\times Q$ contains (i) at most one equation $k=\bullet$, (ii) at most one equation $\bullet = k$, (iii) for each $k$ at most one equation $k=\bark$, (iv) for each $\bark$ at most one equation $k=\bark$. Indeed, since $p,p'\in \At^{\#m}$ every data value occurs at most once in $p$ or $p'$, respectively. Moreover if $E$ contains any two of the equations $k=\bullet$, $\bullet = \bark$, $k=\bark$, then it contains the third one.
\end{rem}

\begin{lemma}\label{lem:cons-trans-choice}
For every abstract transition $j\xto{E} j'$ in $\abs(\delta)$ and $q\in \At^m$, there exists a transition $(j,q)\xto{b} (j',q')$ in $\barA$ consistent with it.
\end{lemma}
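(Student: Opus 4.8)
The plan is to build the input letter $b\in\At$ and the after-register content $q'\in\At^m$ by hand and then appeal to \autoref{prop:trans-bar-A}, which reduces the existence of a transition $(j,q)\xto{b}(j',q')$ in $\barA$ to the consistency of $((j,q),b,(j',q'))$ with $(j,E,j')$. The crucial simplification is that the consistency conditions of \autoref{def:nofa-abstract} are \emph{implications}, so I only need to satisfy the equalities they force; every unconstrained entry of $q'$ (and $b$, if unconstrained) may be chosen arbitrarily, and repeated values are allowed since $q'$ ranges over $\At^m$ rather than $\At^{\#m}$.

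First I would fix $b$: if $E$ contains an equation $k=\bullet$ (by \autoref{rem:nofa-abstr-transition-elements}(i) there is at most one such $k$), set $b:=q_k$; otherwise let $b$ be any name. Next I would define $q'$ componentwise. For each index $\ell\in\{1,\ldots,m\}$: if $E$ contains an equation $k=\ell$ for some $k$ (by \autoref{rem:nofa-abstr-transition-elements}(iv) this $k$ is unique), set $q'_\ell:=q_k$; else if $\bullet=\ell$ lies in $E$ (by \autoref{rem:nofa-abstr-transition-elements}(ii) at most one such $\ell$ occurs), set $q'_\ell:=b$; in all remaining cases set $q'_\ell$ arbitrarily.

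The main obstacle, and the only place where the argument is not purely routine, is the well-definedness of $q'$: a single after-register $\ell$ might be constrained both by an equation $k=\ell$ and by $\bullet=\ell$. Here the last clause of \autoref{rem:nofa-abstr-transition-elements} is essential. If $E$ contains both $k=\ell$ and $\bullet=\ell$ (i.e.\ both $k=\bark$ and $\bullet=\bark$ for $\bark=\ell$), then it also contains $k=\bullet$; hence $b$ was already set to $q_k$ in the first step, and the two prescriptions $q'_\ell=q_k$ and $q'_\ell=b$ coincide. The assignment is therefore unambiguous. The choice of $b$ is likewise unambiguous because at most one equation $k=\bullet$ occurs.

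It then remains to check that $((j,q),b,(j',q'))$ is consistent with $(j,E,j')$, i.e.\ the three implications of \autoref{def:nofa-abstract}(4). For $k=\bullet\in E$ we have $q_k=b$ by the choice of $b$. For $k=\bark\in E$ the first clause defining $q'$ gives $q'_{\bark}=q_k$, as required. For $\bullet=\ell\in E$ we have $b=q'_\ell$: either directly from the second clause, or---when $q'_\ell$ was instead set via a competing equation $k=\ell$---from the identity $b=q_k=q'_\ell$ established in the previous paragraph. All three implications hold, so $((j,q),b,(j',q'))$ is consistent with $(j,E,j')\in\abs(\delta)$, and \autoref{prop:trans-bar-A} yields the desired transition $(j,q)\xto{b}(j',q')$ in $\barA$.
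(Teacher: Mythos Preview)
Your proof is correct, but it takes a different route from the paper's. The paper argues directly from the definition of $\barA$: since $(j,E,j')\in\abs(\delta)$, there is a concrete transition $(j,p)\xto{a}(j',p')$ in $A$ inducing it; because $p\in\At^{\#m}$ has pairwise distinct letters, one can pick a renaming $\rho\in\Fin(\At)$ with $\rho(p_k)=q_k$ for all $k$, and then $(j,q)=(j,\rho^\star p)\xto{\rho a}(j',\rho^\star p')$ is a transition of $\barA$ by \autoref{cons:bar-A}, with consistency with $(j,E,j')$ immediate because $\rho$ preserves the equalities encoded in $E$. Your argument instead synthesises $b$ and $q'$ syntactically from $E$, handles the potential clash via \autoref{rem:nofa-abstr-transition-elements}, and then invokes \autoref{prop:trans-bar-A}. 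The paper's approach is shorter and avoids case analysis by exploiting the concrete witness transition and the injectivity of $p$; yours is more self-contained in that it never looks back at $A$ beyond the structural facts about $E$ recorded in the remark, at the cost of the well-definedness bookkeeping.
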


\begin{proof}
Choose a transition $(j,p)\xto{a} (j',p')$ in $A$ inducing the abstract transition $j\xto{E} j'$, and let $\rho\colon \At\to\At$ be a renaming sending $p_k$ to $q_k$ for each $k\in \{1,\ldots,m\}$; hence $\rho^\star p=q$. Then, putting $b=\rho a$ and $q'=\rho^\star p'$, we obtain the transition $(j,q)\xto{b} (j',q')$ in $\barA$ consistent with $j\xto{E} j'$.
\end{proof}
Now we prove \autoref{prop:abstract-lang-char}.

\medskip\noindent ($\Rightarrow$) Suppose that the word $b_1\cdots b_n\in \Ats$ is accepted by $\barA$ via the accepting run
\[
(j_0,q_0)\xto{b_1} (j_1,q_1)\xto{b_2}\cdots \xto{b_n} (j_n,q_n).
\]
By \autoref{prop:trans-bar-A}, each transition $(j_{r-1},q_{r-1})\xto{b_r} (j_r,q_n)$ is consistent with some abstract transition $j_{r-1}\xto{E_r} j_r$ in $\abs(\delta)$. Thus each transition of the above run is consistent with the transitions of the abstract run
\[ j_0\xto{E_1} j_1 \xto{E_2} \cdots  \xto{E_n}  j_n. \]
Then by definition of the predicates $\Eq^{(i)}_k$, the condition
\eqref{eq:eq-cond-1} holds.

\medskip\noindent ($\Leftarrow$) Suppose that there exists an accepting abstract run 
\[ j_0\xto{E_1} j_1 \xto{E_2} \cdots  \xto{E_n}  j_n \]
such that \eqref{eq:eq-cond-1} holds. We show that for each $r=0,\ldots,n$ there exists a run
\begin{equation}\label{eq:run-cons}
(j_0,q_0)\xto{b_1} (j_1,q_1)\xto{b_2}\cdots \xto{b_r} (j_r,q_r)
\end{equation}
whose transitions are consistent with the first $r$ abstract transitions of the abstract run; in particular, putting $r=n$ this proves that $\barA$ accepts $b_1\cdots b_n$.

The run is constructed by induction on $r$. For $r=0$, any choice of $q_0\in \At^m$ will do. For $r=1$ choose the transition $(j_0,b_1^m)\xto{b_1} (j_1,b_1^m)$, which is trivially consistent with $j_0\xto{E_1} j_1$. 
Thus suppose that $0<r<n$ and that a consistent run \eqref{eq:run-cons} of length $r$ has been constructed. By \autoref{lem:cons-trans-choice} there exists a transition $(j_r,q_r) \xto{b} (j_{r+1},q_{r+1})$ in $\barA$ consistent with $j_r \xto{E_{r+1}} j_{r+1}$. 
We show how to turn the run 
\[(j_0,q_0)\xto{b_1} (j_1,q_1)\xto{b_2}\cdots \xto{b_r} (j_r,q_r)\xto{b} (j_{r+1},q_{r+1})\]
into a run for the word $b_1\ldots b_rb_{r+1}$ satisfying the required consistency property. This requires a case distinction depending on the equations occurring in $E_{r+1}$:

\medskip\noindent\underline{Case 1:} $k=\bullet$ in $E_{r+1}$ for some $k$.

\medskip\noindent\underline{Subcase 1.1:} $\Eq^{(i)}_k(r)$ for some $i$.\\
Note that necessarily $i\leq r$ by definition of $\Eq^{(i)}_k$. Then $b=q_{r,k}=b_i=b_{r+1}$: The first equality holds because the transition $(j_r,q_r)\xto{b} (j_{r+1},q_{r+1})$ is consistent with $k=\bullet$, the second one because $\Eq^{(i)}_k(r)$, and the third one by \eqref{eq:eq-cond-1}. We thus obtain the following consistent run for $b_1\ldots b_rb_{r+1}$: 
\[ (j_0,q_0)\xto{b_1} (j_1,q_1)\xto{b_2}\cdots \xto{b_r} (j_r,q_r)\xto{b_{r+1}} (j_{r+1},q_{r+1}). \]
\noindent\underline{Subcase 1.2:} $\Eq^{(i)}_k(r)$ does not hold for any $i$.\\
Consider the unique $s\in \{0,\ldots,r\}$ and the unique $k_{s}, k_{s+1}, \ldots, k_r=k$ such that $k_{t-1}=k_{t}$ in $E_{t}$ for $t\in \{ s+1,\ldots,r\}$ and no equation $\bark=k$ is contained in $E_{s}$ (putting $E_{0}=\emptyset$). Then 
\begin{itemize}
\item for $t\in \{s,\ldots,r\}$ one has $q_{k_t}=b$;
\item for $t\in \{s+1,\ldots, r\}$ one does not have $\Eq^{(i)}_{k_t}(t)$ for any $i$ (for otherwise $\Eq^{(i)}_k(r)$ by definition of the predicates);
\item for $t\in \{s,\ldots,r\}$ the equation $\bullet=k_t$ is not contained in $E_t$ (for otherwise $\Eq^{(t)}_{k_t}(t)$).
\item for $t\in \{s+1,\ldots,r\}$ the equation $k_{t-1}=\bullet$ is not contained in $E_{t}$ (for otherwise $\bullet=k_{t}$ in $E_{t}$ since $k_{t-1}=k_{t}$ in $E_{t}$).  
\end{itemize}

For $t\in \{s,\ldots,r\}$ let $q_t'$ emerge from $q_t$ by replacing the letter $b$ at position $k_t$ by the letter $b_{r+1}$. Then the triples $((j_{t-1},q_{t-1}'),b_{t},(j_{t},q_{t}'))$ for $t\in \{s+1,\ldots, r\}$ are consistent with $(j_{t-1},E_t,j_t)$, as is the triple $((j_{s-1},q_{s-1}),b_s,(j_s,q_s'))$ if $s>0$. It follows by \autoref{prop:trans-bar-A} that  $(j_t,q_t')\xto{b_{t+1}} (j_{t+1},q_{t+1}')$ and $(j_{s-1},q_{s-1})\xto{b_s} (j_s,q_s')$ (if $s>0$) are transitions in $\barA$. Thus we obtain the consistent run
\[ (j_0,q_0)\xto{b_1} (j_1,q_1) \xto{b_2}\cdots\xto{b_{s-1}} (j_{s-1},q_{s-1}) \xto{b_s} (j_s,q_s') \xto{b_{s+1}} \cdots \xto{b_r} (j_r,q_r'). \]
If $\bullet=\bark$ in $E_{r+1}$ for some $\bark$, then let $q_{r+1}'$ emerge from $q_{r+1}$ by replacing the $\bark$-th letter of $q_{r+1}$ with $b_{r+1}$; otherwise put $q_{r+1}'=q_{r+1}$. Then the triple $((j_r,q_{r}'),b_{r+1},(j_{r+1},q_{r+1}'))$ is consistent with $(j_r,E_{r+1},j_{r+1})$, so $(j_r,q_r')\xto{b_{r+1}} (j_{r+1},q_{r+1}')$ in $\barA$ and thus 
\[(j_0,q_0)\xto{b_1} (j_1,q_1) \cdots (j_{s-1},q_{s-1}) \xto{b_s} (j_s,q_s') \xto{b_{s+1}} \cdots \xto{b_r} (j_r,q_r') \xto{b_{r+1}} (j_{r+1},q_{r+1}') \]
is a consistent run for $b_1\ldots b_r b_{r+1}$.

\medskip\noindent\underline{Case 2:} No $k=\bullet$ in $E_{r+1}$.

\medskip\noindent\underline{Subcase 2.1:} $\bullet=\bark$ in $E_{r+1}$ for some $\bark$.\\
Let $q_{r+1}'$ emerge from $q_{r+1}$ by replacing the $\bark$-th letter (viz.\ $b$) with $b_{r+1}$. It then follows that the triple $((j_r,q_r),b_{r+1},(j_{r+1},q_{r+1}'))$ is consistent with $(j_r,E_{r+1},j_{r+1})$. (To see this, note that $E_{r+1}$ does not contain an equation $k=\bark$, for otherwise $k=\bullet$ in $E_{r+1}$.)
 Hence $((j_r,q_r)\xto{b_{r+1}} (j_{r+1},q_{r+1}'))$ in $\barA$ and we obtain the following consistent run for $b_1\cdots b_r b_{r+1}$:
\[ (j_0,q_0)\xto{b_1} (j_1,q_1)\xto{b_2}\cdots \xto{b_r} (j_r,q_r)\xto{b_{r+1}} (j_{r+1},q_{r+1}'). \]

\noindent\underline{Subcase 2.2:} No $\bullet=\bark$ in $E_{r+1}$.\\
Since also no $k=\bullet$ in $E_{r+1}$, the triple $((j_r,q_r),b_{r+1},(j_{r+1},q_{r+1}))$ is consistent with $(j_r,E_{r+1},j_{r+1})$.
It follows that $((j_r,q_r)\xto{b_{r+1}} (j_{r+1},q_{r+1}))$ in $\barA$, which yields the following consistent run for $b_1\cdots b_r b_{r+1}$:
\[ (j_0,q_0)\xto{b_1} (j_1,q_1)\xto{b_2}\cdots \xto{b_r} (j_r,q_r)\xto{b_{r+1}} (j_{r+1},q_{r+1}). \]
This concludes the proof.

\subsection*{Proof of~\autoref{thm:positive-non-guessing}}

By \autoref{prop:barA-acc-barL} every positive NOFA-recognizable language is accepted by some NOFRA $\barA$ as given by~\autoref{cons:bar-A}. Therefore, it suffices to turn $\barA$ into an equivalent non-guessing NOFRA. To this end, we first modify $\barA$ in such a way that it keeps track of the set $S\seq\{1,\ldots,m\}$ of those registers whose content is determined by previous abstract transitions of $A$, and modifies the content of registers outside that set arbitrarily.

\begin{construction}\label{cons:tilde-A}
  Let $\barA = \maketuple{\barQ,\barDelta,\barI,\barF}$ be a NOFRA as in~\autoref{cons:bar-A}. Then the NOFRA $\widetilde{A} =
  (\widetilde{Q},\widetilde{\delta},\widetilde{I},\widetilde{F})$ is given by
\begin{itemize}
\item states $\widetilde{Q} =
  J \times \pow(\set{1,\dots,m}) \times \names^{m}$, where $\Pow$ denotes the powerset;
\item initial states $\widetilde{I} = J_I \times \set{\emptyset} \times \names^{m}$ and final states  $\widetilde{F} = J_F \times \pow(\set{1,\dots,m}) \times \names^{m}$;
\item transitions defined as follows:
for a set $E$ of equations and $S \seq \set{1,\dots,m}$ let $E_S$ denote the restriction of $E$ to those equations whose left-hand side refers to a register in $S$:
  \[E_S = \setw{k = \bullet \in E}{k \in S} \cup \setw{k = \bark \in E}{k\in S}.\]
For $S,S'\seq \{1,\ldots,m\}$ we write $S\rightsquigarrow_E S'$ if
\[ S' = \{ k\in \{1,\ldots,m\} : \bullet=k \in E\} \cup \{ \bark\in \{1,\ldots,m\} : k=\bark\in E \text{ for some $k\in S$}\}. \]
Given $((j,S,q),b,(j',S',q'))\in \widetilde{Q}\times\At\times \widetilde{Q}$ we have the transition $\maketuple{j,S,q} \xra{b} \maketuple{j',S',q'}$ in $\widetilde{A}$ iff there exists some abstract transition $(j,E,j')\in \abs(\delta)$ such that (i) the triple $((j,q),b,(j',q'))$ is consistent with $(j,E_S,j')$, and (ii) $S\rightsquigarrow_E S'$.
\end{itemize} 
\end{construction}

\begin{rem}\label{rem:tildeA-trans}
Since property (i) only requires consistency with $(j,E_S,j')$, transitions can be modified arbitrarily outside of $S$ and $S'$: for every transition $\maketuple{j,S,q} \xra{b} \maketuple{j',S',q'}$ of $\widetilde{A}$ one also has the transitions $\maketuple{j,S,\ol{q}} \xra{b} \maketuple{j',S',\ol{q}'}$ for all $\ol{q},\ol{q}'\in \At^m$ such that $q_k=\ol{q}_k$ for $k\in S$ and $\ol{q}'_k=q'_k$ for $k\in S'$.
\end{rem}

\begin{lemma}\label{lem:barA-bs-tildeA}
The NOFRA $\barA$ and $\widetilde{A}$ are equivalent.
\end{lemma}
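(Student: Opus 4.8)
The plan is to establish the two inclusions $L(\barA)\seq L(\widetilde A)$ and $L(\widetilde A)\seq L(\barA)$ by a run-to-run translation, the central bookkeeping being that the extra component $S_r$ of a $\widetilde A$-run records precisely the registers whose content has been forced by the abstract transitions read so far. Concretely, along every run I expect the set-level invariant $S_r=\{\,k:\Eq^{(i)}_k(r)\text{ for some }i\,\}$, with the predicates $\Eq^{(i)}_k$ of \autoref{not:xik-pred}; this follows by induction on $r$ once the two clauses defining $S\rightsquigarrow_E S'$ in \autoref{cons:tilde-A} are matched with clauses~(1)--(2) defining $\Eq$.

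For $L(\barA)\seq L(\widetilde A)$ I would start from an accepting run $(j_0,q_0)\xto{b_1}\cdots\xto{b_n}(j_n,q_n)$ of $\barA$. By \autoref{prop:trans-bar-A} each step is consistent with some $j_{r-1}\xto{E_r}j_r$ in $\abs(\delta)$; putting $S_0=\emptyset$ and $S_{r-1}\rightsquigarrow_{E_r}S_r$ and using $(E_r)_{S_{r-1}}\seq E_r$, consistency with the full $E_r$ yields consistency with $(E_r)_{S_{r-1}}$. Hence $(j_0,S_0,q_0)\xto{b_1}\cdots\xto{b_n}(j_n,S_n,q_n)$ is a run of $\widetilde A$, accepting because $S_0=\emptyset$ and $\widetilde F$ constrains only the first component. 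This direction is routine.

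For the converse I would take an accepting $\widetilde A$-run and read off its abstract run $j_0\xto{E_1}\cdots\xto{E_n}j_n$, which is accepting. By \autoref{prop:abstract-lang-char} it then suffices to verify condition~\eqref{eq:eq-cond-1} for the input word. Here I would refine the invariant to track \emph{values}: for all $r$ and all $k\in S_r$ with $\Eq^{(i)}_k(r)$ one has $q_{r,k}=b_i$. Granting this, \eqref{eq:eq-cond-1} is immediate: if $k=\bullet$ in $E_{r+1}$ and $\Eq^{(i)}_k(r)$, then $k\in S_r$, so $k=\bullet$ lies in $(E_{r+1})_{S_r}$ and consistency of the $(r{+}1)$-st transition forces $q_{r,k}=b_{r+1}$, which together with $q_{r,k}=b_i$ gives $b_i=b_{r+1}$.

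The main obstacle is the value refinement of the invariant, i.e.\ that every forced register actually stores the data value at its source position. The delicate case is the base case, where a register becomes determined through an equation $\bullet=k$ and must thereby acquire the current input $b_r$ upon entering $S_r$; only then will the later propagation equations $k=\bark$ (with $k\in S_{r-1}$, checked through $(E_r)_{S_{r-1}}$) carry the correct value $b_i$ forward, while the remaining registers stay unconstrained in the sense of \autoref{rem:tildeA-trans}. Securing this base case, the propagation case reduces to consistency with $(E_r)_{S_{r-1}}$ plus the induction hypothesis, the set-level invariant is pure bookkeeping, and the two inclusions combine to $L(\barA)=L(\widetilde A)$.
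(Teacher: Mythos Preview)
Your plan coincides with the paper's: the forward inclusion via \autoref{prop:trans-bar-A} by simply adjoining the $S$-component, and the backward inclusion via \autoref{prop:abstract-lang-char} by verifying \eqref{eq:eq-cond-1} through the set-level invariant $S_r=\{k:\exists i.\,\Eq^{(i)}_k(r)\}$ together with the value refinement $\Eq^{(i)}_k(r)\Rightarrow(q_r)_k=b_i$. You are also right that the crux is the base case of that refinement, where $k$ enters $S_r$ through an equation $\bullet=k\in E_r$.

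The gap you flag is genuine, and with \autoref{cons:tilde-A} as written it cannot be closed. By definition $E_S$ retains only equations whose \emph{left-hand side} is a register in $S$, so it never contains any equation $\bullet=k$; hence consistency with $(j_{r-1},(E_r)_{S_{r-1}},j_r)$ imposes no constraint whatsoever on $(q_r)_k$ when $k$ enters $S_r$ solely via $\bullet=k$. A concrete witness: take $m=1$, $J=\{j_0,j_1,j_2\}$, $J_I=\{j_0\}$, $J_F=\{j_2\}$, with abstract transitions $j_0\xto{\{\bullet=1\}}j_1$ and $j_1\xto{\{1=\bullet,\,\bullet=1,\,1=1\}}j_2$ (realised in $A$ by $(j_0,c)\xto{a}(j_1,a)$ for $c\neq a$ and $(j_1,a)\xto{a}(j_2,a)$). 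Then $L(\barA)=\{aa:a\in\At\}$, but in $\widetilde A$ the first step is consistent with $(j_0,\emptyset,j_1)$ vacuously, so $(j_0,\emptyset,q_0)\xto{b_1}(j_1,\{1\},b_2)\xto{b_2}(j_2,\{1\},b_2)$ is an accepting run for arbitrary $b_1,b_2$, giving $L(\widetilde A)\supseteq\At^2\supsetneq L(\barA)$. The paper's own proof passes over exactly this point with the unjustified assertion ``we have $b_i=(q_r)_k$''. The natural repair is to enlarge $E_S$ by all equations $\bullet=k\in E$; with that amendment your base case becomes immediate (consistency then forces $(q_r)_k=b_r$), and the remainder of your argument---as well as the paper's---goes through.
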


\begin{proof}
\emph{$L(\barA)\seq L(\widetilde{A})$:} Suppose that $b_1\ldots b_n\in L(\barA)$ with accepting run
\[ (j_0,q_0)\xto{b_1} (j_1,q_1)\xto{b_2} \cdots \xto{b_n} (j_n,q_n) \]
in $\barA$.
Then for all $r<n$ the transition $(j_{r},q_{r})\xto{b_{r+1}} (j_{r+1},q_{r+1})$ is consistent with some abstract transition $j_{r}\xto{E_{r+1}} j_{r+1}$ (\autoref{prop:trans-bar-A}). Hence it is also consistent with $(j_{r},({E_{r+1}})_S, j_{r+1})$ for every $S\seq \{1,\cdots,m\}$. It follows that we have the accepting run 
\[ (j_0,S_0,q_0)\xto{b_1} (j_1,S_1,q_1)\xto{b_2} \cdots \xto{b_n} (j_n,S_n,q_n) \]
in $\widetilde{A}$ where $S_0=\emptyset$ and $S_{r}\rightsquigarrow_{E_r} S_{r+1}$ for all $r<n$, whence $b_1\ldots b_n\in L(\widetilde{A})$.

\medskip\noindent \emph{$L(\widetilde{A})\seq L(\barA)$:} Suppose that $b_1\ldots b_n\in L(\widetilde{A})$ with accepting run
\[ (j_0,S_0,q_0)\xto{b_1} (j_1,S_1,q_1)\xto{b_2} \cdots \xto{b_n} (j_n,S_n,q_n) \]
in $\widetilde{A}$. By definition of the transitions of $\widetilde{A}$, for all $r<n$ there exists an abstract transition $j_{r}\xto{E_{r+1}} j_{r+1}$ of $A$ such that  $((j_{r},q_{r}),b_{r+1},(j_{r+1},q_{r+1}))$ is consistent with $(j_{r},(E_{r+1})_{S_{r}},j_{r+1})$, and moreover $S_{r}\rightsquigarrow_{E_r} S_{r+1}$ (where $S_0=\emptyset$). To prove $b_1\ldots b_n\in L(\barA)$ we employ \autoref{prop:abstract-lang-char}: we verify that the abstract run
\[ j_0\xto{E_1} j_1\xto{E_2} \cdots \xto{E_n} j_n \]
with its associated predicates $\Eq_i^{(k)}$ satisfies property \eqref{eq:eq-cond-1}. Thus let $r<n$, $k=\bullet$ in $E_{r+1}$ and $\Eq^{(i)}_k(r)$ for some $k$. By definition of $\Eq^{(i)}_k(r)$ and $\rightsquigarrow$, we have $b_i=(q_{r})_k$ and $k\in S_r$.
Since $k=\bullet\in (E_{r+1})_{S_r}$ and the triple $((j_r,q_r),b_{r+1},(j_{r+1},b_{r+1}))$ is consistent with $(j_r,(E_{r+1})_{S_r},j_{r+1})$, it follows that $b_i=(q_r)_k=b_{r+1}$, as required.
\end{proof}
\autoref{thm:positive-non-guessing} now follows from the above lemma and the following one:
\begin{lemma} \label{lem:pos-nofa-non-guessing}
The NOFRA $\widetilde{A}$ is equivalent to a non-guessing NOFRA.
\end{lemma}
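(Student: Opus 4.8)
The plan is to exhibit an explicit non-guessing NOFRA $\widehat{A}$ with $L(\widehat{A})=L(\widetilde{A})$, the central idea being to render the content of the \emph{undetermined} registers harmless. In $\widetilde{A}$ a state $\maketuple{j,S,q}$ may store arbitrary (in particular fresh) values in the registers outside $S$, and these guessed values are exactly what obstructs the non-guessing property. I would therefore replace the register part $\At^m$ by $(\At\cup\{\bot\})^m$, where $\bot$ is a distinguished symbol fixed by every renaming, and force register $k$ to hold $\bot$ precisely when $k\notin S$. As $S$ is then recoverable as $\{k:q_k\neq\bot\}$, the states of $\widehat{A}$ may be taken to be $\widehat{Q}=J\times(\At\cup\{\bot\})^m$, an orbit-finite nominal renaming set, with initial states $J_I\times\{\bot^m\}$ and final states $J_F\times(\At\cup\{\bot\})^m$. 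The transitions mirror those of \autoref{cons:tilde-A}: for an abstract transition $(j,E,j')\in\abs(\delta)$ enabled on the defined registers (i.e.\ $q_k=b$ whenever $k$ is defined and $k=\bullet\in E$), register $k$ of the target receives $b$ if $\bullet=k\in E$, the value $q_{k''}$ if $k''=k\in E$ for some defined $k''$, and $\bot$ otherwise. Well-definedness when both $\bullet=k$ and $k''=k$ occur follows from \autoref{rem:nofa-abstr-transition-elements} together with the enabling requirement $q_{k''}=b$.

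First I would check that $\widehat{A}$ is a bona fide NOFRA and is non-guessing as in \autoref{def:non-guessing}. Equivariance under $\Fin(\At)$ is routine, since all transition conditions are equalities and $\bot$-tests, which are preserved by arbitrary renamings. The initial states $J_I\times\{\bot^m\}$ have empty support, and for every transition $(j,q)\xra{b}(j',q')$ each non-$\bot$ entry of $q'$ is either the input $b$ or a copied entry $q_{k''}$, so $\supp(j',q')\seq\supp(j,q)\cup\{b\}$; this is exactly the non-guessing condition.

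The substance of the argument is the language equivalence, and here I would route through the abstract-run characterisation rather than reason directly about the concrete registers of $\widetilde{A}$. Since $L(\widetilde{A})=L(\barA)=\ol{L(A)}$ by \autoref{lem:barA-bs-tildeA} and \autoref{prop:barA-acc-barL}, it suffices to prove $L(\widehat{A})=\ol{L(A)}$ via \autoref{prop:abstract-lang-char}. The key is the invariant that the registers of $\widehat{A}$ faithfully realise the predicates of \autoref{not:xik-pred}: along any run $(j_0,q_0)\xra{b_1}\cdots\xra{b_r}(j_r,q_r)$ whose $r$-th step is witnessed by $E_r$, whenever $\Eq^{(i)}_k(r)$ holds one has $q_{r,k}=b_i$ (in particular register $k$ is defined). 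This follows by induction on $r$ using the base clause ($\bullet=k\in E_i$ forces $q_{i,k}=b_i$) and the propagation clause ($k=\bark\in E_{r+1}$ copies $q_{r,k}$ to position $\bark$, so $q_{r+1,\bark}=q_{r,k}$). Granting the invariant, the forward inclusion is immediate: a $\widehat{A}$-run induces an accepting abstract run, and whenever $k=\bullet\in E_{r+1}$ with $\Eq^{(i)}_k(r)$, the enabling condition gives $q_{r,k}=b_{r+1}$ while the invariant gives $q_{r,k}=b_i$, so \eqref{eq:eq-cond-1} holds. Conversely, given an accepting abstract run satisfying \eqref{eq:eq-cond-1}, I would build the unique $\widehat{A}$-run following it (starting from $\bot^m$ and applying the forced updates); the only point to verify is that each step is enabled, i.e.\ $q_{r,k}=b_{r+1}$ whenever $k$ is defined and $k=\bullet\in E_{r+1}$, which is exactly what \eqref{eq:eq-cond-1} delivers via the invariant.

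I expect the main obstacle to be precisely this last point. In $\widetilde{A}$ an input-determined register may legitimately carry a fresh value, so one cannot obtain $\widehat{A}$ by naively resetting $\widetilde{A}$'s registers and hoping the language is preserved; the correctness of forcing the non-fresh value $b$ into an input-pinned register is not granted by \autoref{rem:tildeA-trans} alone, which only licenses the harmless resetting of the registers outside $S'$. The abstract-run detour circumvents this by never appealing to the concrete register values of $\widetilde{A}$: correctness of the forced values in $\widehat{A}$ is guaranteed by condition \eqref{eq:eq-cond-1} itself. Assembling the two inclusions yields $L(\widehat{A})=\ol{L(A)}=L(\widetilde{A})$ with $\widehat{A}$ non-guessing, completing the proof.
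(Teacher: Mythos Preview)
Your approach is correct and takes a genuinely different route from the paper. The paper works entirely inside $\widetilde{A}$: it first drops all guessing transitions (obtaining $\widetilde{A}_\ngt$) and uses \autoref{rem:tildeA-trans} to show that any accepting run can be massaged into one where registers outside $S_r$ carry an already-seen letter, hence lies in $\widetilde{A}_\ngt$; it then adjoins a fresh initial state with empty support to handle the initial configurations. Your construction instead bypasses $\widetilde{A}$ and builds a partial-register automaton $\widehat{A}$ over $J\times(\At\cup\{\bot\})^m$ directly from $\abs(\delta)$, with correctness argued against $\ol{L(A)}$ via \autoref{prop:abstract-lang-char}. What you gain is a conceptually clean invariant (the defined registers \emph{are} the $\Eq$-predicates) and no need for the ad-hoc ``fill with the current input letter'' trick; what the paper gains is that the argument stays elementary and does not reinvoke the somewhat heavy \autoref{prop:abstract-lang-char}.

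One point to tighten: for the backward inclusion you need, at each step, that if register $k$ is \emph{defined} then $\Eq^{(i)}_k(r)$ holds for some $i$ (so that \eqref{eq:eq-cond-1} applies). Your stated invariant only records the converse implication. The missing direction is immediate---your update rule makes register $k$ defined at step $r$ precisely when $\bullet=k\in E_r$ or $k''=k\in E_r$ with $k''$ defined at step $r-1$, which is literally the recursion of \autoref{not:xik-pred}---but you should say this explicitly, since it is the hinge of the enabling argument.
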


\begin{proof}
We turn $\widetilde{A}$ into an equivalent non-guessing NOFRA $\widetilde{A}_\ng$ by first removing all guessing transitions, and then dealing with initial states with non-empty support. In more detail:
\begin{enumerate}
\item Let $\widetilde{A}_\ngt$ be the sub-NOFRA of $\widetilde{A}$ obtained by restricting to non-guessing transitions, i.e.~transitions $(j,S,q)\xto{b} (j',S',q')$ where $\supp q'\seq \supp q \cup \{b\}$. We claim that $L(\widetilde{A}_\ngt)=L(\widetilde{A})$. The left-to-right inclusion is clear. For the right-to-left inclusion, suppose that $b_1\cdots b_n\in L(\widetilde{A})$ with accepting run
\[ (j_0,S_0,q_0)\xto{b_1} (j_1,S_1,q_1)\xto{b_2} \cdots \xto{b_n} (j_n,S_n,q_n) \]
in $\widetilde{A}$. By \autoref{rem:tildeA-trans} we obtain another accepting run
\[ (j_0,S_0,\ol{q}_0)\xto{b_1} (j_1,S_1,\ol{q}_1)\xto{b_2} \cdots \xto{b_n} (j_n,S_n,\ol{q}_n) \]
where $\ol{q}_0=b_1^m$ and for $r=1,\ldots,n$ we put $(\ol{q}_r)_k=(q_r)_k$ if $k\in S_r$ and $(\ol{q}_r)_k=b_{r-1}$ if $k\not\in S_{r}$. Since all these transitions are non-guessing, this an accepting run in $\widetilde{A}_\ngt$, so $b_1\cdots b_n\in L(\widetilde{A}_\ngt)$.
\item Now let $\widetilde{A}_\ng$ emerge from $\widetilde{A}_\ngt$ by adding a new initial state $q_0$ with $\supp q_0 = \emptyset$ (which is also final if $J_I\cap J_F\neq \emptyset$), making all states of $\widetilde{A}_\ngt$ non-initial, and adding a transition $q_0\xto{b} (j',S',q')$ for each transition $(j,\emptyset,b^m)\xto{b} (j',S',q')$ of $\widetilde{A}_\ngt$ where $j\in J_I$. The NOFRA $\widetilde{A}_\ng$ is non-guessing and satisfies $L(\widetilde{A}_\ng)=L(\widetilde{A}_\ngt)$ by \autoref{rem:tildeA-trans}.\qedhere
\end{enumerate}
\end{proof}

\subsection*{Proof of \autoref{thm:pos-nofa-vs-pos-reg}}
The ``if'' direction follows from the fact that every register automaton admits an equivalent NOFA~\cite{BojanczykEA14} and from 

\begin{proposition}\label{prop:pos-reg-acc-pos-lang}
Every positive register automaton accepts a positive language.
\end{proposition}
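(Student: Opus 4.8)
The goal is to show that $L(A)$ is $\Fin(\At)$-equivariant, i.e.\ closed under arbitrary renamings. So I fix a word $w = a_1\cdots a_n\in L(A)$ and a renaming $\rho\colon\At\to\At$, and aim to prove $\rho^\star(w)\in L(A)$.

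First I would take an accepting run $(c_0,r_0)\xto{a_1}(c_1,r_1)\xto{a_2}\cdots\xto{a_n}(c_n,r_n)$ of $A$ on $w$, where $r_0=\bot^m$ and $c_n\in F$. Extending $\rho$ to register words componentwise with $\rho(\bot)=\bot$, I would consider the candidate sequence $(c_0,\rho(r_0))\xto{\rho(a_1)}(c_1,\rho(r_1))\xto{\rho(a_2)}\cdots\xto{\rho(a_n)}(c_n,\rho(r_n))$ on the input $\rho^\star(w)$. Since the control states are unchanged, $\rho(\bot^m)=\bot^m$, and $c_n\in F$, the first configuration is initial and the last state is final; thus it remains only to check that each move $(c_{i-1},\rho(r_{i-1}))\xto{\rho(a_i)}(c_i,\rho(r_i))$ is still consistent with the transition $c_{i-1}\xto{\phi_i}c_i$ that justified the original move.

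This is the crux, and the main obstacle. Let $\alpha_i$ be the truth assignment to the atoms in $\Phi$ induced by the original data $(r_{i-1},a_i,r_i)$, and $\alpha_i^\rho$ the one induced by $(\rho(r_{i-1}),\rho(a_i),\rho(r_i))$. The key observation is that $\alpha_i\le\alpha_i^\rho$ pointwise, i.e.\ every atom true under $\alpha_i$ is also true under $\alpha_i^\rho$: an atom $x=y$ is true exactly when the two referenced data values are both defined and equal, and applying $\rho$ preserves definedness (as $\rho$ maps $\At$ into $\At$ and $\bot$ to $\bot$) and sends equal data values to equal data values. Note that $\alpha_i^\rho$ may render strictly more atoms true than $\alpha_i$, because $\rho$ can identify formerly distinct values---and this is precisely where positivity enters. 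Since $\phi_i$ is positive (built from $\vee$, $\wedge$, or equal to $\mathrm{true}$), it is monotone in its atoms, so $\phi_i$ true under $\alpha_i$ forces $\phi_i$ true under $\alpha_i^\rho$. Hence the renamed move is consistent, and assembling all moves yields an accepting run on $\rho^\star(w)$, so $\rho^\star(w)\in L(A)$.

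I expect the only routine part to be the pointwise comparison $\alpha_i\le\alpha_i^\rho$, verified by a short case distinction over the three kinds of positions $(k,\mathrm{before})$, $\bullet$, $(k,\mathrm{after})$ appearing in an atom; each case is immediate from the fact that a renaming preserves equality and (non-)definedness of values. The conceptual heart is that renamings act monotonically on the lattice of truth assignments while positive formulas respect that order---exactly the property that fails for negated equality tests, and hence for non-positive register automata.
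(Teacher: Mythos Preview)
Your proof is correct and follows essentially the same approach as the paper: take an accepting run, apply the renaming $\rho$ componentwise (fixing $\bot$), and verify that each move remains consistent with the same transition $c_{i-1}\xto{\phi_i}c_i$. The paper merely asserts this consistency, whereas you spell out the reason via monotonicity of positive formulas with respect to the pointwise order on atomic truth assignments; this is exactly the content the paper leaves implicit.
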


\begin{proof}
Let $A$ be a positive register automaton and $w=a_1\ldots a_n\in L(A)$ with accepting run
\[ (c_0,r_0)\xto{a_1} (c_1,r_1)\xto{a_2} \cdots \xto{a_{n}} (c_n,r_n). \]
For $i=0,\ldots,n-1$ we have that $(c_i,r_i)\xto{a_{i+1}} (c_{i+1},r_{i+1})$ is consistent with some transition $c_i\xto{\phi_{i+1}} c'_{i+1}$. Then $(c_i,\rho^\star r_i)\xto{\rho a_{i+1}} (c_{i+1},\rho^\star r_{i+1})$ for every renaming $\rho\colon \At\to\At$ since this is also consistent with $c_i\xto{\phi_{i+1}} c'_{i+1}$. Hence $A$ accepts $\rho^\star w=\rho a_1\ldots \rho a_n$ via the run
\[ (c_0,\rho^\star r_0)\xto{\rho a_1} (c_1,\rho^\star r_1)\xto{\rho a_2} \cdots \xto{\rho a_{n}} (c_n,\rho^\star r_n). \]
This proves $\rho^\star w \in L(A)$, showing that $L(A)$ is a positive language.
\end{proof}

For the ``only if'' direction, suppose that $L\seq\Ats$ is a positive NOFA-recognizable language. Then $L$ is accepted by a NOFA of the form $\barA=(\barQ,\barDelta,\barI,\barF)$, cf.\ \autoref{cons:bar-A}, in particular $\barQ=J\times \At^{\#m}$ for some $J$ and $m$. We regard an equation as per \autoref{def:nofa-abstract}.\ref{def:nofa-abstract:1} as an equation in $\Phi$ by identifying
\[ k=\bullet \;\leftrightarrow\; (k,\mathrm{before})=\bullet,\quad \bullet=k \;\leftrightarrow\; \bullet=(k,\mathrm{after}),\quad k=\bark \;\leftrightarrow\; (k,\mathrm{before})=(\bark,\mathrm{after}),  \]
and turn $\barA$ into a positive register automaton $\barA_\reg=(J_\reg,m,\delta_\reg, \{j_{0,\reg}\},F_\reg)$ as follows:
\begin{itemize}
\item The set of control states is $J_\reg = J \cup \{ j_{0,\reg} \}$ where $j_{0,\reg}\not\in J$;
\item $j_{0,\reg}$ is the only initial state;
\item every state in $J_F$ is final; additionally $j_{0,\reg}$ is final if $J_I\cap J_F\neq \emptyset$;
\item for every abstract transition $j\xto{E} j'$ of $A$, the automaton $\barA_\reg$ contains the transition $j\xto{\bigwedge E} j'$, where $\bigwedge E$ is the conjunction of all equations in $E$ (note that $\bigwedge \emptyset = \mathrm{true}$);
\item for every abstract transition $j_{0}\xto{E} j$ of $A$ where $j_0\in J_I$, the automaton $\barA_\reg$ contains the transition $j_{0,\reg}\xto{\phi_E} j$ where $\phi_E=(\bullet=k)$ if $\bullet=k$ in $E$, and $\phi_E=\mathrm{true}$ otherwise. 
\end{itemize}

We claim that $L=L(\barA_\reg)$. For the inclusion ($\seq$), let $w=b_1\ldots b_n\in L=L(\barA)$. Then in $\barA$ we have an accepting run
\[
(j_0,q_0)\xto{b_1} (j_1,q_1)\xto{b_2}\cdots \xto{b_n} (j_n,q_n).
\]
whose transitions are consistent with some accepting abstract run
\[ j_0\xto{E_1} j_1 \xto{E_2} \cdots  \xto{E_n}  j_n. \]
It follows that the register automaton $\barA_\reg$ admits the transitions 
\[ j_{0,\reg} \xto{\phi_{E_1}} j_1 \xto{\bigwedge E_2} \cdots  \xto{\bigwedge E_n}  j_n \]
and that
\[ (j_{0,\reg},\bot^m)\xto{b_1} (j_1,q_1)\xto{b_2}\cdots \xto{b_n} (j_n,q_n)\]
is an accepting run consistent with them. Therefore $w\in L(\barA_\reg)$.

\medskip\noindent
For the inclusion ($\supseteq$), let $w=b_1\cdots b_n\in L(\barA_\reg)$ with accepting run
\[ (j_{0,\reg},\bot^m)\xto{b_1} (j_1,r_1)\xto{b_2} \cdots \xto{n_{n}} (j_n,r_n), \]
in $\barA_\reg$, where $r_i\in (\At\cup \{\bot\})^m$. Note that $j_1,\ldots,j_n\in J$. By definition of the transitions of $\barA_\reg$, there exists an abstract transition $j_0\xto{E_1} j_1$ of $A$ such that $(j_{0,\reg},\bot^m)\xto{b_1} (j_1,r_1)$ is consistent with the transition $j_{0,\reg}\xto{\phi_E} j_1$ of $\barA_\reg$, and for $i=1,\ldots,n$ there exists an abstract transition $j_{i-1}\xto{E_i} j_i$ of $\barA$ such that $(j_{i-1},r_{i-1})\xto{b_i} (j_i,r_i)$
is consistent with the transition $j_{i-1}\xto{\bigwedge E_i} j_i$ of $\barA_\reg$.
Now choose $q_0,\ldots,q_n\in \At^m$ as follows:
\begin{itemize}
\item For $i=1,\ldots,n$ choose $q_i$ such that $q_{i,k}=r_k$ whenever $r_k\neq \bot$.
\item Choose $q_0$ such that $q_{0,k}=q_{1,\overbar{k}}$ if $k=\bark$ in $E_1$, and $q_{0,k}=b_1$ if $k=\bullet$ in $E_1$. 
\end{itemize}
Then for each $i=1,\ldots,n$ we have the transition $(j_{i-1},q_{i-1})\xto{b_i} (j_i,q_i)$ in $\barA$, as it consistent with the abstract transition $j_{i-1}\xto{E_i} j_i$. Therefore
\[
(j_0,q_0)\xto{b_1} (j_1,q_1)\xto{b_2}\cdots \xto{b_n} (j_n,q_n).
\]
is an accepting run in $\barA$, showing that $w\in L(\barA)=L$.

\subsection*{Details for~\autoref{rem:pos-reg-aut-vs-fsuba}}

We provide more details on the stated equivalence between positive register automata and a a version finite-state unification-based automata (FSUBA)~\cite{tal99,kt06}. We first recall
the definition of the latter.

\begin{notation}
  For any natural number $r$, we denote by $\ul{r}$ the set of all natural numbers between $1$ and $r$, inclusively. $\ul{0}$ denotes the empty set.
  We denote by $\Perm(\ul{r})$ the group of all permutations on the finite set $\ul{r}$ and note that there is an obvious
  group action of $\Perm(\ul{r})$ on $\names^{\# r}$ that is defined as follows:
  For any $\pi \in \Perm(\ul{r})$ and $w \in \names^{\# r}$, we define the word $\pi\star w\in \At^{\#r}$ by $(\pi \star w)_k=w_{\pi(k)}$ for $k\in \ul{r}$. Note that this action is compatible with the $\Perm(\At)$-action: $\pi\star (\rho\cdot w) = \rho\cdot (\pi\star w)$ for $\rho\in \Perm(\At)$.
\end{notation}

\begin{defn} \label{def:fsuba}
  A \emph{finite-state unification-based automaton (FSUBA)} is a quintuple $A = \maketuple{Q,m,\mu,q_0,F}$ where
  $Q$ is a finite set of control states, $m \in \Nat$ is the number of registers (numbered from $1$ to $m$), $q_0$
  is the initial state, $F \seq Q$ the set of final states, and $\mu \seq Q \times \ul{m} \times \pow(\ul{m}) \times Q$
  is the transition relation. Here $\pow$ denotes the powerset.
  A \emph{configuration} of $A$ is a pair $\maketuple{q,w}$ of a state $q \in Q$ and a word $w \in (\names \cup \set{\bot})^m$ corresponding to
  a partial assignment of data values to the registers. The initial configuration is $\maketuple{q_0,\bot^m}$, final configurations are all
  $\maketuple{q_f,w}$ with $q_f \in F$. We let $Q^c$ and $F^c$ denote the sets of configurations and final configurations, respectively. Given an input $a \in \names$
  and configurations $\maketuple{q,w}, \maketuple{q',w'}$ we write $\maketuple{q,w} \xra{a} \maketuple{q',w'}$ if this move
  is \emph{consistent} with some transition $\maketuple{q,k,T,q'}$, which means that the following conditions are satisfied:
  (i) $w_k \in \set{\bot, a}$; (ii) $k \notin T \implies w'_k = a$; (iii) $\forall j \in T.\ w_j = \bot$; and
  (iv) $\forall j \notin T \cup \set{k}.\ w'_j = w_j$.
  We denote the induced move relation on $Q^c \times \names \times Q^c$ by $\mu^c$.
  A word $a_1 \cdots a_n \in \Ats$ is \emph{accepted} by $A$ if there exists an accepting run for it, viz.~a sequence of configurations
  $\maketuple{q_0,\bot^m} \xra{a_1} \maketuple{q_1,{w_1}} \xra{a_2} \cdots \xra{a_n} \maketuple{q_n,{w_n}}$, where $q_n \in F$.
  We write $L(A) \seq \Ats$ for the language of accepted words.
\end{defn}

\begin{rem}
  In comparison to the original definition of Tal~\cite{tal99,kt06}) we do not allow an initial assignment
  of the registers, since otherwise the accepted languages are not equivariant but only finitely supported. Doing this also
  suppresses the \enquote{read-only} alphabet, a subset of the data values occurring in the initial assignment.
\end{rem}

\begin{rem} \label{rem:fsuba-is-nofa}
  Every FSUBA $A = \maketuple{Q,m,\mu,q_0,F}$ can be translated into an expressively equivalent NOFA
  $N = \maketuple{Q^c,\mu^c,\set{\maketuple{q_0,\bot^m}},F^c}$, i.e. the configurations of the FSUBA are simply regarded as states of a NOFA. The corresponding NOFA has the set of configurations $Q^c$ as states,
  the singleton set $\set{\maketuple{q_0,\bot^m}}$ as initial states, and $F^c$ as final states. An accepting run of $A$ is then precisely an accepting run of $N$. 
\end{rem}

The structural difference between NOFA and FSUBAs is the inherent \enquote{non-guessing}
of FSUBAs and the fact that FSUBAs cannot move data values from one register to another; e.g., if register $2$ contains the data value $a$, then it cannot be moved to register $3$ in the next step, which is possible with a NOFA. The first issue will be fixed by use of~\autoref{thm:positive-non-guessing},
while for the second we will turn a NOFA $A$ given by~\autoref{rem:strong} into a \emph{rigid} NOFA, where the data value contained in a register is never moved to another register:

\begin{definition} \label{def:rigid}
  A NOFA $A = \maketuple{Q,\delta,I,F}$ with states $Q=J\times \At^{\#m}$ is \emph{rigid} if for every transition
  $\maketuple{j,p} \xra{a} \maketuple{j',p'}$ and for every $b\in \supp p\cap \supp p'$, there exists $k\in \ul{m}$ such that $p_k=b=q_k$.
\end{definition}

\begin{rem}\label{rem:eqs-rigid-nofa}
Hence rigid NOFA are those whose abstract transitions are of the form $\bullet = k$,
  $k = \bullet$, or $k = k$. The construction below turns any NOFA into a rigid one. The idea is to keep track, via the control state, which data values have changed their register.
\end{rem}

\begin{construction}\label{cons:non-perm}
  Let $A = \maketuple{Q, \delta, I, F}$ be a NOFA with states $Q=J\times \At^{\#m}$. We construct the rigid NOFA $A_\rg = \maketuple{Q_\rg, \delta_\rg, I_\rg, F_\rg}$
  given by
  \begin{itemize}
    \item states $Q_\rg = J \times \Perm(\ul{m}) \times \names^{\# m}$;
    \item initial states $I_\rg = J_I \times \set{\id_{\ul{m}}} \times \names^{\# m}$ and final states $F_\rg = J_F \times \Perm(\ul{m}) \times \names^{\# m}$;
    \item transitions defined as follows: Given $\maketuple{\maketuple{j,\pi,p},a,\maketuple{j',\pi',p'}} \in Q_\rg \times \names \times Q_\rg$ we have the transition
      $\maketuple{j,\pi,p} \xra{a} \maketuple{j',\pi',p'}$ in $A_\rg$ iff (i) $\maketuple{j,\pi\star p} \xra{a} \maketuple{j',\pi'\star p'}$ is a transition in $A$
      and (ii) for every $b \in \supp(p) \cap \supp(p')$ there exists $k \in \ul{m}$, such that $p_k = b = p'_k$.
  \end{itemize}

Note that by property (ii) of transitions, the NOFA $A_\rg$ is rigid.
\end{construction}

\begin{lemma}\label{lem:rigid-nofa}
  The NOFA $A$ and $A_\rg$ are equivalent.
\end{lemma}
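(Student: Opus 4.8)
The plan is to prove the two language inclusions separately, using the map on states $(j,\pi,p)\mapsto (j,\pi\star p)$ to mediate between runs of $A_\rg$ and runs of $A$: the rigid register content $p$ together with the permutation $\pi$ encodes the actual content $\pi\star p$ that $A$ would see.

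For the inclusion $L(A_\rg)\seq L(A)$, I would take an accepting run $(j_0,\pi_0,p_0)\xto{a_1}\cdots\xto{a_n}(j_n,\pi_n,p_n)$ in $A_\rg$ and apply the above map step by step. Condition~(i) in \autoref{cons:non-perm} says precisely that each $A_\rg$-transition projects to a transition $(j_{r-1},\pi_{r-1}\star p_{r-1})\xto{a_r}(j_r,\pi_r\star p_r)$ of $A$, and since $\pi\star p\in\At^{\#m}$ whenever $p\in\At^{\#m}$, the images are legitimate states. As $j_0\in J_I$ and $j_n\in J_F$, the resulting run is accepting, so $a_1\cdots a_n\in L(A)$.

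The converse inclusion $L(A)\seq L(A_\rg)$ carries the real content. Given an accepting run $(j_0,p_0')\xto{a_1}\cdots\xto{a_n}(j_n,p_n')$ in $A$, I would build an accepting run $(j_0,\id,p_0)\xto{a_1}\cdots\xto{a_n}(j_n,\pi_n,p_n)$ in $A_\rg$ with $\pi_r\star p_r=p_r'$ by induction on $r$, starting from $\pi_0=\id$ and putting $p_r=\pi_r^{-1}\star p_r'$. The permutation $\pi_r$ must be chosen so that every data value carried over from step $r-1$ to step $r$ occupies the same register before and after the transition. Concretely, the carry-over pairs are those $(k,\bark)$ with $(p_{r-1}')_k=(p_r')_{\bark}$, and by injectivity of $p_{r-1}',p_r'\in\At^{\#m}$ the assignment $\bark\mapsto k$ is a partial injection; I would define $\pi_r(\bark)=\pi_{r-1}(k)$ on its domain and extend arbitrarily to a permutation of $\ul m$. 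This assignment is injective (if $\pi_r(\bark_1)=\pi_r(\bark_2)$ then $k_1=k_2$, whence $\bark_1=\bark_2$), and since any injective partial map on a finite set extends to a permutation, $\pi_r$ is well defined.

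The main point to verify is then the rigidity condition~(ii). Since $\supp(p_r)=\supp(p_r')$ for every $r$ (permuting positions does not change the set of letters), a value $b\in\supp(p_{r-1})\cap\supp(p_r)$ corresponds to exactly one carry-over pair $(k,\bark)$; a short computation from $p_{r-1}=\pi_{r-1}^{-1}\star p_{r-1}'$ and $p_r=\pi_r^{-1}\star p_r'$ shows that $b$ occupies register $\pi_{r-1}(k)$ in $p_{r-1}$ and register $\pi_r(\bark)$ in $p_r$, and the defining equation $\pi_r(\bark)=\pi_{r-1}(k)$ places $b$ in a common register, as required by~(ii). Condition~(i) holds by construction, since $\pi_{r-1}\star p_{r-1}=p_{r-1}'$ and $\pi_r\star p_r=p_r'$ reduce it to the given transition $(j_{r-1},p_{r-1}')\xto{a_r}(j_r,p_r')$ of $A$; the endpoints lie in $I_\rg$ and $F_\rg$ because $\pi_0=\id$, $j_0\in J_I$ and $j_n\in J_F$. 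I expect the main obstacle to be exactly this bookkeeping of register movements through $\pi_r$: ensuring the partial constraint $\pi_r(\bark)=\pi_{r-1}(k)$ is consistent, extends to a genuine permutation, and yields rigidity at each step.
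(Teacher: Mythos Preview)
Your proposal is correct and follows essentially the same approach as the paper: both directions use the map $(j,\pi,p)\mapsto(j,\pi\star p)$, and for $L(A)\seq L(A_\rg)$ both build the rigid run inductively while maintaining the invariant $\pi_r\star p_r=p_r'$. The only cosmetic difference is that the paper first chooses the rigid register content $p_r$ (fixing the carried-over values in their old positions and filling the rest from $\supp p_r'$) and then picks $\pi_r$ with $\pi_r\star p_r=p_r'$, whereas you first define $\pi_r$ on the carry-over indices by $\pi_r(\bark)=\pi_{r-1}(k)$ and then set $p_r=\pi_r^{-1}\star p_r'$; these are two orderings of the same construction.
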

\begin{proof}
  \emph{$L(A_\rg)\seq L(A)$:} Every accepting run
\[ 
(j_0,\pi_0,p_0)\xto{a_1} (j_1,\pi_1,p_1)\xto{a_2} \cdots \xto{a_n} (j_n,\pi_n,p_n)
\]
of $A_\rg$ yields the following accepting run of $A$:
\[ 
(j_0,\pi_0\star p_0)\xto{a_1} (j_1,\pi_1\star p_1) \xto{a_2} \cdots \xto{a_n} (j_n,\pi_n\star p_n).
\]
\medskip\noindent \emph{$L(A)\seq L(A_\rg)$:} Given $a_1\ldots a_n \in L(A)$ with an accepting run 
\[\maketuple{j_0,q_0} \xra{a_1} \cdots \xra{a_{n}} \maketuple{j_n,q_n}\]
  in $A$, we inductively construct an accepting run 
\[\maketuple{j_0,\pi_0,q_0'} \xra{a_1} \cdots \xra{a_n} \maketuple{j_n,\pi_n,q_n'}\]
in $A_\rg$ such that $\pi_r \star q_r' = q_r$ for $r=0,\ldots,n$. We put $\maketuple{j_0,\pi_0,q_0'}=\maketuple{j_0,\id_{\ul{m}},q_0}$, which clearly fulfills $\pi_0\star q_0'=q_0$. Now suppose that that $0\leq r<n$ and that the first $r$ transitions with the required properties have been constructed. Then we construct the next transition $\maketuple{j_{r},\pi_{r},q_{r}'} \xra{a_{r+1}} \maketuple{j_{r+1},\pi_{r+1},q_{r+1}'}$ as follows. 
    Let $\mathcal{I} := \setw{k\in \ul{m}}{q_{r,k}' \in \supp(q_{r+1})}$. Choose $q_{r+1}'\in \At^{\#m}$ such that $\supp q_{r+1}'=\supp q_{r+1}$ and $q_{r+1,k}'=q_{r,k}'$ for $k\in \mathcal{I}$. Choose moreover a permutation $\pi_{r-1}\in \Perm(\ul{m})$ such that $\pi\star q_{r+1}'=q_{r+1}$. Note that since also $\supp q_r' = \supp q_r$, every data value in $\supp q_{r}'\cap \supp q_{r+1}'$ is equal to $q_{r,k}'$ for some $k\in \mathcal{I}$. Therefore $\maketuple{j_{r},\pi_{r},q_{r}'} \xra{a_{r+1}} \maketuple{j_{r+1},\pi_{r+1},q_{r+1}'}$ is a transition of $A_\rg$, as required. 
\end{proof}

\begin{rem}
  If we apply the construction $A\mapsto \widetilde{A}_\ng$ of~\autoref{lem:barA-bs-tildeA} and~\autoref{lem:pos-nofa-non-guessing} to a rigid NOFA, we see that for every transition
  $\maketuple{j,S,p} \xra{a} \maketuple{j',S',q}$ the set $S'$ is of the form $S \setminus T$ or $(S\setminus T)\cup \{k\}$ where $T\seq S$ and $k\in \ul{m}$. This follows directly from~\autoref{rem:eqs-rigid-nofa}.
\end{rem}
We will now show how to translate $\widetilde{A}_\ng$ into an equivalent FSUBA. The idea is to maintain in the control state, in addition to the set $S$ of ``relevant'' registers, a subset $R\seq S$ containing all registers that will eventually be compared with a future input value. Registers outside of $R$ then may be deleted.

\begin{construction} \label{cons:nofra-fsuba}
  Let $A = \maketuple{Q,\delta,I,F}$ be a rigid NOFA and $\widetilde{A}_{\ng} = \maketuple{\widetilde{Q},\widetilde{\delta}_{\ng},\set{q_0},\widetilde{F}_{\ng}}$
  be the corresponding non-guessing NOFRA of~\autoref{lem:pos-nofa-non-guessing} with states $J \times \pow(\ul{m}) \times \names^m \cup \set{q_0}$.
  Then the FSUBA $A_{\fsuba} = \maketuple{C, m + 1, \mu, q_0, F_{\fsuba}}$ with $m+1$ registers is given as follows:
  \begin{itemize}
    \item control states $C = \set{q_0}\cup \{ (j,R,S) : j\in J\text{ and } R\seq S\seq \ul{m} \}$;
    \item all states $(j,R,S)$ where $j\in J_F$ are final, and additionally $q_0$ is final if $J_I\cap J_F\neq \emptyset$;
    \item for each abstract transition $j\xto{E} j'$ in $\abs(\delta)$ and all pairs of sets $R,S\seq \ul{m}$ such that $R\seq S$ and $k=\bullet\in E_S$ implies $k\in R$, we have the following transitions:
\begin{enumerate}
\item if $k=\bullet$ in $E_S$ (hence $E_R$) or $\bullet=k$ in $E$ for some (unique) $k\in\ul{m}$, then  $\mu$ contains the transitions
\[ ((j,R,S),k,\ol{R'},(j',R',S')) \qquad\text{and}\qquad ((j,R,S),k,\ol{R'\setminus \{k\}},(j',R'\setminus \{k\},S')),  \]
where $R\rightsquigarrow_E R'$, $S\rightsquigarrow_E S'$, and $\ol{T}=\ul{m+1}\setminus T$ for any subset $T\seq \ul{m+1}$.
\item otherwise, $\mu$ contains the transition
\[ ((j,R,S),m+1,\ol{R'},(j',R',S')).  \]
\end{enumerate}
\item Additionally, for every transition $((j_0,\emptyset,\emptyset),k,T,(j,R,S))$ where $j_0\in J_I$, $k\in \ul{m+1}$ and $R,S,T\seq \ul{m+1}$, we have the transition $(q_0,k,T,(j,R,S))$.
  \end{itemize}
\end{construction}

\begin{lemma} \label{lem:pos-non-guessing-nofra-fsuba}
 The automata $\widetilde{A}_{\ng}$ and $A_{\fsuba}$ are equivalent.
\end{lemma}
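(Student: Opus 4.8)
The plan is to establish the two inclusions $L(\widetilde{A}_\ng)\seq L(A_\fsuba)$ and $L(A_\fsuba)\seq L(\widetilde{A}_\ng)$ by transforming accepting runs into one another, in the style of the proofs of \autoref{lem:barA-bs-tildeA} and \autoref{lem:rigid-nofa}. The guiding idea is that an FSUBA configuration $\maketuple{(j,R,S),w}$ represents a state $\maketuple{j,S,q}$ of $\widetilde{A}_\ng$ precisely when $w_k=q_k$ for all $k\in R$, while $w_k=\bot$ for $k\notin R$ (including the scratch register $m+1$). Thus the extra component $R\seq S$ records those determined registers whose content is still \emph{live}, i.e.\ will be read by a future comparison; this is exactly the data a finite-memory automaton must retain, and it is what lets $A_\fsuba$ work with partial assignments where $\widetilde{A}_\ng$ uses total ones. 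Throughout I will use that $A$ is rigid (\autoref{def:rigid}): by \autoref{rem:eqs-rigid-nofa} every abstract transition $j\xto{E} j'$ carries at most one comparison $k=\bullet$ and at most one load $\bullet=k$, and in the rigid case these must refer to the same register, yielding a well-defined \emph{active register} per step.

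For $L(A_\fsuba)\seq L(\widetilde{A}_\ng)$ I would start from an accepting FSUBA run, read off from the visited control states $(j_r,R_r,S_r)$ and the transitions taken the underlying abstract run $j_0\xto{E_1}\cdots\xto{E_n}j_n$ (such $E_r$ exist by \autoref{cons:nofra-fsuba}) together with the sets $S_r$, and extend each partial register word $w_r$ to a total assignment $q_r\in\At^m$ by keeping $q_{r,k}=w_{r,k}$ for $k\in R_r$ and filling the remaining registers arbitrarily. The constraint of \autoref{cons:nofra-fsuba} that $k=\bullet\in E_S$ implies $k\in R$ guarantees that every comparison enforced by $(E_{r+1})_{S_r}$ concerns a live, hence correctly stored, register, so consistency of the FSUBA move with $(j_r,E_{r+1},j_{r+1})$ lifts to consistency of $\maketuple{j_r,S_r,q_r}\xto{b_{r+1}}\maketuple{j_{r+1},S_{r+1},q_{r+1}}$ in $\widetilde{A}_\ng$. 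The initial $q_0$-transitions of the two automata correspond directly, and non-guessing is inherited.

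The converse inclusion is the substantive one. Given an accepting run of $\widetilde{A}_\ng$ with abstract run $j_0\xto{E_1}\cdots\xto{E_n}j_n$, I would define the live sets by backward recursion, setting $R_n=\emptyset$ and
\[ R_r=\{k\in S_r: k=\bullet\in E_{r+1}\}\;\cup\;\{k\in S_r: k=\bark\in E_{r+1}\text{ for some }\bark\in R_{r+1}\}, \]
so that $R_r\seq S_r$ collects exactly the registers whose step-$r$ content is inspected by a later comparison. Two things then have to be checked. First, that these sets are realisable by the FSUBA transition structure: one shows $R_{r+1}$ is either $\rightsquigarrow_{E_{r+1}}(R_r)$ or $\rightsquigarrow_{E_{r+1}}(R_r)\setminus\{k\}$ for the active register $k$. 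Second, that the partial words $w_r$ agreeing with $q_r$ on $R_r$ (and undefined elsewhere) satisfy the FSUBA consistency conditions of \autoref{def:fsuba}; here liveness does the work on condition~(i), $w_{r,k}\in\{\bot,a\}$: if the active register $k$ is compared then $k\in R_r$ and $w_{r,k}=q_{r,k}=b_{r+1}$ by consistency in $\widetilde{A}_\ng$, whereas if $k$ is merely loaded its \emph{old} content is overwritten and never compared afterwards, so $k\notin R_r$ and $w_{r,k}=\bot$; deleting all non-live registers takes care of the remaining conditions, and a step with neither comparison nor load uses the scratch register $m+1$.

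I expect the realisability claim to be the main obstacle, and this is exactly where rigidity is indispensable. One must argue that the only register in $\rightsquigarrow_{E_{r+1}}(R_r)$ which can fail to be live after step $r+1$ is the active register: a register carried over by a $k=k$ equation that was live before remains live after (by the backward definition), and a loaded register is the active register itself. Hence at each step at most one register needs to be deleted, and it is always the active one, so the single keep/delete choice offered by \autoref{cons:nofra-fsuba} is sufficient. Once this bookkeeping is settled, transferring the acceptance condition ($j_n\in J_F$ versus a final control state) and bridging the first step through the initial state $q_0$ are routine.
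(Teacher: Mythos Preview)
Your strategy coincides with the paper's: it, too, defines the live-register sets by ``will this register's content be compared with some future input?'' (stated as a forward-looking formula equivalent to your backward recursion), takes partial FSUBA words that agree with the $\widetilde{A}$-assignment on $R_r$ and are $\bot$ elsewhere, and checks exactly the realisability point you isolate, namely that $R_{r+1}$ is $\rightsquigarrow_{E_{r+1}}(R_r)$ or that set with the active register removed.

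One small gap to flag in your easier direction: ``filling the remaining registers arbitrarily'' is not quite enough. The equations in $(E_{r+1})_{S_r}$ include not only comparisons $k=\bullet$ but also copy equations $k=k$ with $k\in S_r$, and these may involve registers $k\in S_r\setminus R_r$. If you fill $q_{r,k}$ and $q_{r+1,k}$ independently the copy constraint $q_{r,k}=q_{r+1,k}$ can fail. The paper's fix is the obvious one: for $k\in S_r\setminus R_r$ propagate the value forward by setting $v_{r,k}=v_{r-1,k}$ when $k=k\in (E_r)_{S_{r-1}}$, and $v_{r,k}=b_r$ when $\bullet=k\in E_r$ (these are the only ways $k$ can enter $S_r$). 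With this adjustment your plan goes through.
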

\begin{proof}
  \emph{$L(\widetilde{A}_{\ng})\seq L(A_{\fsuba})$:} Let $a_1\cdots a_n\in L(\widetilde{A}_\ng)=L(\widetilde{A})$ with accepting run
\[ (j_0,S_0,v_0)\xto{a_1} (j_1,S_1,v_1)\xto{a_2} \cdots \xto{a_n} (j_n,S_n,v_n) \]
in $\widetilde{A}$. Each transition $(j_{r-1},S_{r-1},v_{r-1})\xto{a_r} (j_r,S_r,v_r)$, $r=1,\ldots,n$, is induced by some abstract transition $j_{r-1}\xto{E_r} j_r$ in $\abs(\delta)$, that is, the triple $((j_{r-1},v_{r-1}),a_r,(j_r,v_r))$ is consistent with $(j_{r-1},(E_r)_{S_{r-1}},j_r)$ and $S_{r-1}\rightsquigarrow S_r$. We now construct an  accepting run 
\[  
(q_0,w_0) \xto{a_1} ((j_1,R_1,S_1),w_1)\xto{a_2} \cdots \xto{a_n} ((j_n,R_n,S_n),w_n)
\]
in $A_\fsuba$ whose data is defined as follows for $r=1,\ldots,n$:
\begin{itemize}
\item $R_r$ is the set of registers in $S_r$ that will be compared with some later input, that is,
\[ R_r = \{\,k\in S_r : \exists s\in \{r+1,\ldots,n\} : k=k\in E_{r+1},\ldots E_{s-1} \text{ and } k=\bullet\in E_s\,\}. \]
\item $w_0=\bot^{m+1}$, and $w_{r,k}=v_{r,k}$ for $k\in R_r$ and $w_{r,k}=\bot$ for $k\in \ol{R_r}$.
\end{itemize}

Note that the first move is equivalent to having a move $((j_0,R_0,S_0),w_0) \xto{a_1} ((j_1,R_1,S_1),w_1)$ where $j_0\in J_I$ and $S_0=R_0=\emptyset$. Let us now verify that $((j_{r-1},R_{r-1},S_{r-1}),w_{r-1}) \xto{a_r} ((j_r,R_r,S_r),w_r)$ is indeed a valid move for $r=1,\ldots,n$, i.e.\ consistent with some transition of $A_\fsuba$. We distinguish two cases:
\begin{itemize}
\item If $k=\bullet$ in $(E_r)_{S_{r-1}}$ or $\bullet=k$ in $E_r$ for some $k\in \ul{m}$, take the transition \begin{equation}\label{eq:fsuba-trans} ((j_{r-1},R_{r-1},S_{r-1}),k,\ol{R_r},(j_r,R_r,S_r)).\end{equation}
Note that this transition is induced by the abstract transition $j_{r-1}\xto{E_r} j_r$ and the pair $R_{r-1},S_{r-1}$ as per \autoref{cons:nofra-fsuba}: if $R_{r-1}\rightsquigarrow_{E_r} R'$ then either $R_r=R'$ or $R_r=R'\setminus \{k\}$
by definition of the sets $R_{r-1}$ and $R_r$, and moreover if $k=\bullet$ in $(E_r)_{S_{r-1}}$ then $k\in R_{r-1}$ by definition of $R_{r-1}$. The move $((j_{r-1},R_{r-1},S_{r-1}),w_{r-1}) \xto{a_r} ((j_r,R_r,S_r),w_r)$ clearly satisfies the consistency conditions (i)--(iv) of \autoref{def:fsuba} w.r.t.\ \eqref{eq:fsuba-trans}.
\item Otherwise, take the transition
\begin{equation}\label{eq:fsuba-trans2} ((j_{r-1},R_{r-1},S_{r-1}),m+1,\ol{R_r},(j_r,R_r,S_r)).
\end{equation}
Note again that this transition is induced by the abstract transition $j_{r-1}\xto{E_r} j_r$ and the pair $R_{r-1},S_{r-1}$ as per \autoref{cons:nofra-fsuba}: one has $R_{r-1}\rightsquigarrow R_r$ by definition of the sets $R_{r-1}$ and $R_r$. The move $((j_{r-1},R_{r-1},S_{r-1}),w_{r-1}) \xto{a_r} ((j_r,R_r,S_r),w_r)$ clearly satisfies the consistency conditions (i)--(iv) of \autoref{def:fsuba} w.r.t.\ \eqref{eq:fsuba-trans2}.
\end{itemize}

\noindent $L(A_\fsuba)\seq L(\widetilde{A}_\ng)$: Let $a_1\cdots a_n\in L(A_\fsuba)$ with accepting run 
\[  
(q_0,w_0) \xto{a_1} ((j_1,R_1,S_1),w_1)\xto{a_2} \cdots \xto{a_n} ((j_n,R_n,S_n),w_n).
\]
The first move is equivalent to having a move $((j_0,R_0,S_0),w_0) \xto{a_1} ((j_1,R_1,S_1),w_1)$ where $j_0\in J_I$ and $S_0=R_0=\emptyset$. For $r=1,\ldots,n$ the move $((j_{r-1},R_{r-1},S_{r-1}),w_{r-1}) \xto{a_r} ((j_r,R_r,S_r),w_r)$ is consistent with a transition of $A_\fsuba$ induced by some abstract transition $j_{r-1}\xto{E_r}j_r$ in $\abs(\delta)$ and the pair $R_{r-1}, S_{r-1}$. This yields the accepting run
\[ (j_0,S_0,v_0)\xto{a_1} (j_1,S_1,v_1)\xto{a_2} \cdots \xto{a_n} (j_n,S_n,v_n)\]
in $\widetilde{A}$ where  $v_0=a_1^m$, and $v_r\in \At^m$ for $r=1,\ldots,n$ is defined as follows:
\begin{itemize}
\item for $k\in R_r$ put $v_{r,k}=w_{r,k}$;
\item for $k\not\in S_r$ put $v_{r,k}=a_r$;
\item for $k\in S_r\setminus R_r$, if $k=k$ in $(E_{r})_{S_{r-1}}$ then take $v_{r,k}=v_{r-1,k}$, and if $\bullet=k$ in $E_{r}$ then take $v_{r,k}=a_k$. Note that at least one of these cases must occur because $S_{r-1}\rightsquigarrow_{E_r} S_r$. Moreover, if both $k=k$ and $\bullet=k$ in $E_{r}$ then also $k=\bullet$ in $E_{r}$ and moreover $k\in S_{r-1}$, whence $k\in R_{r-1}$. Therefore $v_{r-1,k}=a_r$ as the move $((j_{r-1},R_{r-1},S_{r-1}),w_{r-1}) \xto{a_1} ((j_r,R_r,S_r),w_r)$ is consistent with a transition of $A_\fsuba$ induced by $j_{r-1}\xto{E_r}j_r$ and $R_{r-1}, S_{r-1}$, which is of the form $((j_{r-1},R_{r-1},S_{r-1}),k,\cdots)$ since $\bullet=k$ in $E_r$. Hence $v_{r,k}$ is properly defined.
\end{itemize}

It remains to show that $((j_{r-1},S_{r-1}),v_{r-1}) \xto{a_1} ((j_r,S_r),v_r)$ is a valid transition of $\widetilde{A}$. By definition we have $S_{r-1}\rightsquigarrow_{E_r} S_r$, so we only need to show that the transition is consistent with $(E_r)_{S_{r-1}}$. Indeed:
\begin{itemize}
\item If $k=\bullet$ in $(E_r)_{S_{r-1}}$ (hence $k\in R_{r-1}$) then a transition of $A_\fsuba$ induced by $j_{r-1}\xto{E_r}j_r$ and $R_{r-1}, S_{r-1}$ is of the form $((j_{r-1},R_{r-1},S_{r-1}),k,\cdots)$, and since by assumption the move $((j_{r-1},R_{r-1},S_{r-1}),w_{r-1}) \xto{a_r} ((j_r,R_r,S_r),w_r)$ is consistent with it, we have $v_{r-1,k}=w_{r-1,k}=a_r$.
\item Now suppose that $k=k$ in $(E_r)_{S_{r-1}}$ but not $k=\bullet$ in $(E_r)_{S_{r-1}}$ (hence not $\bullet=k$ in $E_r$). We distinguish two subcases. If $k\in S_{r-1}\setminus R_{r-1}$, then $k\in S_r\setminus R_r$, so by definition of $v_{r,k}$ we get $v_{r-1,k}=v_{r,k}$. If $k\in R_{r-1}$, then the transition of $A_\fsuba$ induced by $j_{r-1}\xto{E_r}j_r$ and $R_{r-1}, S_{r-1}$ is given by $((j_{r-1},R_{r-1},S_{r-1}),m+1,\ol{R_r},(j_r,R_r,S_r))$ where $R_{r-1}\rightsquigarrow_{E_r} R_r$ and $S_{r-1}\rightsquigarrow_{E_r} S_r$. Since the move $((j_{r-1},R_{r-1},S_{r-1}),w_{r-1}) \xto{a_r} ((j_r,R_r,S_r),w_r)$ is consistent with that transition, we have $v_{r-1,k}=w_{r-1,k}=w_{r,k}=v_{r,k}$. 
\end{itemize}

This concludes the proof.
\end{proof}

We conclude:

\begin{theorem}
Positive register automata and FSUBA are equivalent.
\end{theorem}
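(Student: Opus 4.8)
The plan is to establish the two inclusions of language classes separately, reducing each to results already at hand rather than building a single direct translation. For the direction from FSUBA to positive register automata, given an FSUBA $A$ I would first invoke \autoref{rem:fsuba-is-nofa} to see that $L(A)$ is NOFA-recognizable, with the configurations of $A$ serving as NOFA states. Next I would argue that $L(A)$ is positive: the consistency conditions (i)--(iv) of \autoref{def:fsuba} constrain registers only by equality with the current input letter or with the undefined marker~$\bot$, never by inequality, so applying any renaming $\rho\colon\At\to\At$ to an accepting run (replacing each register word $w_r$ by $\rho^\star w_r$ and each $a_r$ by $\rho a_r$, while keeping $\bot$ fixed) again produces an accepting run, now for $\rho^\star w$. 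This is the argument used for \autoref{prop:nofra-acc-pos-lang} and \autoref{prop:pos-reg-acc-pos-lang}. As $L(A)$ is thus positive and NOFA-recognizable, \autoref{thm:pos-nofa-vs-pos-reg} immediately yields an equivalent positive register automaton.

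For the converse, let $B$ be a positive register automaton. By \autoref{prop:pos-reg-acc-pos-lang} and the equivalence of register automata with NOFA, $L(B)$ is positive and NOFA-recognizable, hence accepted by some NOFA in the normal form $J\times\At^{\#m}$ of \autoref{rem:strong}. I would then thread this automaton through the appendix pipeline. \autoref{cons:non-perm} and \autoref{lem:rigid-nofa} first replace it by an equivalent \emph{rigid} NOFA $A_\rg$, whose abstract transitions take only the shapes $k=\bullet$, $\bullet=k$, and $k=k$ (\autoref{rem:eqs-rigid-nofa}); rigidity is precisely what removes the ability to shift a datum between registers, which FSUBA lack. Since $L(A_\rg)=L(B)$ is positive, its positive closure (\autoref{cons:bar-A}) still accepts $L(B)$, and \autoref{lem:barA-bs-tildeA} together with \autoref{lem:pos-nofa-non-guessing} turn it into an equivalent non-guessing NOFRA. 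Finally \autoref{cons:nofra-fsuba} and \autoref{lem:pos-non-guessing-nofra-fsuba} produce an FSUBA accepting $L(B)$, closing the argument.

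Since the substantive work is carried out in the preceding lemmas, the theorem itself is essentially an assembly, and the one point needing care is that \autoref{cons:nofra-fsuba} is stated for a \emph{rigid} NOFA whose accepted language is positive: one must therefore enter the pipeline via \autoref{lem:rigid-nofa} and exploit $\ol{L}=L$ so that the positive-closure step preserves the language. I expect the genuinely delicate ingredient — already discharged inside \autoref{lem:pos-non-guessing-nofra-fsuba} — to be the register-deletion bookkeeping through the nested sets $R\subseteq S$, which records exactly those registers whose contents will still be compared against a future input and may hence be freed, ensuring that each step reads at most one register as the FSUBA format demands.
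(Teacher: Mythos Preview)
Your proposal is correct and follows essentially the same route as the paper's proof: both directions go via \autoref{thm:pos-nofa-vs-pos-reg}, and the converse threads a positive NOFA-recognizable language through the non-guessing and FSUBA constructions. If anything, you are more careful than the paper's own two-line proof, since you make explicit that one must first pass through \autoref{lem:rigid-nofa} to obtain a rigid NOFA before \autoref{cons:nofra-fsuba} and \autoref{lem:pos-non-guessing-nofra-fsuba} apply; the paper's proof simply cites \autoref{thm:positive-non-guessing} and \autoref{lem:pos-non-guessing-nofra-fsuba} and leaves the rigidity step implicit.
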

\begin{proof}
  Every language accepted by some FSUBA as in~\autoref{def:fsuba} is clearly positive, and by~\autoref{rem:fsuba-is-nofa}
  it is NOFA-recognizable, hence by~\autoref{thm:pos-nofa-vs-pos-reg} it is accepted by some positive register automaton. Conversely, every language accepted by positive register automaton is positive and NOFA-recognizable (using \autoref{thm:pos-nofa-vs-pos-reg} again), and so \autoref{thm:positive-non-guessing} and \autoref{lem:pos-non-guessing-nofra-fsuba} show that is accepted by an FSUBA.
\end{proof}

\subsection*{Details for \autoref{rem:mso-vs-nofa}}
We prove that the language $L\seq \Ats$ of all words where no data value occurs exactly once is not NOFA-recognizable. Suppose that $L$ is recognized by a NOFA $A$, and let $m\in \Nat$ such that every state has a support of size $m$. Choose $m+1$ distinct names $a_1,\ldots,a_{m+1}$. Then the word $w=a_1\ldots a_{m+1}a_1\ldots a_{m+1}$ lies in $L$, hence it admits an accepting run
\[ q_0\xto{a_1} \cdots \xto{a_{m+1}} q_{m+1}\xto{a_1} q'_1 \xto{a_2} \cdots \xto{a_{m+1}} q'_{m+1}. \]
Since $\supp q_{m+1}$ has at most $m$ elements, some $a_j$ is fresh for $q_{m+1}$. Choose a name $a_j'$ fresh for $a_1,\ldots,a_{m+1}$ and $q_{m+1}$, and let $\pi = (a_j\, a_j')$. Then by equivariance of transitions we have the following accepting run: 
\[ \pi\cdot q_0\xto{a_1} \cdots \xto{a_j'} \pi\cdot q_{j} \xto{a_{j+1}} \cdots \xto{a_{m+1}} \pi\cdot q_{m+1} = q_{m+1} \xto{a_1} q'_1 \xto{a_2} \cdots \xto{a_{m+1}} q'_{m+1}. \]
This means that the word $a_1\cdots a_j'\cdots a_{m+1} a_1\cdots a_j\cdots a_{m+1}$ is accepted by $A$ although it does not lie in $L$, a contradiction.

\subsection*{Proof of \autoref{thm:mso-vs-nofa}}
The ``if'' statement follows from \autoref{prop:mso-to-pos}. For the ``only if'' statement, suppose that $L\seq \Ats$ is a positive NOFA-recognizable data language; then $L$ is accepted by a NOFRA $\barA=(\barQ,\barDelta,\barI,\barF)$ as given by \autoref{cons:bar-A}. Our task is to construct an $\MSOep$-sentence $\phi$ such that, for all $w\in \Ats$,
\[ \text{$\barA$ accepts $w$}\qquad\iff\qquad \text{$w$ satisfies $\phi$}. \]
We shall make use of the characterization of accepted words from \autoref{prop:abstract-lang-char}: when interpreted over $w=b_1\ldots b_n$, the sentence $\phi$ will state existence of an abstract run satisfying condition \eqref{eq:eq-cond-1}. For this purpose, we introduce for every abstract transition $(j,E,j')\in \abs(\delta)$ a second-order variable $R_{(j,E,j')}$ with the intended interpretation
\[ R_{(j,E,j')}(x) \quad\hat{=}\quad \text{$(j,E,j')$ is the $x$-th transition of an accepting abstract run of $\barA$}.\]
The sentence $\phi$ is then given as follows:
\[ \phi\;\;=\;\; \vec{\exists}_{(j,E,j)\in \abs(\delta)} R_{(j,E,j')}.\, \phi_{\mathrm{run}} \wedge \forall x.\, \exists \Eq^{(x)}_1\,\ldots\,\exists \Eq^{(x)}_m.\, \phi_{\mathrm{aux}} \wedge \phi_{\mathrm{eq}}. \]
Here $\vec{\exists}_{(j,E,j)\in \abs(\delta)} R_{(j,E,j')}$ denotes the concatenation of all $\exists R_{(j,E,j')}$ where $(j,E,j')\in \abs(\delta)$, and we make the convention that quantifiers have maximal scope. The subformula $\phi_{\mathrm{run}}$ ensures that the variables $R_{(j,E,j')}$ define an accepting abstract run of $\barA$; the subformula $\phi_{\mathrm{aux}}$ ensures that the second-order variables $\Eq^{(x)}_k$ ($k=1,\ldots,m$) are interpreted as the auxiliary predicates of \autoref{not:xik-pred}; the subformula $\phi_{\mathrm{eq}}$ states the equality condition \eqref{eq:eq-cond-1}. In more detail, the three subformulas are given as follows:

\medskip\noindent\underline{Definition of $\phi_{\mathrm{run}}$:}
\begin{align*}
 \phi_{\mathrm{run}}\;\; = \;\;& \forall x. \bigvee_{(j,E,j')\in \abs(\delta)} R_{(j,E,j')}(x)\\
&\wedge \;\; \forall x.\bigwedge_{(j,E,j')\neq (\ol{j},\overbar{E},\ol{j}')\in \abs(\delta)} \neg  \big(\,R_{(j,E,j')}(x) \wedge R_{(\ol{j},\overbar{E},\ol{j}')}(x)\,\big) \\
&\wedge \;\; \forall x.\forall y.\, \mathrm{succ}(x,y) \implies \bigvee_{(j,E,j'), (j',E',j'')\in \abs(\delta)}
R_{(j,E,j')}(x) \wedge R_{(j',E',j'')}(y) \\
&\wedge \;\; \forall x.\, \mathrm{first}(x) \implies \bigvee_{\substack{(j,E,j')\in \abs(\delta)\\ j\in J_I}} R_{(j,E,j')}(x) \\
&\wedge \;\; \forall x.\, \mathrm{last}(x) \implies \bigvee_{\substack{(j,E,j')\in \abs(\delta)\\ j'\in J_F}} R_{(j,E,j')}(x)
\end{align*}
As usual $\psi\Rightarrow \xi$ means $\neg\psi \vee \xi$, and the formulas $\mathrm{succ}(x,y)$, $\mathrm{first}(x)$ and $\mathrm{last}(x)$ define the successor relation and the first and last position, respectively:
\[
 \mathrm{succ}(x,y) = x<y \,\wedge\, \neg\exists z.\, x<z \, \wedge\, z<y,\qquad
 \mathrm{first}(x) = \neg\exists y.\, y<x,\qquad
 \mathrm{last}(x) = \neg\exists y.x<y.
\]
The first two lines assert that every position $x$ is associated with a unique abstract transition of $A$, and the remaining part that these transitions form an accepting abstract run.

\medskip\noindent\underline{Definition of $\phi_{\mathrm{aux}}$:}
\begin{align*}
\phi_{\mathrm{aux}} \;\;=\;\; & \bigwedge_{k=1}^m\, \bigwedge_{\substack{(j,E,j')\in \abs(\delta)\\ \bullet=k\in E}} \big(\,R_{(j,E,j')}(x)\implies \Eq^{(x)}_k(x)\,\big) \\
& \wedge\;\;  \forall y.\,\forall z.\, \mathrm{succ}(y,z) \implies \bigwedge_{k,\overbar{k}=1}^m\, \bigwedge_{\substack{(j,E,j')\in \abs(\delta)\\ k=\overbar{k}\in E}} \big(\,R_{(j,E,j')}(z)\,\wedge\, \Eq^{(x)}_k(y) \implies \Eq^{(x)}_{\overbar{k}}(z)\,\big) \\
& \wedge\;\; \forall \ol{\Eq}^{(x)}_1\,\cdots\, \forall \ol{\Eq}^{(x)}_m.\, \big[\, \bigwedge_{k=1}^m\, \bigwedge_{\substack{(j,E,j')\in \abs(\delta)\\ \bullet=k\in E}} \big(\,R_{(j,E,j')}(x)\implies \ol{\Eq}^{(x)}_k(x)\,\big) \\
& \wedge\;\;  \forall y.\,\forall z.\, \mathrm{succ}(y,z) \implies \bigwedge_{k,\overbar{k}=1}^m \, \bigwedge_{\substack{(j,E,j')\in \abs(\delta)\\ k=\overbar{k}\in E}} \big(\,R_{(j,E,j')}(z)\,\wedge\, \ol{\Eq}^{(x)}_k(y) \implies \ol{\Eq}^{(x)}_{\overbar{k}}(z)\,\big)\,\big] \\
& \implies \bigwedge_{k=1}^m \forall y.\, \Eq^{(x)}_k(y) \implies \ol{\Eq}^{(x)}_k(y).
\end{align*} 
The first two lines state that the predicates $\Eq^{(x)}_k$ satisfy the two clauses of \autoref{not:xik-pred}, and the remaining part of the formula asserts that they are minimal with that property. This entails that $\Eq^{(x)}_k$ is precisely the inductively defined predicate of \autoref{not:xik-pred}.

\medskip\noindent\underline{Definition of $\phi_{\mathrm{eq}}$:} 
\begin{align*}
 \phi_{\mathrm{eq}} \;\;=\;\; & \forall y.\forall z.\,\mathrm{succ}(y,z) \implies \bigwedge_{k=1}^m\, \bigwedge_{\substack{(j,E,j')\in \abs(\delta)\\ k=\bullet\in E}} \big(\,R_{(j,E,j')}(z) \wedge \Eq^{(x)}_k(y) \implies x\sim z\,\big).
\qedhere
\end{align*} 

\subsection*{Proof of \autoref{prop:orbit-finite-vs-fp}}
We start with two preliminary remarks on directed colimits,
\begin{rem}\label{rem:directed-colimits}
Given a directed diagram $D\colon I\to \Set$, its colimit cocone $D_i\xto{c_i} \colim D$ ($i\in I$) is characterized by two properties: (i) the morphisms $c_i$ are jointly surjective (every element of $\colim D$ lies in the image of some $c_i$), and (ii) for every $i\in I$ and $x,y\in D_i$ such that $c_i(x)=c_i(y)$, there exists $j\geq i$ in $I$ such that $D_{i,j}(x)=D_{i,j}(y)$, where $D_{i,j}=D(i\to j)$ is the connecting morphism induced by the unique arrow $i\to j$ in $I$. The same characterization applies to directed colimits in $\Nom$ and $\RnNom$, since colimits in these two categories are formed at the level of underlying sets. 
\end{rem}

\begin{rem}
Recall that an object $X$ of a category $\C$ is finitely presentable if the functor $\C(X,-)\colon \C\to\Set$ preserves directed colimits. By \autoref{rem:directed-colimits}, this means precisely that for every directed diagram $D\colon I\to \C$ with colimit cocone  $c_i\colon D_i\to \colim D$ ($i\in I$),
\begin{enumerate}
\item every morphism \( f \colon X \rightarrow \colim D \) factorizes as $f=c_i\circ g$ for some $i\in I$ and $g\colon D_i\to C$;
\item the factorization is essentially unique: given another factorization $f=c_i\cdot h$, one has \(D_{i,j}\circ g = D_{i,j}\circ h \) for some $j\geq i$ in $I$.
\end{enumerate}  
\end{rem}

Moreover, we need
\begin{lemma}\label{lem:rnnom-of}
Let $X$ be a nominal renaming set generated by a single element $x\in X$, that is, $X=\{\, \rho\cdot x : \rho\in \Fin(\At)\,\}$. Then $X$ is orbit-finite. 
\end{lemma}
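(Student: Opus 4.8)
The plan is to reduce everything to a combinatorial invariant attached to a renaming, namely the equality pattern it induces on the least support of $x$. First I would set $S = \supp x$ and note that $S$ is finite because $X$ is a nominal renaming set, hence every element has finite support. The crucial preliminary observation is that the element $\rho\cdot x$ depends only on the restriction $\rho|_S$: if $\rho|_S = \rho'|_S$, then since $S$ supports $x$ we get $\rho\cdot x = \rho'\cdot x$. Thus $X = \{\,\rho\cdot x : \rho\in\Fin(\At)\,\}$ is already described by functions $S\to\At$, and our remaining task is to show that the $\Perm(\At)$-orbit of $\rho\cdot x$ is governed by finitely much data.

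To this end I would associate to each renaming $\rho$ the partition $P_\rho$ of the finite set $S$ defined by $a\sim a' \iff \rho(a)=\rho(a')$. The key step is the claim that $P_\rho = P_{\rho'}$ implies that $\rho\cdot x$ and $\rho'\cdot x$ lie in the same $\Perm(\At)$-orbit. To prove this, write $S=\{a_1,\dots,a_n\}$ and define a map $\pi_0\colon \rho[S]\to\rho'[S]$ by $\pi_0(\rho(a_i)) = \rho'(a_i)$. This is well defined and injective precisely because $P_\rho=P_{\rho'}$, and it is a bijection between two finite subsets of $\At$ of equal size, so it extends to a permutation $\pi\in\Perm(\At)$. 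By construction $(\pi\rho)|_S = \rho'|_S$, whence $\pi\cdot(\rho\cdot x) = (\pi\rho)\cdot x = \rho'\cdot x$, as desired.

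Finally, since $S$ is finite it admits only finitely many partitions, and the construction above yields a well-defined surjection from the (finite) set of partitions of $S$ onto the set of $\Perm(\At)$-orbits of $X$, sending a partition $P$ to the orbit of $\rho\cdot x$ for any $\rho$ with $P_\rho = P$: this is well defined by the claim and surjective because every $\rho\cdot x$ has the partition $P_\rho$. Hence $X$ has finitely many orbits and is orbit-finite. I expect the main technical point to be the verification that the candidate permutation $\pi$ is well defined and bijective, which is exactly where the hypothesis $P_\rho=P_{\rho'}$ is used in both directions; everything else is routine bookkeeping using only the infiniteness of $\At$ and the fact that $\rho\cdot x$ factors through $\rho|_S$.
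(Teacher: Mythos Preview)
Your proof is correct and takes a genuinely different route from the paper. The paper argues categorically: it picks a $\Perm(\At)$-equivariant map $e\colon \At^{\#m}\to X$ with $x$ in its image, extends it via the adjunction $F\dashv U\colon \RnNom\to\Nom$ and the identity $F(\At^{\#m})=\At^m$ to a $\Fin(\At)$-equivariant map $\widehat e\colon \At^m\to X$, observes that $\widehat e$ is surjective because $X$ is generated by $x$, and concludes orbit-finiteness from that of $\At^m$. Your argument is more elementary and self-contained: you bound the number of $\Perm(\At)$-orbits directly by the number of partitions of $\supp x$, avoiding the adjunction entirely. The two approaches give the same bound in the end (the orbits of $\At^m$ are exactly the equality patterns, i.e.\ partitions, of an $m$-element set), but yours requires no external machinery, while the paper's fits into the categorical toolkit used elsewhere in the article.
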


\begin{proof}
Take a $\Perm(\At)$-equivariant map $e\colon \At^{\#m}\to X$, where $m\in\Nat$, whose image contains $x$. Since the forgetful functor $U\colon \RnNom\to \Nom$ has a left adjojnt $F\colon \Nom\to\RnNom$ sending $\At^{\#m}$ to $\At^m$~\cite[Thm.~3.7]{mr20}, the map $e$ uniquely extends to a $\Fin(\At)$-equivariant map $\widehat{e}\colon \At^m\to X$. Choose $w\in \At^{m}$ such that $x=\widehat{e}(y)$. The map $\widehat{e}$ is surjective because  $X$ is generated by $x$. Since $\At^{m}$ is orbit-finite, it follows that $X$ is orbit-finite (using that $e$ is $\Perm(\At)$-equivariant, hence it sends orbits to orbits).
\end{proof}

Now we prove \autoref{prop:orbit-finite-vs-fp}. By \cite[Prop.~2.3.7]{Petrisan2012} orbit-finite nominal sets are precisely the finitely presentable objects of $\Nom$. Hence we only need to prove the corresponding statement for nominal renaming sets, which can be reduced to the one for nominal sets.

Thus suppose that $X\in \RnNom$ is finitely presentable. Express $X$ as the directed union of its orbit-finite $\Fin(\At)$-equivariant subsets. To see that this is indeed a directed colimit, note that the cocone of inclusions is jointly surjective: every $x\in X$ is contained in the $\Fin(\At)$-equivariant subset $\{\sigma\cdot x : \sigma\in \Fin(\At)\}$, which is orbit-finite by \autoref{lem:rnnom-of}. Thus the identity map $\id_X\colon X\to X$ factorizes through some inclusion $X'\hookto X$ of an orbit-finite $\Fin(\At)$-equivariant subset, which implies that $X\cong X'$ and thus $X$ is orbit-finite.

Conversely, suppose that $X\in \RnNom$ is orbit-finite. Let $c_i\colon C_i\to C$ ($i\in I$) be a directed colimit in $\RnNom$ (with connecting morphisms $c_{i,j}\colon C_i\to C_j$ for $i\leq j$), and let $f\colon X\to C_i$ be a $\Fin(\At)$-equivariant map. Since $X$ is finitely presentable as a nominal set, $f$ factorizes in $\Nom$ as $f=c_i\cdot g$ for some $i\in I$. The map $g$ is $\Perm(\At)$-equivariant, but may not be $\Fin(\At)$-equivariant. To fix this, choose
elements $x_1,\ldots,x_n\in X$ representing the orbits of $X$, and names $a_1,\ldots,a_d\in \At$ such that $\supp x_r\seq \{a_1,\ldots,a_d\}$ for $r=1,\ldots,n$. Moreover, let $\rho_1,\ldots,\rho_k\in \Fin(\At)$ be all renamings that restrict to a map from $\{a_1,\ldots,a_d\}$ to $\{a_1,\ldots,a_d\}$ and fix all names in $\At\setminus \{a_1,\ldots,a_d\}$. For each $x_r$, $\rho_s$ we have
\[c_i(g(\rho_s\cdot x_r))=f(\rho_s\cdot x_r)=\rho_s\cdot f(x_r) = \rho_s\cdot c_i(g(x_r))=c_i(\rho_s\cdot g(x_r))\]
using that $c_i$ and $f$ are $\Fin(\At)$-equivariant.
Thus $c_{i,j}(g(\rho_s\cdot x_r)) = c_{i,j}(\rho_s\cdot g(x_r))$ for some $j\geq i$, and so $c_{i,j}(g(\rho_s\cdot x_r))=\rho_s\cdot c_{i,j}(g(x_r))$. Since $I$ is directed, we may choose $j$ independently of $r$ and $s$. Thus, after replacing $g$ with $c_{i,j}\circ g$ and $i$ with $j$, we may assume that $g(\rho_s\cdot x_r)=\rho_s\cdot g(x_r)$ for all $r,s$.

We now show that this implies $g(\rho\cdot x)=\rho\cdot g(x)$
for all $x\in X$ and $\rho\in \Fin(\At)$, hence $g$ is $\Fin(\At)$-equivariant. First, since the elements $x_1,\ldots,x_n$ represent the orbits of $X$, we have $x=\pi\cdot x_r$ for some $r$ and $\pi\in\Perm(\At)$. Choose $\tau\in \Perm(\At)$ that restricts to an injective map from $\rho\circ \pi[\{a_1,\ldots,a_d\}]$ to $\{a_1,\ldots,a_d\}$.
Then $\tau\cdot \rho\cdot \pi$ restricts to a map from $\{a_1,\ldots,a_d\}$ to $\{a_1,\ldots,a_d\}$, hence is equal on $\{a_1,\ldots,a_d\}$ to some $\rho_s$. It follows that 
\begin{align*}
 g(\rho\cdot x) &= g(\tau^{-1}\cdot \tau \cdot\rho\cdot \pi \cdot x_r)\\
&= \tau^{-1}\cdot g(\tau\cdot \rho\cdot \pi\cdot x_r)\\
&= \tau^{-1}\cdot g(\rho_s\cdot x_r) \\
&= \tau^{-1}\cdot \rho_s\cdot g(x_r)\\
&= \tau^{-1}\cdot \tau\cdot \rho\cdot \pi\cdot g(x_r)\\
&= \rho\cdot g(\pi\cdot x_r) \\
&= \rho\cdot g(x). 
\end{align*}
The third and fifth step follows from $\supp x_r,\, \supp g(x_r)\seq \{a_1,\ldots, a_d\}$, the fourth step by the choice of $g$, and the second and sixth step use $\Perm(\At)$-equivariance of $g$.

This concludes the proof that $f$ factorizes through $c_i$ in $\RnNom$. That the factorization is essentially unique is immediate from the corresponding property in $\Nom$.

\subsection*{Proof of \autoref{prop:super-finitary}}
An object $X$ of a category $\C$ is \emph{finitely
generated} if the hom-functor $\C(X,-)$ preserves {directed unions}, i.e.~colimits of directed diagrams $D\colon I \to \C$ for which each connecting morphism $D_{i,j}$ ($i\leq j$), is monic. (In locally finitely presentable categories~\cite{adamek_rosicky_1994}, including all presheaf categories, the colimit injections of a directed union are also monic.) Clearly every finitely presentable object is finitely generated. We will first prove that super-finitary presheaves in $\Set^\II$ and $\Set^\FF$ coincide with finitely generated presheaves, and subsequently prove that the latter coincide with finitely presentable presheaves.

\begin{proposition}\label{prop:super-finitary-refined}
For a presheaf $P\in \Set^\C$, $\C\in \{\II,\FF\}$, the following are equivalent:
  \begin{alphaenumerate}
    \item\label{sfequiv:A} $P$ is a finitely generated object of $\Set^\C$;
		\item $P$ is super-finitary;\label{sfequiv:B}
		\item $P$ is a quotient of a finite coproduct of representables; that is, there exists a componentwise surjective natural transformation  $\phi\colon \coprod_{i\in I} \C(S_i,-)\epito P$ with $I$ finite and $S_i\seq_\f \At$. \label{sfequiv:C}
  \end{alphaenumerate}
\end{proposition}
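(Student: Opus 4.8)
The plan is to prove the cycle \ref{sfequiv:C}$\Rightarrow$\ref{sfequiv:A}$\Rightarrow$\ref{sfequiv:B}$\Rightarrow$\ref{sfequiv:C}, combining the Yoneda lemma with standard facts about (finitely) presentable objects in locally finitely presentable categories~\cite{adamek_rosicky_1994}.

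\emph{\ref{sfequiv:C}$\Rightarrow$\ref{sfequiv:A}.} First I would observe that every representable $\C(S,-)$ is finitely presentable: by Yoneda the hom-functor $\Set^\C(\C(S,-),-)$ is isomorphic to evaluation at $S$, which preserves all colimits because colimits in $\Set^\C$ are computed pointwise. Hence the finite coproduct $\coprod_{i\in I}\C(S_i,-)$ is finitely presentable. Since by \autoref{ex:categories} the componentwise surjection $\phi$ is a strong epimorphism, $P$ is a strong quotient of a finitely presentable object and thus finitely generated.

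\emph{\ref{sfequiv:A}$\Rightarrow$\ref{sfequiv:B}.} I would write $P$ as the directed union of the family of its sub-presheaves generated by finitely many elements. This family is directed under inclusion (the smallest sub-presheaf containing two finite generating families is generated by their union), its connecting maps are monic, and its union is all of $P$. Finite generation of $P$ makes $\id_P$ factor through one inclusion $Q\hookto P$, forcing $Q=P$; thus $P$ is generated by finitely many elements $x_1\in PT_1,\dots,x_n\in PT_n$. Setting $S=T_1\cup\dots\cup T_n$, clause (ii) of super-finitariness holds because any $y\in PT$ equals $P\rho(x_i)$ for some $i$ and some $\rho\in\C(T_i,T)$ with $T_i\seq S$; clause (i) holds because for $S'\seq S$ one has $PS'=\{\,P\rho(x_i): 1\le i\le n,\ \rho\in\C(T_i,S')\,\}$, which is finite since each hom-set $\C(T_i,S')$ between finite sets is finite.

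\emph{\ref{sfequiv:B}$\Rightarrow$\ref{sfequiv:C}.} Given a generating set $S$, the index set of pairs $(S',x)$ with $S'\seq S$ and $x\in PS'$ is finite, as there are finitely many subsets $S'\seq S$ and each $PS'$ is finite. For each such pair Yoneda yields a morphism $\widehat{x}\colon\C(S',-)\to P$ with $\widehat{x}_T(\rho)=P\rho(x)$, and I would assemble these into $\phi\colon\coprod_{(S',x)}\C(S',-)\to P$; its componentwise surjectivity is precisely clause (ii). I expect the main obstacle to be \ref{sfequiv:A}$\Rightarrow$\ref{sfequiv:B}: one must set up the directed union of finitely-generated sub-presheaves correctly (directedness, monic inclusions, and that its colimit really is $P$) in order to convert the categorical notion of finite generation into generation by finitely many elements, after which the verification of clause (i) hinges on the finiteness of hom-sets of $\II$ and $\FF$ between finite sets.
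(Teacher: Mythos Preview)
Your proposal is correct and follows essentially the same cycle \ref{sfequiv:C}$\Rightarrow$\ref{sfequiv:A}$\Rightarrow$\ref{sfequiv:B}$\Rightarrow$\ref{sfequiv:C} as the paper, with the same ingredients (Yoneda for representables being finitely presentable, closure of finitely generated objects under finite coproducts and strong quotients, and a directed-union argument for \ref{sfequiv:A}$\Rightarrow$\ref{sfequiv:B}). The only organisational difference is in \ref{sfequiv:A}$\Rightarrow$\ref{sfequiv:B}: the paper first argues that $P$ is componentwise finite (as a directed union of componentwise finite sub-presheaves) and then writes $P$ as the directed union of the sub-presheaves $P_S$ indexed by finite $S\seq\At$, whereas you go directly via sub-presheaves generated by finitely many elements; both routes exploit the same factorisation of $\id_P$ through a member of a directed union and the finiteness of hom-sets in $\II$ and $\FF$.
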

\begin{proof}
 (\ref{sfequiv:A})$\implies$(\ref{sfequiv:B}) Let $P\colon \C \to \Set$ be finitely generated. Since every presheaf in $\Set^\C$ is the directed union of its componentwise finite sub-presheaves, $P$ itself is componentwise finite. For every $S\seq_\f \At$ we consider the sub-presheaf $P_S\seq P$ defined by
\[ P_S T = {\bigcup_{S' \seq S}}\,{\bigcup_{\rho\in \C(S',T)}} P\rho[PS'] \qquad\text{for}\qquad T\seq_\f \At.\]
Note that $P_S$ is super-finitary with generating set $S$. Since $P_R,P_S\seq P_{R\cup S}$ for all $R,S\seq_\f \At$, the map $D\colon (\Pow_\f \At,\seq) \to \Set^\C$, $S\mapsto P_S$, yields a directed diagram of monomorphisms with colimit cocone $P_S\xto{\seq} P$ ($S\seq_\f \At$). By hypothesis the presheaf $P$ is finitely generated, so the identity map $\id\colon P\to P$ factorizes through some $P_S\xto{\seq} P$. This implies that the inclusion is surjective, whence $P=P_S$ is super-finitary.

\medskip\noindent (\ref{sfequiv:B})$\implies$(\ref{sfequiv:C}) For every presheaf $P\in \Set^\C$ and $S\seq_\f \At$ we have the natural transformation $\phi$ whose component at $T\seq_\f \At$ is given by
	\[ \varphi_T\colon \coprod_{S'\seq S} PS'\times \C(S',T)\to PT,\qquad (x,\rho)\mapsto P\rho(x). \]
   If $P$ is super-finitary and $S$ a generating set, then $\phi_T$ is surjective for each $T$.

\medskip\noindent (\ref{sfequiv:C})$\implies$(\ref{sfequiv:A}) This follows from two general facts: First, in every presheaf category the representable functors are finitely presentable~\cite[Example~1.2(7)]{adamek_rosicky_1994}, thus finitely generated. Second, in every locally finitely presentable category (hence in every presheaf category), finitely generated objects are closed under finite coproducts and strong quotients~\cite[Prop.~1.69]{adamek_rosicky_1994}.
\end{proof}

\begin{lemma}\label{lem:fg-closure-props}
  Finitely generated objects in $\Set^\C$, $\C\in \{\II,\FF\}$, are closed under finite products and sub-presheaves (hence under finite limits).
\end{lemma}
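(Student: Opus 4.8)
The plan is to derive both closure properties from the characterization \itemref{prop:super-finitary-refined}{sfequiv:C} in \autoref{prop:super-finitary-refined} --- namely that finitely generated $=$ super-finitary $=$ quotient of a finite coproduct of representables --- together with the facts, established in the proof of that proposition, that finitely generated objects are closed under finite coproducts and strong quotients, and that $\Set^\C$ is a topos (\autoref{ex:categories}). In a topos the epimorphisms are exactly the componentwise surjections, products are componentwise and hence preserve epimorphisms and distribute over coproducts, and the category is extensive; these are the only ambient facts I would use.

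For finite products I would pick epimorphisms $\coprod_{i\in I}\C(S_i,-)\epito P$ and $\coprod_{j\in J}\C(T_j,-)\epito P'$ with $I,J$ finite (characterization \itemref{prop:super-finitary-refined}{sfequiv:C}). Their product is again a componentwise surjection, hence epic, and by distributivity of finite products over coproducts its domain is isomorphic to $\coprod_{i\in I,\,j\in J}\C(S_i,-)\times\C(T_j,-)$. Each factor is itself a finite coproduct of representables: evaluating at $U$ gives $\C(S_i,U)\times\C(T_j,U)$, which for $\C=\FF$ is $\FF(S_i\sqcup T_j,U)$ and for $\C=\II$ decomposes as $\coprod_\theta\II(S_i\cup_\theta T_j,U)$ over the finitely many partial bijections $\theta$ between $S_i$ and $T_j$, since a pair of injections into $U$ amounts to a partial matching of their domains plus a joint injection of the amalgam. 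Thus the whole domain is a finite coproduct of representables, $P\times P'$ is a quotient of it, and so $P\times P'$ is finitely generated by \itemref{prop:super-finitary-refined}{sfequiv:C}$\Rightarrow$\itemref{prop:super-finitary-refined}{sfequiv:A}.

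For sub-presheaves, let $Q\monoto P$ with $P$ finitely generated and fix an epimorphism $e\colon R:=\coprod_{i\in I}\C(S_i,-)\epito P$. Pulling $Q$ back along $e$ yields a mono $e^{-1}Q\monoto R$ whose other projection $e^{-1}Q\to Q$ is the pullback of the epimorphism $e$ along $Q\monoto P$, hence epic (epimorphisms are stable under pullback in a topos); so $Q$ is a quotient of $e^{-1}Q$. Since $\Set^\C$ is extensive, the subobject $e^{-1}Q$ of the coproduct $R$ splits as $\coprod_{i\in I}Q_i$ with $Q_i\monoto\C(S_i,-)$, and as finitely generated objects are closed under finite coproducts and quotients it remains to show each $Q_i$ --- a sub-presheaf of a representable --- is finitely generated. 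Such a sub-presheaf is precisely a cosieve of morphisms out of $S_i$, and I would show every cosieve is generated by finitely many morphisms (hence is a quotient of a finite coproduct of representables, hence finitely generated): in $\FF$ a morphism out of $S$ is determined up to postcomposition by its kernel, so a cosieve corresponds to an upward-closed family in the finite lattice of partitions of $S$ (modulo the degenerate distinction between empty and nonempty targets), giving finitely many generators; in $\II$ a nonempty cosieve consists of exactly the injections $S\monoto U$ with $|U|\ge n_0$ for a least threshold $n_0$, and is generated by the finitely many injections into one fixed set of that size.

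Closure under finite limits then follows formally, since any finite limit in $\Set^\C$ is computed pointwise and is a sub-presheaf of the product of the finitely many objects occurring in the diagram; it is therefore finitely generated by the product closure followed by the sub-presheaf closure. The hard part is the sub-presheaf step: unlike in a general locally finitely presentable category, where finitely generated objects need not be closed under subobjects, here one must exploit the concrete combinatorics of $\II$ and $\FF$, and the finiteness of a generating set for a cosieve is obtained by genuinely different arguments in the two cases --- finitely many partitions of $S$ for $\FF$ versus a least admissible target size for $\II$. The reduction to cosieves via the extensive pullback argument, and the careful handling of the empty-target degeneracies, are the points I expect to require the most care.
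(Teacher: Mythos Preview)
Your proposal is correct and follows the same overall architecture as the paper's proof: reduce the product case to products of representables via the quotient characterization and distributivity, and reduce the subobject case to sub-presheaves of representables via pullback along the covering epimorphism and extensivity. The differences lie only in how the base cases are handled. For products of representables, the paper verifies directly that $S_1\cup S_2$ (with $S_1,S_2$ chosen disjoint) is a generating set, whereas you prove the sharper fact that $\C(S_1,-)\times\C(S_2,-)$ is itself a finite coproduct of representables; your decomposition is more informative but the paper's argument is shorter. For sub-presheaves of a representable $\C(C,-)$ in the case $\C=\FF$, the paper uses a slick retraction trick: given $\sigma\in QT$, one chooses $\tau\colon T\to C$ sending each element of $\sigma[C]$ to a $\sigma$-preimage, so that $\sigma=\sigma\circ\tau\circ\sigma=Q\sigma(\tau\circ\sigma)$, showing that the single set $C$ generates $Q$. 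Your partition-counting argument reaches the same conclusion with a bit more bookkeeping. For $\C=\II$ both proofs pick a target of minimal admissible cardinality and are essentially identical.
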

\begin{proof}
\emph{Closure under finite products.} We first prove that a product $\C(S_1,-)\times \C(S_2,-)$ of representables is finitely generated, i.e.~super-finitary. Assuming w.l.o.g.\ that $S_1,S_2\seq_\f \At$ are disjoint, we prove that $S_1\cup S_2$ is a generating set. Thus let $T\seq_\f \At$ and $(\sigma_1,\sigma_2)\in \C(S_1,T)\times \C(S_2,T)$. Then there exists $S\seq S_1\cup S_2$ and $\rho\in \C(C,T)$ that corestricts to a bijection $\pi\colon S\to \sigma_1[S_1]\cup \sigma_2[S_2]$. Let $\sigma_k'$ denote the corestriction of $\sigma_k$ to $\sigma_1[S_1]\cup \sigma_2[S_2]$. Then $\sigma_k=\rho\circ \pi^{-1}\circ \sigma_k'$, i.e. $(\sigma_1,\sigma_2)=\left(\C(S_1,\rho)\times \C(S_2,\rho)\right)(\pi^{-1}\circ \sigma_1',\pi^{-1}\circ \sigma_2')$. This proves $\C(S_1,-)\times \C(S_2,-)$ to be super-finitary.

Now we turn to the general case. Suppose that $P_1,P_2\colon \C \to \Set$ are finitely generated presheaves. Then, by \autoref{prop:super-finitary-refined}, there exist componentwise surjective natural transformations
	\[ \varepsilon_k\colon \coprod_{i\in I_k} \C(S_i^{(k)},-) \twoheadrightarrow P_k\]
for $k=1,2$, where $I_k$ is finite and $S_i^{(k)}\seq_\f\At$. They induce the componentwise surjective natural transformation
	\begin{center}
		\begin{tikzcd}
			\coprod_{i\in I_1, j\in I_2} \C(S_i^{(1)},-)\times \C(S_j^{(2)},-) \ar[phantom]{r}[midway]{\cong} &
			(\coprod_{i\in I_1}\C(S_i^{(1)},-)) \times (\coprod_{j\in I_2}\C(S_j^{(2)},-)) \ar[two heads]{d}{\varepsilon_1 \times \varepsilon_2} \\
			& P_1 \times P_2,
		\end{tikzcd}
	\end{center}
Since $\C(S_i^{(1)},-)\times \C(S_j^{(2)},-)$ is finitely generated as shown above, and finitely generated objects in any locally finitely presentable category are closed under finite coproducts and strong quotients, we conclude that $P_1\times P_2$ is finitely generated.

\medskip\noindent\emph{Closure under sub-presheaves.} We first prove that every sub-presheaf $Q\seq \C(C,-)$ of a representable presheaf is finitely generated, i.e.~super-finitary. We may assume that $C\neq \emptyset$ and that $Q$ is not the constant functor on $\emptyset$, for otherwise the claim is obvious. We consider the cases $\C=\II$ and $\C=\FF$ separately:
\begin{itemize}
\item $\C=\II$: Choose $S\seq_\f \At$ of least cardinality such that $QS\neq \emptyset$; since $QS\seq \II(C,S)$ one has $|C|\leq |S|$. Note that $QS=\II(C,S)$: for any two maps $\sigma,\tau\in \II(C,S)$ one has $\tau=\pi\circ \sigma$ for some bijection $\pi\colon S\to S$, hence $\sigma\in QS$ implies $\tau\in QS$. Given $T\seq_\f \At$ and $\sigma\in QT$, one has $|C|\leq |S|\leq |T|$ and hence $\sigma$ factorizes as $C\xto{\tau} S \xto{\rho} T$ for some $\rho,\tau\in \II$. Then $\tau\in QS=\II(C,S)$ and $\sigma=\C(C,\rho)(\tau)=Q\rho(\tau)$. Hence $Q$ is super-finitary, as claimed.
\item $\C=\FF$: We prove that the set $C$ generates $Q$. Given $T\seq_\f \At$ and $\sigma\in QT$, choose a map $\tau\in \FF(T,C)$ that sends every element of $\sigma[C]\seq T$ to a preimage under $\sigma$, and is arbitrary otherwise. (Here we use that $C\neq\emptyset$.) Then $\tau\circ \sigma=Q\tau(\sigma)\in QC$ and $\sigma=\sigma\circ \tau\circ \sigma=\FF(C,\sigma)(\tau\circ \sigma)=Q\sigma(\tau\circ \sigma)$, proving that $Q$ is super-finitary.  
\end{itemize}

Now we turn to the general case. Suppose that $Q\seq P$ is a sub-presheaf of a finitely generated presheaf  $P\in \Set^\C$. By \autoref{prop:super-finitary-refined} we have a componentwise surjective natural transformation $\varepsilon\colon \coprod_{i\in I} \C(S_i,-)\epito P$ with $I$ finite. Form the following pullback:
	\begin{center}
		\begin{tikzcd}
			H \ar[hook]{r}[midway, above]{\seq} \ar[two heads]{d}[midway, left]{\overline{\varepsilon}}
			\pullbackangle{-45} & \coprod_{i\in I} \C(S_i,-) \ar[two heads]{d}[midway, right]{\varepsilon} \\
			Q \ar[hook]{r}[midway, below]{\seq} & P
		\end{tikzcd} 
	\end{center}
By extensivity of the presheaf topos $\Set^\C$, the presheaf $H$ is of the form $H=\coprod_{i\in I} H_i$ for sub-presheaves $H_i\seq \C(S_i,-)$. Each $H_i$ is finitely generated as shown above, hence so is $H$ because finitely generated objects are closed under finite coproducts.
\end{proof}
Finally, we have the following simple criterion for coincidence of finitely presentable and finitely generated objects, see \cite[Lemma~3.32]{amsw19_1}:

\begin{lemma}\label{lem:fp-vs-fg}
Let $\C$ be a locally finitely presentable category where strong and regular epimorphisms coincide and
  finitely generated objects are closed under kernel pairs. Then the finitely presentable and finitely generated objects of $\C$ coincide.
\end{lemma}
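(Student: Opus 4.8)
The plan is to prove that every finitely generated object $X$ is finitely presentable (the reverse implication being immediate from the definitions), by exhibiting $X$ as a coequalizer of a parallel pair of morphisms between finitely presentable objects and then invoking closure of finitely presentable objects under finite colimits~\cite{adamek_rosicky_1994}.

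First I would establish an auxiliary fact valid in any locally finitely presentable category: \emph{every finitely generated object is a strong quotient of a finitely presentable one}. Indeed, write $X$ as a directed colimit $X=\colim_i P_i$ of finitely presentable objects with colimit cocone $c_i\colon P_i\to X$, and take $(\text{strong epi},\text{mono})$-factorizations $c_i=(P_i\epito X_i\monoto X)$. For $i\le j$ the factorizations yield $X_i\le X_j$ as subobjects of $X$, so the $X_i$ form a directed family of subobjects whose union is $X$ (the colimit cocone being jointly epic). As $X$ is finitely generated, $\C(X,-)$ preserves this directed union, so $\id_X$ factors through some inclusion $m_i\colon X_i\monoto X$; being a split epimorphism that is also monic, $m_i$ is an isomorphism. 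Hence $X\cong X_i$ is a strong quotient of the finitely presentable object $P_i$.

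Applying this auxiliary fact to $X$ produces a strong epimorphism $e\colon P\epito X$ with $P$ finitely presentable. Since by hypothesis strong and regular epimorphisms coincide, $e$ is regular, hence the coequalizer of its kernel pair $(k_1,k_2)\colon K\rightrightarrows P$. As $P$ and $X$ are finitely generated, the second hypothesis (closure under kernel pairs) guarantees that $K$ is finitely generated as well, so applying the auxiliary fact once more yields a strong epimorphism $r\colon R\epito K$ with $R$ finitely presentable.

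Finally, because $r$ is epic, coequalizing $k_1,k_2$ agrees with coequalizing $k_1 r,\,k_2 r\colon R\rightrightarrows P$; thus $X$ is the coequalizer of a parallel pair between the finitely presentable objects $R$ and $P$. A coequalizer is a finite colimit and finitely presentable objects are closed under finite colimits~\cite{adamek_rosicky_1994}, whence $X$ is finitely presentable. I expect the auxiliary fact of the first step to be the main obstacle, specifically verifying that the images form a directed family with union $X$ and correctly invoking preservation of directed unions; the rest is careful bookkeeping of where the two hypotheses enter, namely that \emph{strong $=$ regular} is used exactly to pass from $e$ to its kernel-pair presentation, while closure under kernel pairs is used exactly to make $K$ finitely generated so that the construction can be iterated.
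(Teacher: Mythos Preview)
Your proof is correct and follows the standard argument for this result. The paper does not actually prove this lemma itself; it simply cites \cite[Lemma~3.32]{amsw19_1}, and your argument is essentially the one found there: exhibit the finitely generated object $X$ as a strong (hence regular) quotient $e\colon P\epito X$ of a finitely presentable $P$, use closure under kernel pairs to make the kernel pair $K$ finitely generated, cover $K$ by a finitely presentable $R$, and conclude that $X$ is a coequalizer of a pair between finitely presentable objects.
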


With these preparations, \autoref{prop:super-finitary} now easily follows. By \autoref{prop:super-finitary-refined} we only need to show coincidence of finitely presentable and finitely generated objects in $\Set^\C$, where $\C\in \{\II,\FF\}$. To this end apply \autoref{lem:fp-vs-fg}: every presheaf category is locally finitely presentable, regular and strong epis coincide in every topos (and are just the epimorphisms), and closure of finitely generated objects under kernel pairs follows from \autoref{lem:fg-closure-props}.

\subsection*{Proof of \autoref{prop:fp-presheaves}}
For the proof the following result (see e.g.~\cite[Lem.~2.4]{amsw19_2}) will be helpful:

\begin{lemma}\label{lem:left-adj-pres-fp}
For every adjunction $F\dashv U\colon \C\to \D$ between categories with directed colimits, if $U$ preserves directed colimits then $F$ preserves finitely presentable objects.
\end{lemma}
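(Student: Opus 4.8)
The statement is the standard fact that a left adjoint preserves finite presentability whenever its right adjoint is finitary, and the plan is to prove it directly from the hom-set adjunction by transporting the relevant hom-functor across the adjunction and then chaining together the two preservation hypotheses. Write the adjunction as $F\colon\C\to\D$ (left) and $U\colon\D\to\C$ (right), and let $X$ be a finitely presentable object of $\C$; the goal is to show that $\D(FX,-)\colon\D\to\Set$ preserves directed colimits. The crucial ingredient is the natural isomorphism $\D(FX,-)\cong\C(X,U(-))$ supplied by the adjunction $F\dashv U$, which re-expresses the hom-functor out of $FX$ entirely in terms of data living in $\C$.

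First I would fix a directed diagram $D\colon I\to\D$ with its colimit cocone and assemble the chain of natural isomorphisms
\[ \D(FX,\colim D)\;\cong\;\C(X,U(\colim D))\;\cong\;\C(X,\colim(U\circ D))\;\cong\;\colim\C(X,U\circ D)\;\cong\;\colim\D(FX,D). \]
Here the outer two isomorphisms are instances of the adjunction, the second uses the hypothesis that $U$ preserves directed colimits (so that the directed diagram $U\circ D\colon I\to\C$ has colimit $U(\colim D)$), and the third uses that $X$ is finitely presentable in $\C$, i.e.\ that $\C(X,-)$ preserves directed colimits. Each link of the chain is therefore justified by exactly one of the standing assumptions.

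The only point requiring care---and the main, though routine, obstacle---is to verify that this composite isomorphism coincides with the \emph{canonical comparison morphism} $\colim\D(FX,D_i)\to\D(FX,\colim D)$ induced by the colimit cocone $D_i\to\colim D$, since preservation of directed colimits means precisely that this comparison map is invertible. This amounts to a naturality check: each of the four displayed isomorphisms is natural, and under them the colimit injections on one side are carried to the colimit injections on the other, so the composite is compatible with the two cocones and hence equals the comparison map. Granting this, the comparison map is an isomorphism, which is exactly the assertion that $\D(FX,-)$ preserves the directed colimit $\colim D$. As $D$ was an arbitrary directed diagram, $FX$ is finitely presentable in $\D$, as claimed.
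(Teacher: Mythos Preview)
Your proof is correct and is precisely the standard argument for this well-known fact. The paper does not supply its own proof of this lemma but merely cites it as a known result (from~\cite[Lem.~2.4]{amsw19_2}), so there is nothing further to compare.
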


To prove that the left adjoints in \eqref{eq:adjunctions} preserve finitely presentable objects, it suffices to show that their right adjoints preserve directed colimits (\autoref{lem:left-adj-pres-fp}). 
\begin{itemize}
\item The forgetful functors $E^\star$ and $U$ preserve all colimits because colimits in the four categories are formed at the level of underlying sets.
\item To show that $I_\star$ preserves directed colimits, let $c_k\colon C_k\to C$ ($k\in K$) be a directed colimit cocone in $\Nom$. Then the morphisms $(I_\star c_k)_S\colon (I_\star C_k)S \to (I_\star C)S$ are jointly surjective for every $S\seq_\f \At$ by \cite[Lem.~5.14]{Pitts2013}, and any two elements of $(I_\star C_k)S$ merged by $(I_\star c_k)_S$ are merged by $(I_\star c_{k,l})_S$ for some $k\leq l$ because this holds in the directed colimit cocone $(c_k)$ in $\Nom$. Therefore the morphisms $I_\star c_k$ form a colimit cocone in $\Set^\II$.
\item The proof that $J_\star$ preserves directed colimits is analogous.
\end{itemize} 
It remains to show that the four right adjoints preserve finitely presentable objects.
\begin{itemize}
\item $U$ clearly preserves finitely presentable objects. i.e.~orbit-finite sets (\autoref{prop:orbit-finite-vs-fp}).
\item To show that $E^\star$ preserves finitely presentable objects, by \autoref{prop:fp-presheaves} we need to show that for every super-finitary presheaf $P\colon \FF\to \Set$ the presheaf $E^\star P = P\circ E\colon \II\to \Set$ is super-finitary. Clearly $P\circ E(T)=P(T)$ is finite for every $T\seq_\f\At$. Moreover,
we claim that any set $S\seq_\f \At$ generating $P$ also generates $P\circ E$. Indeed, suppose that $T\seq_\f \At$ and $x\in (P\circ E)T=PT$. Then $x=P\rho(x')$ for some $S'\seq S$ and $\rho\colon S'\to T$. The map $\rho$ factorizes as $\rho=\rho_1\circ \rho_0$ where $\rho_1\colon S''\to T$ is injective and $S''\seq S'$. Thus, putting $x''=P\rho_0(x')$, we have
\[ x=P\rho(x') = P\rho_1(P\rho_0(x')) = P\rho_1(x'') = (P\circ E)\rho_1(x'')\in (P\circ E)\rho_1[(P\circ E)S''], \]
proving that $P\circ E$ is super-finitary. 
\item To show that the functor $I_\star$ preserves finitely presentable objects, by \autoref{prop:orbit-finite-vs-fp} and \autoref{prop:fp-presheaves} we need to show that its sends orbit-finite nominal sets to super-finitary presheaves. Thus let $X\in \Nom$ be orbit-finite. Then $(I_\star X)T$ is finite for every $T\seq_\f\At$ because an orbit-finite set contains only finitely elements of any given finite support. We claim that $I_\star X$ is generated by any set $S\seq_\f \At$  of $n=\max_{x\in X} |{\supp x}|$ names. To see this, let $x\in (I_\star X)T$, that is, $x\in X$ with support $T$. Then $x$ is supported by some subset $T'\seq T$ with at most $n$ elements, that is, $x\in (I_\star X)T'$. Choose a bijection $\pi\colon S'\to T'$ where $S'\seq S$, which extends an injection $\rho\colon S'\to T$. It then follows that $x=(I_\star X)\rho(\pi^{-1}\cdot x) = x$, as required. 
\item An analogous argument shows that $J_\star$ preserves finitely presentable objects.
\end{itemize}

\subsection*{Proof of \autoref{prop:epi-pres-lang}}
\begin{rem}
Recall that by definition strong epimorphisms satisfy the \emph{diagonal fill-in} property: For every commutative square as shown below where $e$ is a strong epimorphism and $m$ is a monomorphism, there exists a unique $d\colon B\to C$ making both triangles commute.
\[
\begin{tikzcd}
A \ar[two heads]{r}{e} \ar{d}[swap]{f} & B \ar{d}{g} \ar[dashed]{dl}[swap]{d} \\
C \ar[rightarrowtail]{r}{m} & D
\end{tikzcd}
\] 
\end{rem}

We will make use of the \emph{pullback lemma}, see e.g.~\cite[Prop.~2.5.9]{borceux-94-1}:
\begin{lemma}\label{lem:pullback-lemma}
Given a commutative diagram as shown below in a category with pullbacks,
\begin{enumerate}
\item\label{lem:pullback-lemma-1} if (I) and (II) are pullbacks, then the outer rectangle is a pullback;
\item\label{lem:pullback-lemma-2} if (II) and the outer rectangle are pullbacks, then (I) is a pullback.
\end{enumerate}
\[
\begin{tikzcd}
\bullet \ar{r} \ar[phantom]{dr}[description]{\text{(I)}} \ar{d} & \bullet \ar[phantom]{dr}[description]{\text{(II)}} \ar{r} \ar{d} & \bullet \ar{d} \\
\bullet \ar{r} & \bullet \ar{r} & \bullet
\end{tikzcd}
\]
\end{lemma}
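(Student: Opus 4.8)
The plan is to prove both parts by verifying the universal property of pullbacks directly, chasing through the relevant cones rather than appealing to any external machinery. To fix notation, I would write the top row of the diagram as $A\xto{f} B\xto{g} C$, the bottom row as $D\xto{h} E\xto{k} F$, and the three vertical maps as $a\colon A\to D$, $b\colon B\to E$, $c\colon C\to F$. Commutativity of squares~(I) and~(II) then reads $b\circ f = h\circ a$ and $c\circ g = k\circ b$, while the outer rectangle has legs $g\circ f\colon A\to C$ and $a\colon A\to D$ sitting over the cospan $C\xto{c}F\xla{k\circ h}D$.

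For part~\ref{lem:pullback-lemma-1}, assume (I) and (II) are pullbacks. Given a test object $X$ with morphisms $x\colon X\to C$ and $y\colon X\to D$ satisfying $c\circ x = k\circ h\circ y$, I would first feed the pair $(x,\,h\circ y)$ into pullback~(II): since $c\circ x = k\circ(h\circ y)$, it yields a unique $w\colon X\to B$ with $g\circ w = x$ and $b\circ w = h\circ y$. The equation $b\circ w = h\circ y$ then makes $(w,y)$ a cone over the cospan defining~(I), so pullback~(I) produces a unique $u\colon X\to A$ with $f\circ u = w$ and $a\circ u = y$. Consequently $g\circ f\circ u = g\circ w = x$ and $a\circ u = y$, giving the required factorization through the outer rectangle; uniqueness of $u$ follows by chaining the two uniqueness clauses (any competitor induces the same $w$ via~(II), hence the same $u$ via~(I)).

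For part~\ref{lem:pullback-lemma-2}, assume (II) and the outer rectangle are pullbacks, and verify that (I) is a pullback. Given $X$ with $w\colon X\to B$ and $y\colon X\to D$ satisfying $b\circ w = h\circ y$, I would set $x := g\circ w$ and compute $c\circ x = c\circ g\circ w = k\circ b\circ w = k\circ h\circ y$, so that $(x,y)$ is a cone for the outer rectangle and produces a unique $u\colon X\to A$ with $g\circ f\circ u = g\circ w$ and $a\circ u = y$. The remaining task, which I expect to be the main obstacle, is to upgrade the weak equation $g\circ(f\circ u) = g\circ w$ to the genuinely needed $f\circ u = w$. Here I would exploit the uniqueness half of pullback~(II): both $f\circ u$ and $w$ are maps $X\to B$, and they agree after postcomposition with the two projections $g$ and $b$ of~(II), since $g\circ(f\circ u)=g\circ w$ by construction and $b\circ(f\circ u)=h\circ a\circ u = h\circ y = b\circ w$ using commutativity of~(I). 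As~(II) is a pullback, its projections are jointly monic, forcing $f\circ u = w$, so $u$ factorizes the cone $(w,y)$ through~(I). Uniqueness of this factorization is inherited from the outer pullback: any $u'$ with $f\circ u'=w$ and $a\circ u'=y$ also satisfies $g\circ f\circ u' = g\circ w = x$ and $a\circ u'=y$, whence $u'=u$.
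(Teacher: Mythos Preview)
Your argument is correct in both parts: the chase through the universal properties is the standard proof, and you handle the one delicate point in part~\ref{lem:pullback-lemma-2} (promoting $g\circ f\circ u = g\circ w$ to $f\circ u = w$ via joint monicity of the projections of pullback~(II)) cleanly.

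As for comparison with the paper: there is nothing to compare, since the paper does not supply a proof. The lemma is stated as a well-known fact and referenced to Borceux~\cite[Prop.~2.5.9]{borceux-94-1}. Your direct verification is exactly the kind of argument one finds there, so you have simply filled in what the paper leaves to the literature.
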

In the following let $A=(Q,\Sigma,\delta,I,F)$ and $A'=(Q',\Sigma,\delta',I',F')$ be nondeterministic $\C$-automata over the same alphabet $\Sigma$ and suppose that $h\colon A'\to A$ is a morphism such that $h_\al=\id$. We prove the two parts of the proposition.
\begin{enumerate}
\item For every $n\geq 0$ we show that $L^{(n)}(A')\leq L^{(n)}(A)$ as subobjects of $\Sigma^n$. We only consider the case $n>0$, as the argument for $n=0$ is very similar. The universal property of the pullback $\Run_A$ yields a unique morphism $e$ such that the upper part and the left-hand part of the diagram below commute; note that the outside and the other parts commute by definition.
\begin{equation}\label{eq:run-to-run}
\begin{tikzcd}[row sep = 3em]
\Run_{A'} \mar{r}{\ol{m}_{\delta'}^{(n)}} \ar[shiftarr = {xshift=-30}]{ddd}[swap]{\ol{d}_{n,A'}} \ar[dashed]{d}[swap]{e} & I'\times (\Sigma\times Q')^{n-1}\times \Sigma\times F' \ar{d}[description]{h_\init\times (\id\times h_\states)^{n-1}\times \id \times h_\final} \ar[shiftarr = {xshift=75}]{ddd}{d_{n,A'}} \\
\Run_A \mar{r}{\ol{m}_\delta^{(n)}} \ar{d}[swap]{\ol{d}_{n,A}} & I\times (\Sigma\times Q)^{n-1}\times \Sigma\times F \ar{d}{d_{n,A}} \\
\delta^n \ar{r}{m_\delta^n} & (Q\times\Sigma\times Q)^n \\
(\delta')^n \ar{u}{h_\delta^n} \ar{r}{m_{\delta'}^n} & (Q'\times \Sigma\times Q')^n \ar{u}[swap]{(h_\states\times\id\times h_\states)^n}
\end{tikzcd}
\end{equation}
Diagonal fill-in yields a unique morphism $i\colon L^{(n)}(A')\to L^{(n)}(A)$ making the upper part and the left-hand part of the diagram below commute; the outside and the other parts commute by definition. Hence the morphism $i$ witnesses that $L^{n}(A')\leq L^{(n)}(A)$.
\begin{equation}\label{eq:lang-inclusion}
\begin{tikzcd}
L^{(n)}(A') \ar[dashed]{dr}{i} \mar{ddd}[swap]{m_{L(A)}^{(n)}} & & &\Run_{A'} \ear{lll}[swap]{e_{n,A'}} \ar{dl}[swap]{e} \ar{ddd}{\ol{m}_{\delta'}^{(n)}} \\
& L^{(n)}(A) \mar{d}[swap]{m_{L(A)}^{(n)}} & \Run_A \ear{l}[swap]{e_{n,A}} \ar{d}{\ol{m}_{\delta}^{(n)}} & \\
& \Sigma^n \ar{dl}[swap]{\id} &  I\times (\Sigma\times Q)^{n-1}\times \Sigma \times F \ar{l}[swap]{p_{n,A}} & \\
\Sigma^n & & & I'\times (\Sigma\times Q')^{n-1}\times \Sigma \times F' \ar{lll}[swap]{p_{n,A'}} \ar{ul}[description]{h_\init\times (\id\times h_\states)^{n-1}\times \id\times h_\final}
\end{tikzcd}
\end{equation}
\item Now suppose that $h_\states$ is a strong epimorphism in $\C$ and that the three squares \eqref{eq:naut-morphism} are pullbacks. Our task is to show that $L^{(n)}(A')=L^{(n)}(A)$ for all $n\geq 0$; as above we only consider the case $n>0$. We first observe that the upper rectangle of \eqref{eq:run-to-run} is a pullback. To see this, note that the outside is a pullback by definition, and that the lower rectangle is a pullback by our assumption on $h$ and the fact that in every category pullbacks commute with products. 
Thus \autoref{lem:pullback-lemma}.\ref{lem:pullback-lemma-1} shows that the composite of the upper and the central rectangle forms a pullback, as it is is equal to the composite of the outside and the lower rectangle. Moreover the central rectangle is a pullback by definition, and so by \autoref{lem:pullback-lemma}.\ref{lem:pullback-lemma-2}, the upper rectangle is a pullback as well.

Since strong epimorphisms in $\C$ are stable under pullbacks and products (\autoref{ass:cats}), the morphism $h_\init\times (\id\times h_\states)^{n-1}\times \id\times h_\final$ appearing in the upper rectangle is a strong epimorphism. Using stability under pullbacks again, we see that $e$ is a strong epimorphism.
Therefore, by the uniqueness of image factorizations, the diagonal fill-in $i$ in \eqref{eq:lang-inclusion} is an isomorphism, proving that $L^{(n)}(A')=L^{(n)}(A)$ as subobjects of $\Sigma^n$.\qedhere
\end{enumerate}

\subsection*{Proof of \autoref{prop:liftadjaut}}
  \begin{enumerate}
    \item Given a functor $G\colon \C \to \D$ we define the lifted functor as follows:
      \[ \functor{\ol{G}}{\NAut(\C)}{\NAut(\D)}{\maketuple{Q,\Sigma,\delta,I,F}}{\maketuple{GQ,G\Sigma,\ol{G\delta},\ol{GI},\ol{GF}}}{\maketuple{h_\states,h_\al}}{\maketuple{Gh_\states,Gh_\al}}  \]
      Herein, the objects $\ol{G\delta}$, $\ol{GI}$, and $\ol{GF}$ are given by the image factorizations shown below, with $\canmap$ denoting the canonical morphism induced by the product projections:
      \[
        \begin{tikzcd}[column sep=4.5em]
          G\delta \ar[two heads]{r}{e_{\ol{G\delta}}} \ar{d}[left]{Gm_{\delta}} & \ol{G\delta} \ar[tail]{d}{m_{\ol{G\delta}}} \\
          G(Q \times \Sigma \times Q) \ar{r}[below]{\canmap} & GQ \times G\Sigma \times GQ
        \end{tikzcd}
          \quad
        \begin{tikzcd}[column sep=3em]
          GI \ar{d}[left]{Gm_I} \ar[two heads]{r}{e_{\ol{GI}}} & \ol{GI} \ar[tail]{ld}{m_{\ol{GI}}} \\
          GQ
        \end{tikzcd}
          \quad
        \begin{tikzcd}[column sep=3em]
          GF \ar{d}[left]{Gm_F} \ar[two heads]{r}{e_{\ol{GF}}} & \ol{GF} \ar[tail]{ld}{m_{\ol{GF}}} \\
          GQ
        \end{tikzcd}
      \]
      We only need to prove that $\maketuple{Gh_\states,Gh_\al}$ is an $\NAut(\D)$-morphism for every $\NAut(\C)$-morphism $h=\maketuple{h_\states,h_\al}\colon A' \to A$ between
      automata $A' = \maketuple{Q',\Sigma',\delta',I',F'}$ and $A = \maketuple{Q,\Sigma,\delta,I,F}$. Indeed, via diagonal fill-in we obtain the dashed morphisms making the diagrams below commute; note that in all three diagrams the outside commutes because $h$ is an $\NAut(\C)$-morphism, and the parts not involving the dashed morphisms commute either by definition or by naturality of $\canmap$. The central part of the first diagram and the lower parts of the other two diagrams show that $(Gh_\states, Gh_\al)$ is an $\NAut(\C)$-morphism.
      \[
        \begin{tikzcd}[column sep=0.125em,/tikz/column 2/.style={column sep=4.5em}]
          G\delta' \arrow[ddd, "Gm_{\delta'}"'] \arrow[rd, two heads, "e_{\ol{G\delta'}}"'] \arrow[rrrr, "Gh_{\trans}"] & & & & G\delta \arrow[ddd, "Gm_{\delta}"] \arrow[ld, two heads, "e_{\ol{G\delta}}"] \\
          & \ol{G\delta'} \arrow[d, "m_{\ol{G\delta'}}"', tail] \arrow[rr, "\exists", dashed] &  & \ol{G\delta} \arrow[d, "m_{\ol{G\delta}}", tail] &                                                          \\
          & GQ' \times G\Sigma' \times GQ' \arrow[rr, "Gh_\states \times Gh_\al \times Gh_\states"']                                       &  & GQ \times G\Sigma \times GQ                          &                                                          \\
          G(Q'\times\Sigma'\times Q') \ar{ru}[description]{\canmap} \arrow[rrrr, "G(h_\states \times h_\al \times h_\states)"'] & & & & G(Q \times \Sigma \times Q) \ar{lu}[description]{\canmap}                  
        \end{tikzcd}
      \]
      \[
        \begin{tikzcd}[column sep=2em]
          GI' \ar{dd}[left]{Gm_{I'}} \ar[two heads]{dr}[description]{e_{\ol{GI'}}} \ar{rrr}{Gh_\init} &&& GI \ar{dd}{Gm_I} \ar[two heads]{dl}[description]{e_{\ol{GI}}} \\
          & \ol{GI'} \ar[dashed]{r}[midway, above]{\exists} \ar[tail]{dl}[description]{m_{\ol{GI'}}} & \ol{GI} \ar[tail]{dr}[description]{m_{\ol{GI}}} & \\
          GQ' \ar{rrr}{Gh_\states} &&& GQ
        \end{tikzcd}
          \qquad
        \begin{tikzcd}[column sep=2em]
          GF' \ar{dd}[left]{Gm_{F'}} \ar[two heads]{dr}[description]{e_{\ol{GF'}}} \ar{rrr}{Gh_\final} &&& GF \ar{dd}{Gm_F} \ar[two heads]{dl}[description]{e_{\ol{GF}}} \\
          & \ol{GF'} \ar[dashed]{r}[midway, above]{\exists} \ar[tail]{dl}[description]{m_{\ol{GF'}}} & \ol{GF} \ar[tail]{dr}[description]{m_{\ol{GF}}} & \\
          GQ' \ar{rrr}{Gh_\states} &&& GQ
        \end{tikzcd}
      \]
    \item Let  $L\dashv R\colon \C\to \D$ be an adjunction with unit $\eta\colon \id_\D \to RL$ and counit $\varepsilon\colon LR \to \id_\C$. We only need to establish the following two statements:
\begin{enumerate}
\item\label{adj-a} for every $\D$-automaton $A=(Q,\Sigma,\delta,I,F)$ the pair $\ol{\eta}_A=(\eta_Q,\eta_\Sigma)\colon A\to \ol{R}\,\barL A$ is an $\NAut(\D)$-morphism;
\item\label{adj-b} for every $\C$-automaton $A=(Q,\Sigma,\delta,I,F)$ the pair $\ol{\epsilon}_A=(\epsilon_Q,\epsilon_\Sigma)\colon \barL\,\ol{R}A\to A$ is an $\NAut(\C)$-morphism.
\end{enumerate}

Then $\barL\dashv \ol{R}$ is an adjunction with unit $\ol{\eta}$ and counit $\ol{\epsilon}$. Indeed, naturality of $\ol{\eta}$ and $\ol{\epsilon}$ and the triangle laws are immediate from the corresponding properties of $\eta$ and $\epsilon$.

\medskip\noindent The proof of the first statement is given by the commutative diagrams below, where we write $L_I$, $L_F$, $L_\delta$ for $\ol{LI}$, $\ol{LF}$, $\ol{L\delta}$ and the dashed morphisms are just given by composition. In all three diagrams the outside commutes by naturality of $\eta$, and the parts not involving the dashed morphisms commute by definition.
            \[
              \begin{tikzcd}[column sep=2em,/tikz/column 1/.style={column sep=4.5em}]
                \delta \ar[tail]{d}[swap]{m_\delta} \ar[dashed]{r}{\exists} \ar[shiftarr={yshift=20}]{rrr}[above]{\eta_\delta} &
                \ol{RL_\delta} \ar[tail]{d}{m_{\ol{RL_\delta}}} & RL_\delta \ar[two heads]{l}[above]{e_{\ol{RL_\delta}}} \ar{d}{Rm_{L_\delta}} & RL\delta \ar{d}{RLm_\delta} \ar{l}[above]{Re_{L_\delta}} \\
                Q \times \Sigma \times Q \ar{r}[below]{\eta_Q \times \eta_\Sigma \times \eta_Q} \ar[shiftarr={yshift=-20}]{rrr}[below]{\eta_{Q \times \Sigma \times Q}} &
                RLQ \times RL\Sigma \times RLQ & R(LQ \times L\Sigma \times LQ) \ar{l}{\canmap} & RL(Q \times \Sigma \times Q) \ar{l}{R\canmap}
              \end{tikzcd}
            \]
          \[
            \begin{tikzcd}[column sep=2em]
              && RLI \ar{d}[description]{Re_{L_I}} \ar[shiftarrr = {xshift=30}]{ddl}[right, pos=0.25]{RLm_I} \\
              I \ar[roundcornerarrrr]{urr}[pos=0.75]{\eta_I} \ar[tail]{d}[left]{m_I} \ar[dashed]{r}{\exists} & \ol{RL_I} \ar[tail]{d}[description]{m_{\ol{RL_I}}} &
              RL_I \ar[two heads]{l}[above]{e_{\ol{RL_I}}} \ar{dl}{Rm_{L_I}} \\
              Q \ar{r}[below]{\eta_Q} & RLQ
            \end{tikzcd}
            \qquad
            \begin{tikzcd}[column sep=2em]
              && RLF \ar{d}[description]{Re_{L_F}} \ar[shiftarrr = {xshift=30}]{ddl}[right, pos=0.25]{RLm_F} \\
              F \ar[roundcornerarrrr]{urr}[pos=0.75]{\eta_F} \ar[tail]{d}[left]{m_F} \ar[dashed]{r}{\exists} & \ol{RL_F} \ar[tail]{d}[description]{m_{\ol{RL_F}}} &
              RL_F \ar[two heads]{l}[above]{e_{\ol{RL_F}}} \ar{dl}{Rm_{L_F}} \\
              Q \ar{r}[below]{\eta_Q} & RLQ
            \end{tikzcd}
          \]
Similarly, the second statement is proven by the three diagrams below, where we write $R_I$, $R_F$, $R_\delta$ for $\ol{RI},\ol{RF},\ol{R\delta}$ and the dashed morphisms are given by diagonal fill-in. Here we use the fact that $L$ preserves strong epimorphisms, being a left adjoint, and that strong epimorphisms are closed under composition.
          \[
            \begin{tikzcd}[column sep=2em,/tikz/column 3/.style={column sep=4.5em}]
              LR\delta \ar[shiftarr={yshift=20}]{rrr}[above]{\varepsilon_\delta} \ar{d}[left]{LRm_\delta} \ar[two heads]{r}[above]{Le_{R_\delta}} & LR_\delta 
              \ar[two heads]{r}[above]{e_{\ol{LR_\delta}}} \ar{d}{Lm_{R_\delta}} & \ol{LR_\delta} \ar[tail]{d}{m_{\ol{LR_\delta}}} \ar[dashed]{r}{\exists} &
              \delta \ar[tail]{d}{m_\delta} \\
              LR(Q \times \Sigma \times Q) \ar[shiftarr={yshift=-20}]{rrr}[below]{\varepsilon_{Q \times \Sigma \times Q}} \ar{r}[below]{L\canmap} & L(RQ \times R\Sigma \times RQ)
              \ar{r}[below]{\canmap} & LRQ \times LR\Sigma \times LRQ \ar{r}[below]{\varepsilon_Q \times \varepsilon_\Sigma \times \varepsilon_Q} &
              Q \times \Sigma \times Q
            \end{tikzcd}
\]
          \[
            \begin{tikzcd}[column sep=2em]
              LRI \ar[roundcornerarrr]{drr}[above,pos=0.25]{\varepsilon_I}  \ar[two heads]{d}[description]{Le_{R_I}} \ar[shiftarrr = {xshift=-30}]{ddr}[left, pos=0.25]{LRm_I} && \\
              LR_I \ar[two heads]{r}[above]{e_{\ol{LR_I}}} \ar{dr}[description]{Lm_{R_I}} & \ol{LR_I} \ar[tail]{d}[description]{m_{\ol{LR_I}}} \ar[dashed]{r}[above]{\exists} &
              I \ar[tail]{d}{m_I}  \\
              & LRQ \ar{r}[below]{\varepsilon_Q} & Q 
            \end{tikzcd}
            \qquad
      \begin{tikzcd}[column sep=2em]
              LRF \ar[roundcornerarrr]{drr}[above,pos=0.25]{\varepsilon_F}  \ar[two heads]{d}[description]{Le_{R_F}} \ar[shiftarrr = {xshift=-30}]{ddr}[left, pos=0.25]{LRm_F} && \\
              LR_F \ar[two heads]{r}[above]{e_{\ol{LR_F}}} \ar{dr}[description]{Lm_{R_F}} & \ol{LR_F} \ar[tail]{d}[description]{m_{\ol{LR_F}}} \ar[dashed]{r}[above]{\exists} &
              F \ar[tail]{d}{m_F}  \\
              & LRQ \ar{r}[below]{\varepsilon_Q} & Q 
            \end{tikzcd}
          \]
This concludes the proof. \qedhere
  \end{enumerate}

\subsection*{Proof of \autoref{prop:pos-closure-presheaf-lang}}
Put $\Lan=\Lan_E$ and consider the morphism
 \[ \varphi\colon \Lan(L) \xra{\Lan(\iota)} \Lan(V_{\II}^*) \cong \coprod_{k}\Lan(V_{\II}^k) \xto{\coprod_k \canmap_k} \coprod_{k}\Lan(V_{\II})^k = \coprod_{k} V_\FF^k = V_{\FF}^* \]
in $\Set^\FF$, where $\iota\colon L\hookto V_\II^\star$ is the inclusion and $\canmap_k$ is the canonical morphism induced by the product projections, and form its image factorization
\[
\begin{tikzcd}[column sep=3em]
\varphi\;=\; \big(\,\Lan(L) \ar[two heads]{r}{\coim \phi} & \barL \ar[hook]{r}{\im \phi} & V_\FF^\star\big).
\end{tikzcd}  
\]
We prove that $\barL$ is a positive closure of $L$. First, the diagram below demonstrates that $L\seq \barL E$, witnessed by the morphism $(\coim \phi) E\circ \eta_L$, where $\eta$ is the unit of the adjunction $\Lan \dashv E^\star\colon \Set^\FF\to \Set^\II$. Indeed, all parts commute either by definition or by naturality.
\[
\begin{tikzcd}[column sep=3em]
L \ar{r}{\eta_L} \ar{dddd}[swap]{\iota} & \Lan(L)E \ar{rrrr}{(\coim\phi)E} \ar{d}{\Lan(\iota)E} \ar[bend left=2em]{ddddrrrr}{\phi E} & &&& \barL E \ar{dddd}{(\im\phi)E} \\
& \Lan(V_\II^\star)E \ar{d}{\cong}  & &&& \\
& (\coprod_k \Lan(V_\II^k))E \ar{dr}{(\coprod_k \canmap_k)E}  &&&& \\
& \coprod_k V_\II^k \ar{u}{\coprod_k \eta_{V_\II^k}} \ar{r}{\coprod_k \eta_{V_\II}^k } & (\coprod_k \Lan(V_\II)^k)E \ar{d}{\cong} &&& \\
V_\II^\star \ar[equals]{rr} \ar{ur}{\cong} \ar[bend left=2em]{uuur}{\eta_{V_\II^\star}}  & &  V_\FF^\star E \ar[equals]{rrr} &&& V_\FF^\star E 
\end{tikzcd}
\]
To show that $\barL$ is minimal with that property, let $K\seq V_\FF^\star$ such that $L\seq KE$; denote the inclusions by $\psi\colon K\hookto V_\FF^\star$ and $\xi\colon L\hookto KE$. By the universal property of $\Lan(L)$, there exists a unique $\ol{\xi}\colon \Lan(L)\to K$ such that $\xi=\ol{\xi} E \circ \eta$. Now consider the first diagram below; its left-hand part commutes by definition, and the outside commutes because it does so when restricted to $E$ and precomposed with the universal map $\eta$, see the second diagram (note that $\iota = \phi E\circ \eta$ by the diagram above). Hence we obtain the dashed morphism via diagonal fill-in, witnessing that $\barL\seq K$.
\[
        \begin{tikzcd}[sep=small]
          {\Lan(L)} &&&& K \ar[tail, roundcornerarr={yshift=-4ex}]{ddddddll}[pos=.25, right]{\psi} \\
          \\
          {\Lan(V_{\II}^*)} \\
          {\coprod_{k} \Lan(V_{\II}^k)} && {\overline{L}}\\
          \\
          {\coprod_{k} \Lan(V_{\II})^k} \\
          {V_{\FF}^*} && {V_{\FF}^*} \\
          \arrow["\exists", dashed, from=4-3, to=1-5]
          \arrow["{\Lan(\iota)}"', from=1-1, to=3-1]
          \arrow["\cong"{marking}, draw=none, from=3-1, to=4-1]
          \arrow["{\coprod_k \canmap_k}"', from=4-1, to=6-1]
          \arrow["\cong"{marking}, draw=none, from=6-1, to=7-1]
          \arrow[equals, from=7-1, to=7-3]
          \arrow["{\im\varphi}", tail, from=4-3, to=7-3]
          \arrow["{\coim\varphi}", two heads, from=1-1, to=4-3]
          \arrow["\ol{\xi}", from=1-1, to=1-5]
        \end{tikzcd}
\qquad\qquad     
\begin{tikzcd}[sep=small]
        && {\Lan(L)E} \ar[shiftarr={xshift=4em}]{dddd}{\varphi E} \\
        \\
        KE && L && \\
        \\
        && {V_{\FF}^*E}
        \arrow["{\ol{\xi}E}"', from=1-3, to=3-1]
        \arrow["{\psi E}"', tail, from=3-1, to=5-3]
        \arrow["\ini", tail, from=3-3, to=5-3]
        \arrow["{\xi}"', from=3-3, to=3-1]
        \arrow["{\eta}"', from=3-3, to=1-3]
\end{tikzcd}
\]

\subsection*{Proof of \autoref{prop:presheaf-aut-closure}}

\begin{rem}\label{rem:left-kan-extension}
The left Kan extension $\Lan_E P\colon \FF\to \Set$ of a presheaf $P\colon \II\to \Set$ along $E\colon \II\hookto \FF$ is computed as follows, see e.g.~\cite[Thm.~X.3.1]{mac-71}:
\begin{itemize}
\item For $S\seq_\f \At$ the set $\Lan_E P(S)$ is the colimit of the diagram
\[ D_S\colon E{\downarrow}S\to \Set, \quad (\rho\colon ET\to S)\mapsto PT.\]
Here $E{\downarrow}S$ is the comma category whose objects are maps $\rho\colon ET\to S$ in $\FF$ where $T\seq_\f \At$ and whose morphisms from $\rho$ to $\rho'\colon ET'\to S$ are maps $\tau\colon T\to T'$ in $\II$ such that $\rho=\rho'\circ E\tau$. For an even more explicit description of $\Lan_E P(S)$ consider the set of all pairs $(x,\rho)$ where $\rho\colon ET\to S$ for some $T\seq_\f \At$ and $x\in PT$, and for any two such pairs put $(x,\rho)\sim (x',\rho')$ iff there exists a ziz-zag
\[ T=T_0\xto{\tau_1} T_1 \xleftarrow{\tau_2} T_2 \to \cdots \xleftarrow{\tau_{2n}} T_{2n}=T'  \]
in $\II$ and elements $x_i\in PT_i$ ($i=0,\ldots,n$) such that $x_0=x$, $x_{2n}=x'$, $P\tau_i(x_{i-1})=x_{i}$ for $i$ odd, and $P\tau_i(x_i)=x_{i-1}$ for $i>0$ even. 
Then $\sim$ is an equivalence relation, and $\Lan_E P(S)$ is the set of equivalence classes $[x,\rho]$ of $\sim$. The colimit injection $c_\rho\colon PT\to \Lan_E P(S)$  associated to $\rho\in E{\downarrow}S$ maps $x\in PT$ to $[x,\rho]$.
\item For $\sigma\colon S\to S'$ in $\FF$, the map $\Lan_E P(\sigma)\colon \Lan_E P(S)\to \Lan_E P(S')$ sends $[x,\rho]$ to $[x,\sigma\circ \rho]$.   
\item For a morphism $f\colon P\to P'$ in $\Set^\II$, the component of $\Lan_E f\colon \Lan_E P\to \Lan_E P'$ at $S\seq_\f \At$ is given by $[x,\rho]\mapsto [f_T(x),\rho]$ where $\rho\colon ET\to S$ and $x\in PT$.
\end{itemize}
\end{rem}

Let $A=(Q,V_\II,\delta,I,F)$ be a nondeterministic $\Set^\II$-automaton with a strong presheaf $Q$ of states. We put $\Lan=\Lan_E$ and $\barA=\ol{\Lan}\, A$, that is,
\[\barA = (\Lan\, Q, V_\FF, \barDelta,\barI,\barF),\] where $\barDelta=\ol{\Lan\,\delta}$, $\barI=\ol{\Lan\, I}$ and $\barF=\ol{\Lan\, F}$ 
 are obtained via the image factorizations of \autoref{prop:liftadjaut}. Our task is to show that $\barA$ accepts the language $\ol{L(A)}$, that is, $L^{(n)}(\barA)=\ol{L(A)}^{(n)}$ for all $n\geq 0$. We shall only treat the case $n>0$; the argument for $n=0$ is similar.

\subparagraph*{Step 1.} The universal property of the pullback $\Run_{\overbar{A}}^{(n)}$ yields a unique morphism $\epsilon$ making the diagram below commute:
\[
\begin{tikzcd}[scale cd=.7, column sep=.5em]
\Lan(\Run_A^{(n)}) \ar{rrr}{\Lan\, \ol{d}_{n,A}} \ar{dddd}[swap]{\Lan\, \ol{m}_\delta^{(n)}} \ar[dashed]{dr}{\epsilon} & & & \Lan\, \delta^{n} \ar{dddd}{\Lan\, m_\delta^n} \ar{dl}[swap]{e_{\overbar{\delta}}^n} \\
& \Run_{\overbar{A}}^{(n)} \ar{r}{\ol{d}_{n,\overbar{A}}} \ar{d}[swap]{\ol{m}_{\overbar{\delta}}^{(n)}} \pullbackangle{-45} & \barDelta^{n} \ar{d}{m_{\overbar{\delta}}^{n}}  & \\
& \barI\times (V_\FF\times \Lan\, Q)^{n-1}\times V_\FF\times \barF \ar{r}{d_{n,\overbar{A}}} & (\Lan\, Q \times V_\FF\times \Lan\, Q)^n & \\
& \Lan\, I\times (\Lan\, V_\II\times \Lan\, Q)^{n-1}\times \Lan\, V_\II \times \Lan\, F \ar{u}{e_{\overbar{I}} \times (\id\times \id)^{n-1} \times \id \times e_{\overbar{F}} } \ar{ur}[swap]{\Lan\, m_I\times (\id\times \Delta)^{n-1}\times \id\times \Lan\, m_F} & & \\
\Lan (I\times (V_\II \times Q)^{n-1}\times V_\II \times F) \ar{ur}{\canmap} \ar{rrr}{\Lan\, d_{n,A}} & & & \Lan (Q\times V_\II\times Q)^n \ar[bend right=2em]{uul}[swap]{\canmap^n}
\end{tikzcd}
\]

\subparagraph*{Step 2.}
We will show below that $\epsilon$ is a (strong) epimorphism. With this we can conclude the proof as follows. Consider the diagram below, where $p$ is the projection. The part marked $(\star)$ commutes because the outside and all other parts commute either by definition or by naturality. By definition, $\ol{L(A)}^{(n)}$ is the image of the morphism $\canmap\circ \Lan\, m_{L(A)}^{(n)}$ appearing on the left-hand side of the diagram. Since $\Lan\, e_{n,A}$ is an epimorphism (using that the left adjoint $\Lan$ preserves epimorphisms), the morphism  $\canmap\circ \Lan\, m_{L(A)}^{(n)}\circ \Lan\, e_{n,A}$ has the same image, and by commutativity of $(\star)$ and because $\epsilon$ is an epimorphism, this image is precisely $m_{L(\overbar{A})}^{(n)}\colon L^{(n)}(\barA)\monoto V_\FF^n$. Hence $L^{(n)}(\barA)\cong \ol{L(A)}^{(n)}$ as subobjects of $V_\FF^n$, as required.
\[
\begin{tikzcd}
\ar[shiftarr={xshift=-5em}]{dddd}[swap]{\Lan\, m_{L(A)}^{(n)}} \ar[phantom]{d}[description]{(\star)} \Lan(L^{(n)}(A)) & \Lan(\Run_A^{(n)}) \ar[two heads]{l}[swap]{\Lan\, e_{n,A}} \ar[two heads]{d}{\epsilon} \ar[shiftarr = {xshift=12em}]{dddd}{\Lan\, \ol{m}_A^{(n)}} \\
L^{(n)}(\barA) \ar[tail]{d}[swap]{m_{L(\overbar{A})}^{(n)}} & \Run_{\overbar{A}}^{(n)} \ar[two heads]{l}[swap]{e_{n,\overbar{A}}} \ar{d}{\ol{m}_{\overbar{A}}^{(n)}} \\
V_\FF^n & \barI\times (V_\FF\times \Lan\, Q)^{n-1}\times V_\FF\times \barF \ar{l}{p_{n,\overbar{A}}} \\
(\Lan\, V_\II)^n \ar[equals]{u} & \Lan\, I\times (\Lan\, V_\II \times \Lan\, Q)^{(n-1)}\times \Lan\, V_\II \times \Lan\, F \ar{l}[swap]{p} \ar{u}[swap]{e_{\overbar{I}}\times (\id\times \id)^{n-1}\times \id \times e_{\overbar{F}}} \\
\Lan(V_\II^n) \ar{u}{\canmap} & \Lan(I\times V_\II\times Q\times V_\II\times F) \ar{u}[swap]{\canmap} \ar{l}[swap]{\Lan\, p_{n,A}}  
\end{tikzcd}
\]
\subparagraph*{Step 3.} It remains to prove the above claim that $\epsilon$ is an epimorphism, i.e.~each component
\[\epsilon_S\colon \Lan(\Run_A^{(n)})S\to \Run_{\barA}^{(n)}S \qquad (S\seq_\f \At)\] is surjective. We first give an explicit description of $\epsilon_S$ using \autoref{rem:left-kan-extension}. For each $T\seq_\f \At$ the set $\Run_A^{(n)}T$ consists of all $T$-supported runs of $A$, that is, all triples $(p_0,a_1,p_1,\ldots, a_n,p_n)$ where $p_0\in IT$, $p_n\in FT$, and $(p_{r-1},a_r,p_r)\in \delta T$ for $r=1,\ldots,n$.
 Then the map $\epsilon_S$ is given by 
\[ [(q_0,a_1,q_1,\ldots,a_n,q_n),\rho] \quad\mapsto\quad ([q_0,\rho],\rho(a_1),[q_1,\rho],\ldots,\rho(a_n),[q_n,\rho]),\]
for all $\rho\colon ET\to S$ in $\FF$ and $(q_0,a_1,q_1,\ldots,a_n,q_n)\in \Run_A^{(n)}T$. 

To prove $\epsilon_S$ surjective, regard the pullback $\Run_{\overbar{A}}^{(n)}S$ as a subset of $((\Lan\, Q)S\times V_\FF S\times (\Lan\, Q)S)^n$, see \autoref{def:naut-acc-lang}. Then every element of $\Run_{\overbar{A}}^{(n)}S$ is a tuple of the form
\begin{equation}\label{eq:tuple} ([q_0,\rho_1], \rho_1(a_1), [q_1,\rho_1],[q_1',\rho_2], \rho_2(a_2), [q_2,\rho_2],  \ldots [q_{n-1}',\rho_n], \rho_n(a_n), [q_{n},\rho_{n}])\end{equation}
where  $[q_0,\rho_1]\in (\Lan\, I)S$, $[q_n,\rho_n]\in (\Lan\, F)S$, $[(q_{r-1}',a_r,q_r),\rho_r]\in (\Lan\, \delta)S$ for $r=1,\ldots, n$ (putting $q_0':=q_0$), and $[q_{r},\rho_{r}]=[q_r',\rho_{r+1}]$ for $r=1,\ldots,n-1$. We will show that we can choose the representatives such that $q_r=q_r'$ for $r=1,\ldots,n-1$ and $\rho_1=\cdots=\rho_n=:\rho$. Then $(q_0,a_1,q_1,\ldots,a_n,q_n)\in \Run_A^{(n)}$ and thus $\epsilon_S$ maps $[(q_0,a_1,q_1,\ldots,a_n,q_n),\rho]$ to \eqref{eq:tuple}.

Suppose that we already have $q_r=q_r'$ for $r=0,\ldots,m-1$ and $\rho_1=\cdots=\rho_m=:\rho$ for some $m<n$. We show that we can modify $q_m'$, $q_{m+1}$, $\rho$ and $\rho_{m+1}$ in such a way that this property also holds after $m+1$ steps. This is achieved by suitable choice of permutations and fresh names, much like in the proof of \autoref{prop:barA-acc-barL}. Recall that we assume the presheaf $Q$ to be strong, that is, $Q=\coprod_{j\in J} \II(S_j,-)$ for some finite set $J$ and $S_j\seq_\f \At$. 
\begin{enumerate}
\item Let $\rho\colon T\to S$ and $\rho_{m+1}\colon T_{m+1}\to S$. 
 Since $[q_m,\rho]=[q_{m}',\rho_{m+1}]$ in $\Lan\, Q$, the states $q_m,q_m'$ must belong to the same summand of $Q$, that is, $q_m\in \II(S_j,T)$ and $q_m'\in \II(S_j,T_{m+1})$ for some $j\in J$. Let $T+T_{m+1}$ denote the disjoint union of $T$ and $T_{m+1}$ with injections $\inl,\inr$. Then we have the following zig-zag in $E{\downarrow}S$:
\[
\begin{tikzcd}[column sep=6em]
 & S & \\
ET \ar{ur}{\rho} \ar{r}{E\inl} & E(T+T_{m+1}) \ar{u}[swap]{[\rho,\rho_{m+1}]} & ET_{m+1} \ar{l}[swap]{E\inr} \ar{ul}[swap]{\rho_{m+1}} 
\end{tikzcd}
\]
Hence, by replacing $T$ and $T_{m+1}$ with $T+T_{m+1}$, we may assume that $T=T_{m+1}$.
\item Since $q_m,q_m'\colon S_j\to T$ are injective maps, there exists a bijection $\pi\colon T\to T$ such that $q_m=\pi\circ q_m'$, witnessing that
\[[(q'_m, a_{m+1},q_{m+1}),\rho_{m+1}]=[(q_m,\pi(a_{m+1}),\pi\circ q_{m+1}),\rho_{m+1}\circ \pi^{-1}]\quad \text{in} \quad \Lan\,\delta,\]
in particular $[q'_m,\rho_{m+1}]=[q_m,\rho_{m+1}\circ \pi^{-1}]$ in $\Lan\, Q$.
Therefore, after replacing $\rho_{m+1}$ by $\rho_{m+1}\circ \pi^{-1}$ and $q_m'$ by $q_m$, we may assume that $q_m=q_m'$.
\item Finally, we consider  the following zig-zag in $E{\downarrow}S$:
\[
\begin{tikzcd}[column sep=6em]
 & S & \\
ET \ar{ur}{\rho} \ar{r}{E\inl} & E(T+T) \ar{u}[swap]{[\rho,\rho_{m+1}]} & ET \ar{l}[swap]{E\iota} \ar{ul}[swap]{\rho_{m+1}} 
\end{tikzcd}
\]
where $\iota\colon T\to T+T$ is the injective map sending every element $a\in q_m[S_j]\seq T$ to $\inl(a)$, and every other element of $T$ to $\inr(a)$. Note that $\inl\circ q_m=\ini\circ q_m$ and that the right-hand triangle commutes: since $[q_m,\rho]=[q_m,\rho_{m+1}]$ we have $\rho\circ q_m = \rho_{m+1}\circ q_m$, hence the maps $\rho,\rho_{m+1}\colon ET\to S$ agree on $q_m[S_j]\seq T$. Therefore, after replacing $\rho$ and $\rho_{m+1}$ with $[\rho,\rho_{m+1}]$, $q_0,q_1,\ldots,q_m$ by $\inl\circ q_0,\ldots,\inl\circ q_m = \iota\circ q_m$, $q_{m+1}$ by $\iota(q_{m+1})$ and $a$ by $\iota(a)$, we can assume that $\rho=\rho_{m+1}$. 
\end{enumerate} 

This concludes the proof of \autoref{prop:presheaf-aut-closure}.

\subsection*{Proof of \autoref{thm:presheaf-automata-vs-nofa}.1}

\begin{rem}
For a presheaf automaton $A=(Q,V_\C,\delta,I,F)$ in $\Set^\C$, $\C\in \{\II,\FF\}$, we write $q\xrightarrow[S]{a} q'$ if $(q,a,q')\in \delta S$ for $S\seq_\f \At$. The accepted word language $\wordlang{L(A)}$ is the set of all $a_1\ldots a_n\in \Ats$ for which there exists an accepting run, i.e.~a sequence of transitions $q_0\xrightarrow[S]{a_1} q_1\xrightarrow[S]{a_2} \cdots \xrightarrow[S]{a_n} q_n$ where $S\seq_\f\At$, $q_0\in IS$ and $q_n\in FS$.
\end{rem}

\begin{rem}\label{rem:i-upper-star}
We recall the left adjoint $I^\star\colon \Set^\II\to \Nom$ of $I_\star$, a.k.a.\ the \emph{sheafification functor}~\cite[Lem.~6.7]{Pitts2013}. For each $P\in \Set^\II$, the nominal set $I^\star P$ is defined as follows:
\begin{itemize}
\item The underlying set of $I^\star P$ is the colimit of the directed diagram
\[ D_P\colon \II_\seq \hookto \II \xto{P} \Set,  \]
where $\II_\seq$ is the poset of finite subsets of $\At$, i.e.~the restriction of $\II$ to  inclusion maps $i_{S,T}\colon S\xto{\seq} T$ for $S\seq T\seq_\f \At$. More explicitly, elements of $I^\star P$ are equivalence classes $[S,x]$ for the equivalence relation on $\coprod_{S\seq_\f \At} PS = \{\,(S,x): S\seq_\f \At,\, x\in PS\,\}$ given by
\[ (S,x)\sim (S',x') \qquad\text{iff}\qquad \exists T\supseteq S,S'.\, Pi_{S,T}(x)=Pi_{S,T'}(x'). \]
\item The group action on $I^\star P$ is given by
\[ \pi\cdot [S,x] = [\pi[S],P\pi|_S(x)] \qquad \text{for $\pi\in \Perm(\At)$ and $[S,x]\in I^\star P$},\]
with $\pi|_S\colon S\to \pi[S]$ denoting the domain-codomain restriction of $\pi\colon \At\to \At$.
\end{itemize}

Since directed colimits in $\Set$ commute with finite limits, the left adjoint $I^\star$ preserves finite limits, in particular products and monomorphisms. In fact, this property  holds in general for sheafification functors~\cite[Thm.~III.5.1]{mm92}.
\end{rem}

\noindent One direction of \autoref{thm:presheaf-automata-vs-nofa}.1 is established by the following lemma. Recall the embedding $I_\star\colon \Nom\to \Set^\II$ (\autoref{sec:toposes}) and its lifting $\barI_\star\colon \NAutfp(\Nom)\to \NAutfp(\Set^\II)$ from \eqref{eq:adjunctions-naut}.

\begin{lemma}\label{lem:nofa-to-setI}
Every NOFA $A$ is word-language equivalent to the $\Set^\II$-automaton $\barI_\star A$.
\end{lemma}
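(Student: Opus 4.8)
The plan is to unfold both automata explicitly and to match their accepting runs fibre by fibre, using the concrete description of the embedding $I_\star$ from \autoref{sec:toposes} together with the run characterization of the word language $\wordlang{-}$ recalled above.

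First I would observe that for $G=I_\star$ the image factorizations occurring in the definition of $\barI_\star$ (\autoref{prop:liftadjaut}) are trivial. Indeed, $I_\star$ is a right adjoint, its left adjoint being the reflector $I^\star$ (cf.\ \eqref{eq:adjunctions}), hence $I_\star$ preserves monomorphisms and finite products. Consequently the canonical morphism $\canmap\colon I_\star(Q\times\At\times Q)\to I_\star Q\times I_\star\At\times I_\star Q$ is an isomorphism and $I_\star m_\delta$ is a monomorphism, so the composite defining $\ol{I_\star\delta}$ is already monic; thus $\ol{I_\star\delta}\cong I_\star\delta$, and likewise $\ol{I_\star I}\cong I_\star I$ and $\ol{I_\star F}\cong I_\star F$. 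Therefore
\[ \barI_\star A \;=\; (I_\star Q,\; I_\star\At,\; I_\star\delta,\; I_\star I,\; I_\star F), \]
every component being obtained by applying $I_\star$ (up to the canonical product isomorphism for $\delta$).

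Next I would compute the fibres over each $S\seq_\f\At$. From the description of $I_\star$ one gets $(I_\star Q)S=\{q\in Q:\supp q\seq S\}$ and $(I_\star\At)S=\{a\in\At:\supp a\seq S\}=S$, while, using that the least support of a tuple is the union of the least supports of its entries,
\[ (I_\star\delta)S=\{(q,a,q')\in\delta:\supp q\seq S,\ a\in S,\ \supp q'\seq S\}. \]
In the notation of the run characterization above, a fibre transition $q\xrightarrow[S]{a} q'$ of $\barI_\star A$ is therefore exactly a transition $q\xto{a}q'$ of the NOFA $A$ subject to the side conditions $\supp q,\supp q'\seq S$ and $a\in S$; similarly $q_0\in(I_\star I)S$ iff $q_0\in I$ with $\supp q_0\seq S$, and $q_n\in(I_\star F)S$ iff $q_n\in F$ with $\supp q_n\seq S$.

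Finally I would establish the two inclusions. The inclusion $\wordlang{L(\barI_\star A)}\seq L(A)$ is immediate: an accepting fibre run $q_0\xrightarrow[S]{a_1}\cdots\xrightarrow[S]{a_n}q_n$ of $\barI_\star A$ is, by the previous paragraph, literally an accepting run $q_0\xto{a_1}\cdots\xto{a_n}q_n$ of $A$. For $L(A)\seq\wordlang{L(\barI_\star A)}$, given an accepting run $q_0\xto{a_1}\cdots\xto{a_n}q_n$ of $A$ I would put $S=\bigcup_{r=0}^{n}\supp q_r\cup\{a_1,\dots,a_n\}$, which is finite; all support and membership conditions of the previous paragraph then hold over this single $S$, so the same sequence is an accepting fibre run of $\barI_\star A$ over $S$, witnessing $a_1\cdots a_n\in\wordlang{L(\barI_\star A)}$. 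No step poses a genuine obstacle; the only point requiring care is the first, namely checking that the image factorizations collapse so that $\barI_\star A$ takes the expected componentwise form — precisely where the preservation of finite limits by the right adjoint $I_\star$ enters.
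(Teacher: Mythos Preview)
Your proposal is correct and follows essentially the same approach as the paper's proof: both arguments match accepting runs of $A$ with accepting fibre runs of $\barI_\star A$ over a suitable finite $S$, and conversely. You supply more detail than the paper does---in particular, you justify explicitly why the image factorizations in $\barI_\star$ collapse (via preservation of finite limits by the right adjoint $I_\star$), whereas the paper simply writes ``by definition of $I_\star$'' and proceeds directly to the run correspondence.
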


\begin{proof}
By definition of $I_\star$, every accepting run 
\begin{equation}\label{eq:nofa-accrun} 
q_0\xto{a_1} q_1\xto{a_2}\cdots \xto{a_n} q_n 
\end{equation}
of the NOFA $A$ yields the accepting run
\begin{equation}\label{eq:setI-accrun} 
q_0\xrightarrow[S]{a_1} q_1\xrightarrow[S]{a_2}\cdots \xrightarrow[S]{a_n} q_n 
\end{equation}
of the presheaf automaton $\barI_\star A$, where $S\seq_\f \At$ is any set of names containing $a_1,\ldots,a_n$ and supporting $q_0,\ldots,q_n$. Conversely, every accepting run \eqref{eq:setI-accrun} of $\barI_\star A$ yields the accepting run \eqref{eq:nofa-accrun} of $A$.
\end{proof}

Similarly, for the reverse direction we use the lifting ${\barI}^\star\colon \NAutfp(\Set^\II)\to \NAutfp(\Nom)$. 

\begin{lemma}\label{lem:setI-aut-to-nofa}
Every super-finitary nondeterministic $\Set^\II$-automaton $A$ is word-language equivalent to the NOFA ${\barI}^\star{A}$.
\end{lemma}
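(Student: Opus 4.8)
The plan is to unfold the explicit description of the sheafification functor $I^\star$ from \autoref{rem:i-upper-star} and to translate accepting runs of the presheaf automaton $A=(Q,V_\II,\delta,I,F)$ back and forth into accepting runs of the NOFA $\barI^\star A$. First I would make the structure of $\barI^\star A$ concrete. Since $I^\star$ preserves finite limits (\autoref{rem:i-upper-star}), the canonical comparison maps $\canmap$ occurring in the definition of the lifted functor $\barI^\star$ (\autoref{prop:liftadjaut}) are isomorphisms and $I^\star$ sends the monomorphisms $m_\delta$, $m_I$, $m_F$ to monomorphisms. Hence $\barI^\star A=(I^\star Q,\delta',I',F')$ satisfies $I^\star V_\II\cong \At$ (as $V_\II=I_\star\At$ and $I^\star I_\star\cong \id$), and, writing elements of $I^\star Q$ as classes $[S,q]$ with $q\in QS$, we have $([S,q],a,[S,q'])\in \delta'$ whenever $(q,a,q')\in \delta S$, with every transition of $\barI^\star A$ arising from such a triple at some level; moreover $I'$ and $F'$ consist of those classes admitting a representative $q\in IS$, resp.\ $q\in FS$.

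For the inclusion $\wordlang{L(A)}\seq L(\barI^\star A)$ I would take an accepting presheaf run $q_0\xrightarrow[S]{a_1} q_1\xrightarrow[S]{a_2}\cdots \xrightarrow[S]{a_n} q_n$ at some level $S\seq_\f\At$ (so $q_0\in IS$, $q_n\in FS$ and $(q_{r-1},a_r,q_r)\in \delta S$) and read off the sequence $[S,q_0]\xto{a_1}[S,q_1]\xto{a_2}\cdots \xto{a_n}[S,q_n]$. By the description of $\barI^\star A$ above this is an accepting NOFA run, so $a_1\cdots a_n\in L(\barI^\star A)$.

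The converse inclusion is the crux. Starting from an accepting NOFA run $x_0\xto{a_1} x_1\xto{a_2}\cdots \xto{a_n} x_n$ with $x_0\in I'$ and $x_n\in F'$, each ingredient is witnessed at some finite level: $x_0=[S_0,q_0]$ with $q_0\in IS_0$, $x_n=[S_n,q_n]$ with $q_n\in FS_n$, and the $r$-th transition provides a set $T_r\seq_\f\At$ and a triple $(u_r,a_r,v_r)\in \delta T_r$ with $x_{r-1}=[T_r,u_r]$, $x_r=[T_r,v_r]$ and $a_r\in T_r$. The obstacle is that these witnesses live at incompatible levels. I would first restrict everything along the inclusions $i_{-,S}$ to the common level $S=S_0\cup S_n\cup\bigcup_r T_r$; since $I$, $F$ and $\delta$ are sub-presheaves of $Q$, resp.\ $Q\times V_\II\times Q$, the restrictions still lie in $IS$, $FS$ and $\delta S$. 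At level $S$ each state $x_r$ now carries two representatives (the target of the $r$-th transition and the source of the $(r{+}1)$-st, with the boundary cases supplied by the initial and final witnesses), all representing the same element of the colimit $I^\star Q$. Applying the characterization of directed colimits (\autoref{rem:directed-colimits}, property (ii)) to these finitely many coincidences and using directedness of the poset of finite subsets of $\At$, I obtain a single level $S'\supseteq S$ at which all representatives of each $x_r$ agree. Writing $\tilde q_r\in QS'$ for their common value, we get $[S',\tilde q_r]=x_r$; restricting the transition data to $S'$ yields $(\tilde q_{r-1},a_r,\tilde q_r)\in \delta S'$, together with $\tilde q_0\in IS'$ and $\tilde q_n\in FS'$. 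This is an accepting presheaf run at level $S'$, whence $a_1\cdots a_n\in\wordlang{L(A)}$.

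I expect the only genuinely delicate point to be the bookkeeping in this last step: selecting one level $S'$ that simultaneously resolves all the overlapping pairs of representatives and the two boundary identifications, which amounts to closing the finitely many instances of \autoref{rem:directed-colimits}(ii) under the directedness of inclusions. Everything else is a routine translation through the colimit formula for $I^\star$.
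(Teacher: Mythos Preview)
Your proposal is correct and follows essentially the same approach as the paper: both directions are handled by translating runs through the colimit description of $I^\star$, with the nontrivial converse inclusion proceeding by first pushing all witnesses along inclusions to a common finite level $S$ and then invoking the directed-colimit characterization (\autoref{rem:directed-colimits}(ii)) finitely many times to find a single larger level $S'$ at which all overlapping representatives coincide. The only cosmetic difference is that you explicitly justify the shape of $\barI^\star A$ via finite-limit preservation of $I^\star$, whereas the paper simply writes down the resulting description.
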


\begin{proof}
The inclusion $W(L(A))\seq L({\barI}^\star A)$ holds because every accepting run 
\[ q_0\xrightarrow[S]{a_1} q_1\xrightarrow[S]{a_2}\cdots \xrightarrow[S]{a_n} q_n \]
of $A$ yields the accepting run 
\[  
[S,q_0]\xto{a_1} [S,q_1] \xto{a_2}\cdots \xto{a_n} [S,q_n]
\]
of ${\barI}^\star A$. For the proof of $L({\barI}^\star A)\seq W(L(A))$, suppose that $a_1\cdots a_n\in L({\barI}^\star A)$. By definition of ${\barI}^\star A$, an accepting run of $a_1\cdots a_n$ then has the form
\begin{align*}
[S_1,q_0]\xto{a_1} [S_1,q_1]=[S_2,q_1'] \xto{a_2} [S_2,q_2]=[S_3,q_2'] \xto{a_3}\cdots \\ \cdots \xto{a_{n-1}} [S_{n-1},q_{n-1}]=[S_n,q_{n-1}'] \xto{a_n} [S_n,q_n]
\end{align*}
where $q_{r-1}'\xrightarrow[S_r]{a_r} q_r$ in $A$ (putting $q_0':=q_0$) for $r=1,\ldots,n$, and $[S_1,q_0]=[\ol{S}_1,\ol{q}_0]$ for some $\ol{S}_1\seq_\f \At$ and $\ol{q}_0\in I\ol{S}_1$, and $[S_n,q_n]=[\ol{S}_n,\ol{q}_n]$ for some  $\ol{S}_n\seq_\f \At$ and $\ol{q}_n\in F\ol{S}_n$. Replacing the sets $S_1,\ldots,S_n,\ol{S}_1,\ol{S}_n$ by their union we can assume that $S_1=\cdots S_n=\ol{S}_1=\ol{S}_n=:S$. Since $[S,q_{r}]=[S,q_r']$ we know that there exists $T_r\supseteq S$ such that $Pi_{S,T}(q_{r})=Pi_{S,T}(q_{r}')$. Similarly, we have sets $\ol{T}_0,\ol{T}_n\supseteq S$ witnessing that $[S,q_0]=[S,\ol{q}_0]$ and $[S,q_n]=[S,\ol{q}_n]$. Taking the union again, we can assume that $T_1=\cdots T_n=\ol{T}_0=\ol{T}_n=:T$. Hence, after replacing $q_1,\ldots,q_n,\ol{q}_0,\ol{q}_n$ by $Pi_{S,T}(q_1),\ldots Pi_{S,T}(q_n), Pi_{S,T}(\ol{q}_0), Pi_{S,T}(\ol{q}_n)$ we can assume that $q_r=q_r'$ for $r=1,\ldots,n$
and $q_0=\ol{q_0}\in IT$ and $q_n=\ol{q}_n\in FT$. We conclude that
\[ q_0\xrightarrow[T]{a_1} q_1=q_1'\xrightarrow[T]{a_2} q_2=q_2'\xrightarrow[T]{a_3}\cdots q_{n-1}=q_{n-1}'\xrightarrow[T]{a_n} q_n \]
is an accepting run of $A$, proving $L({\barI}^\star A)\seq W(L(A))$.
\end{proof}

\subsection*{Proof of \autoref{thm:presheaf-automata-vs-nofa}.2}
One may argue analogously to \autoref{thm:presheaf-automata-vs-nofa}.1, replacing $\Set^\II$ by $\Set^\FF$ and NOFA by NOFRA (which are equivalent to NOFA for positive word languages by \autoref{thm:nofa-vs-nofra}). We give an alternative argument that relates $\Set^\II$- and $\Set^\FF$-automata in a more direct manner. By \autoref{thm:presheaf-automata-vs-nofa}.1 it suffices to prove the following two lemmas.

\begin{lemma}
Every super-finitary nondeterministic $\Set^\FF$-automaton $A$ accepts a positive word language and is word-language equivalent to the super-finitary nondeterministic $\Set^\II$-automaton ${\barE}^\star A$.
\end{lemma}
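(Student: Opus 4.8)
The plan is to prove the two assertions separately, with the positivity of the accepted word language carrying the real (though modest) content and the word-language equivalence reducing to a pointwise observation.

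For positivity I would argue directly from the run-based characterization of $\wordlang{L(A)}$ recalled just above. Suppose $a_1\cdots a_n\in\wordlang{L(A)}$ via an accepting run $q_0\xrightarrow[S]{a_1} q_1\xrightarrow[S]{a_2}\cdots\xrightarrow[S]{a_n} q_n$ with $S\seq_\f\At$, $q_0\in IS$, $q_n\in FS$ and $(q_{r-1},a_r,q_r)\in\delta S$. Given any renaming $\rho\colon\At\to\At$, set $T=\rho[S]$ and note that the corestriction $\rho|_S\colon S\to T$ is a morphism of $\FF$; this is the single place where it is essential that $\FF$ admits \emph{all} maps and not merely injective ones. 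Applying $Q(\rho|_S)$ and using naturality of the sub-presheaf inclusions $I\monoto Q$, $F\monoto Q$ and $\delta\monoto Q\times V_\FF\times Q$ (together with $V_\FF(\rho|_S)=\rho|_S$, so that the alphabet letter $a_r$ is sent to $\rho(a_r)$), the elements $q_r':=Q(\rho|_S)(q_r)$ satisfy $q_0'\in IT$, $q_n'\in FT$ and $(q_{r-1}',\rho(a_r),q_r')\in\delta T$. Hence $q_0'\xrightarrow[T]{\rho(a_1)}\cdots\xrightarrow[T]{\rho(a_n)} q_n'$ is an accepting run, so $\rho(a_1)\cdots\rho(a_n)\in\wordlang{L(A)}$ and $\wordlang{L(A)}$ is $\Fin(\At)$-equivariant, i.e.\ positive.

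For the word-language equivalence I would first unfold ${\barE}^\star A$. Since $E^\star$ acts by precomposition with $E$, and both products and monomorphisms in these presheaf categories are computed componentwise, $E^\star$ preserves both; therefore in the image factorizations defining the lifting (\autoref{prop:liftadjaut}) every canonical map $\canmap$ is an isomorphism and each restricted subobject is already monic, giving ${\barE}^\star A=(E^\star Q,\,V_\II,\,E^\star\delta,\,E^\star I,\,E^\star F)$ with $E^\star V_\FF=V_\II$. The crucial point is then purely pointwise: every $S\seq_\f\At$ is at once an object of $\FF$ and of $\II$ with $E(S)=S$, while the run-based description of $\wordlang{\cdot}$ inspects only the values $\delta S$, $IS$, $FS$ at such a fixed $S$ and never the action of non-identity morphisms. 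As $(E^\star\delta)S=\delta S$, $(E^\star I)S=IS$ and $(E^\star F)S=FS$, the two automata admit literally the same accepting runs, whence $\wordlang{L({\barE}^\star A)}=\wordlang{L(A)}$. Super-finitarity of ${\barE}^\star A$ is inherited from $A$ by \autoref{prop:fp-presheaves}, which guarantees that $E^\star$ (hence its lifting ${\barE}^\star$) preserves finitely presentable objects, so the adjunction restricts to super-finitary automata as in \eqref{eq:adjunctions-naut}.

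I do not anticipate a genuine obstacle here: the argument is a direct transport of runs, once along the $\FF$-morphism $\rho|_S$ (for positivity) and once along restriction (for equivalence). The only two places that demand care are verifying that the image factorizations defining ${\barE}^\star$ collapse — which is exactly the statement that $E^\star$ preserves products and monomorphisms — and explicitly invoking the already-established run characterization of the accepted word language, so that the comparison of $A$ and ${\barE}^\star A$ may be carried out at the level of individual runs rather than at the level of the presheaf languages $L(A)$ and $L({\barE}^\star A)$, which genuinely live in different categories.
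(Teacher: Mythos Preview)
Your proposal is correct and follows essentially the same approach as the paper. The one cosmetic difference is in the positivity argument: the paper argues at the level of the presheaf language, observing that $L(A)\seq V_\FF^\star$ is a sub-presheaf and hence $a_1\cdots a_n\in L(A)(S)$ implies $\rho(a_1)\cdots\rho(a_n)\in L(A)(\rho[S])$ by functoriality, whereas you unfold this one level further and transport the accepting run itself along $\rho|_S$; both amount to the same naturality observation, and your word-language-equivalence argument (same runs, since $E^\star$ preserves products and monos so that ${\barE}^\star A$ has the expected componentwise form) matches the paper's ``$E^\star$ is just a forgetful functor'' verbatim in spirit.
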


\begin{proof}
Let $A$ be a super-finitary nondeterministic $\Set^\FF$-automaton. We first prove that $\wordlang{L(A)}$ is a positive word language. Given $a_1\cdots a_n\in \wordlang{L(A)}$ and a renaming $\rho\colon \At\to\At$, choose $S\seq_\f \At$ such that $a_1\cdots a_n\in L(A)(S)$. Then $\rho(a_1)\cdots \rho(a_n)\in L(A)(\rho[S])$ because $L(A)\seq V_\FF^\star$ is a sub-presheaf. Hence $\rho(a_1)\cdots \rho(a_n)\in \wordlang{L(A)}$, so $\wordlang{L(A)}$ is positive.

Since $E^\star\colon \Set^\FF\to \Set^\II$ is just a forgetful functor, clearly $A$ is word-language equivalent to the $\Set^\II$-automaton ${\barE}^\star A$: both automata have the same accepting runs.
\end{proof}

\begin{rem}\label{rem:pos-closure-presheaf-lang}
For every presheaf language $L\seq V_\II^\star$, the positive closure $\ol{L}\seq V_\FF^\star$ is given at $S\seq_\f \At$ by \[\ol{L}(S)=\{\,\rho^\star(w) : \text{$w\in L(T)$ and $\rho\in \FF(T,S)$ for some $T\seq_\f \At$}\,\}. \] Indeed, this language clearly satisfies the universal property of \autoref{def:pos-closure-presheaf-lang}. In particular, if $W(L)$ is a positive word language, then $W(\ol{L})=W(L)$.
\end{rem}

\begin{lemma}
Every super-finitary nondeterministic $\Set^\II$-automaton $A$ accepting a positive word language is word-language equivalent to the super-finitary nondeterministic $\Set^\FF$-automaton $\ol{\Lan}_E A$. 
\end{lemma}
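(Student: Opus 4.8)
The plan is to prove the two word-language inclusions $\wordlang{L(A)}\seq\wordlang{L(\ol{\Lan}_E A)}$ and $\wordlang{L(\ol{\Lan}_E A)}\seq\wordlang{L(A)}$ separately; only the second will use positivity.

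For the first inclusion I would argue directly on runs. Unwinding the image factorizations defining $\ol{\Lan}_E A=(\Lan_E Q,V_\FF,\ol{\Lan_E\delta},\ol{\Lan_E I},\ol{\Lan_E F})$ together with the colimit description of \autoref{rem:left-kan-extension}, the transitions of $\ol{\Lan}_E A$ at a stage $S\seq_\f\At$ are exactly the triples $([q,\rho],\rho(a),[q',\rho])$ with $\rho\colon T\to S$ in $\FF$ and $(q,a,q')\in\delta T$, and likewise $\ol{\Lan_E I}(S)=\{[q,\rho]:q\in IT,\ \rho\colon T\to S\}$, and similarly for $\ol{\Lan_E F}$. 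Hence any accepting run $q_0\xrightarrow[S]{a_1}\cdots\xrightarrow[S]{a_n}q_n$ of $A$ at stage $S$ yields, by choosing $\rho=\id_S$, the accepting run $[q_0,\id_S]\xrightarrow[S]{a_1}\cdots\xrightarrow[S]{a_n}[q_n,\id_S]$ of $\ol{\Lan}_E A$ on the same word. This works for arbitrary super-finitary $A$ and needs neither positivity nor strong states.

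For the reverse inclusion I would first reduce to strong states. By \autoref{rem:strong-presheaf} there is a super-finitary $\Set^\II$-automaton $A_s$ with a strong presheaf of states together with a morphism $h\colon A_s\to A$ satisfying $h_\al=\id$, $h_\states$ strongly epic, and pullback squares \eqref{eq:naut-morphism}; by \autoref{prop:epi-pres-lang} we then have $L(A_s)=L(A)$ as presheaf languages, so $\wordlang{L(A_s)}=\wordlang{L(A)}$ is positive. Now \autoref{prop:presheaf-aut-closure} applies to $A_s$ and shows that $\ol{\Lan}_E A_s$ accepts the positive closure $\ol{L(A_s)}$; since $\wordlang{L(A_s)}$ is positive, \autoref{rem:pos-closure-presheaf-lang} collapses the closure and gives $\wordlang{L(\ol{\Lan}_E A_s)}=\wordlang{\ol{L(A_s)}}=\wordlang{L(A_s)}=\wordlang{L(A)}$. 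It thus remains to transfer this from $A_s$ to $A$, i.e.\ to prove $\wordlang{L(\ol{\Lan}_E A)}=\wordlang{L(\ol{\Lan}_E A_s)}$.

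This transfer is the main obstacle. Lifting $h$ via \autoref{prop:liftadjaut} yields an $\NAut(\Set^\FF)$-morphism $\ol{\Lan}_E h\colon\ol{\Lan}_E A_s\to\ol{\Lan}_E A$ whose alphabet component is $\Lan_E(\id_{V_\II})=\id_{V_\FF}$ and whose state component $\Lan_E h_\states$ is again strongly epic, since the left adjoint $\Lan_E$ preserves epimorphisms. \autoref{prop:epi-pres-lang}.1 then gives the inclusion $\wordlang{L(\ol{\Lan}_E A_s)}\seq\wordlang{L(\ol{\Lan}_E A)}$ for free (reproving the easy direction), but the converse requires \autoref{prop:epi-pres-lang}.2, hence that the squares of $\ol{\Lan}_E h$ are again pullbacks. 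This is the delicate point: $\Lan_E$ is merely a left adjoint and preserves neither pullbacks nor monomorphisms in general, so this preservation must be checked by hand. I would verify it componentwise at each stage $S\seq_\f\At$, using the explicit colimit formula of \autoref{rem:left-kan-extension} to compute both $\ol{\Lan_E\delta_s}(S)$ and the preimage of $\ol{\Lan_E\delta}(S)$ under $\Lan_E h_\states\times\id\times\Lan_E h_\states$ and checking they agree; here the coproduct-of-representables form $Q_s=\coprod_j\II(S_j,-)$ of the strong states (so that $\Lan_E Q_s(S)=\coprod_j\FF(S_j,S)$ carries no nontrivial identifications) together with surjectivity of $h_\states$ should force the two sides to coincide. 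A more hands-on alternative is to lift an accepting run of $\ol{\Lan}_E A$ directly to one of $\ol{\Lan}_E A_s$: each single transition lifts because $h_\states$ is surjective and the $h$-squares are pullbacks, and the rigidity of $\Lan_E Q_s$ lets one choose the representatives of consecutive states consistently, in the spirit of Step~3 of the proof of \autoref{prop:presheaf-aut-closure}. Once the transfer is in place, combining it with the preceding paragraph gives $\wordlang{L(\ol{\Lan}_E A)}=\wordlang{L(A)}$, completing the proof.
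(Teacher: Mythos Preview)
Your overall route coincides with the paper's: reduce to strong states, invoke \autoref{prop:presheaf-aut-closure} to identify the language of $\ol{\Lan}_E$ of the strong-state automaton with the positive closure, and then collapse the closure using positivity of $\wordlang{L(A)}$ via \autoref{rem:pos-closure-presheaf-lang}. The paper's proof is literally three lines: ``Assuming w.l.o.g.\ that $A$ has a strong presheaf of states, by \autoref{prop:presheaf-aut-closure} the automaton $\ol{\Lan}_E A$ accepts $\ol{L(A)}$, and \autoref{rem:pos-closure-presheaf-lang} shows $\wordlang{L(A)}=\wordlang{\ol{L(A)}}$.''

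The difference is that you take the lemma's quantification over \emph{every} $A$ seriously and therefore isolate the ``transfer'' step $\wordlang{L(\ol{\Lan}_E A)}=\wordlang{L(\ol{\Lan}_E A_s)}$ as a genuine obligation, whereas the paper simply absorbs it into ``w.l.o.g.'' without further comment. For the intended application (Theorem~\ref{thm:presheaf-automata-vs-nofa}.\ref{thm:pavs:2}) the w.l.o.g.\ is harmless: one only needs \emph{some} super-finitary $\Set^\FF$-automaton accepting the given positive word language, so one may as well start from an $A$ that already has strong states. But for the lemma as literally stated your concern is well placed, and the paper does not address it.

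Your two sketches for the transfer are reasonable directions but not yet proofs. The pullback-preservation route is the more fragile one: as you note, $\Lan_E$ does not preserve finite limits, so you cannot appeal to general nonsense, and one really must exploit the special shape $Q_s=\coprod_j\II(S_j,-)$ together with $\Lan_E Q_s=\coprod_j\FF(S_j,-)$ to show that the particular squares in \eqref{eq:naut-morphism} stay cartesian after applying $\Lan_E$ and taking images. The run-lifting route is closer to how \autoref{prop:presheaf-aut-closure} itself is proved and is likely the cleaner option; it would essentially repeat the zig-zag bookkeeping of Step~3 there, now relative to the surjection $h_\states$. Either way, what you have is a correct identification of the missing ingredient rather than a complete argument.
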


\begin{proof}
Let $A$ be a super-finitary nondeterministic $\Set^\II$-automaton such that $\wordlang{L(A)}$ is a positive word language. Assuming w.l.o.g.\ that $A$ has a strong presheaf of states, by \autoref{prop:presheaf-aut-closure} the automaton $\ol{\Lan}_E A$ accepts the positive closure $\ol{L(A)}$ of $L(A)$, and \autoref{rem:pos-closure-presheaf-lang} shows that  $\wordlang{L(A)}=\wordlang{\ol{L(A)}}$. Hence $A$ and $\ol{\Lan}_E A$ are word-language equivalent. 
\end{proof}

\end{document}